\documentclass[a4paper,
11pt,openright]{article} 
\usepackage[english]{babel} 
\usepackage[latin1]{inputenc} 
\usepackage{amsmath,amssymb,amsthm,amsfonts,amscd,authblk,dsfont,color,cancel,diagbox,enumerate,epsfig,eucal,fancyhdr,latexsym,lmodern,mathrsfs,mathtools,multicol,multirow,musicography,phaistos,skull,simpler-wick,subcaption,tikz-cd,xfrac}
\usepackage[normalem]{ulem} 
\usepackage[multiuser]{fixme}
\FXRegisterAuthor{nd}{annd}{Nico} 
\FXRegisterAuthor{cvdv}{ancvdv}{Chris}
\usepackage[
final=true,                    
plainpages=false,              
pdfpagelabels=true,            
pdfencoding=auto,              
unicode=true,                  
hypertexnames=true,            
naturalnames=true              
]{hyperref}
\usepackage[all,cmtip]{xy}
\usetikzlibrary{matrix,calc}
\usepackage[autostyle, english = american]{csquotes}
\MakeOuterQuote{"} 

\definecolor{NiColor}{RGB}{77,77,255}
\definecolor{NiColoRed}{RGB}{255,77,77}
\definecolor{NiCitation}{RGB}{0,181,26}

\oddsidemargin 0cm      
\evensidemargin 0cm     
\textheight 20cm        
\textwidth 16cm         

\newtheoremstyle{TheoremStyle}
{\topsep}
{\topsep}
{\itshape}
{}
{\sc}
{:}
{.5em}
{}
\makeatletter
\def\@endtheorem{\hfill$\lozenge$} 
\makeatother

\theoremstyle{TheoremStyle}
\newtheorem{theorem}{Theorem}

\newtheorem{proposition}[theorem]{Proposition}
\newtheorem{lemma}[theorem]{Lemma}
\newtheorem{definition}[theorem]{Definition}
\newtheorem{remark}[theorem]{Remark}

\title{Strict deformation quantization and local spin interactions}
\author[a]{\href{mailto:nicolo.drago@unitn.it}{N. Drago}}
\author[b]{\href{mailto:christiaan.vandeven@mathematik.uni-wuerzburg.de}{C. J. F. van de Ven}}
\affil[a]{Dipartimento di Matematica, Universit\`a di Trento and INFN-TIFPA and INdAM, Via Sommarive 14, I-38123 Povo, Italy}
\affil[b]{Julius Maximilian University of W\"urzburg, Department of Mathematics Chair of Mathematics X (Mathematical Physics), Emil-Fischer-Stra\ss e 31, 97074 W\"urzburg, Germany}

\begin{document}
\maketitle

\begin{abstract}
	We define a strict deformation quantization which is compatible with any Hamiltonian with local spin interaction (\textit{e.g.} the Heisenberg Hamiltonian) for a spin chain.
	This is a generalization of previous results known for mean-field theories.
	The main idea is to study the asymptotic properties of a suitably defined algebra of sequences invariant under the group generated by a cyclic permutation.
	Our point of view is similar to the one adopted by Landsman, Moretti and van de Ven \cite{Landsman_Moretti_vandeVen_2020}, who considered a strict deformation quantization for the case of mean-field theories.
	However, the methods for a local spin interaction are considerably more involved, due to the presence of a strictly smaller symmetry group.
\end{abstract}

\tableofcontents

\section{Introduction}
\label{Sec: Introduction}

In this paper we provide a rigorous $C^*$-algebraic framework for the study of the semi-classical properties of 
any Hamiltonian with local spin interaction for a spin chain.
This covers, for example, the Heisenberg Hamiltonian.
This result is achieved by means of a suitable strict deformation quantization, whose construction is the main result of this paper ---\textit{cf.} Theorem \ref{Thm: deformation quantization of the algebra of gamma sequences}.

Strict deformation quantization originates with Berezin \cite{Berezin_1975} and Bayen et al. \cite{Bayen_Flato_Fronsdal_Lichnerowicz_1978_I,Bayen_Flato_Fronsdal_Lichnerowicz_1978_II} and it is based on the idea of "deforming" a given commutative Poisson algebra representing a classical system into a given non-commutative algebra modelling the associated quantized system.
In Rieffel's approach \cite{Rieffel_1994} the deformed algebras are $C^*$-algebras.

Notably, the "classical-to-quantum" interpretation of a strict deformation quantization is not the unique point of view which can be taken.
In Landsman's approach \cite{Landsman_1998,Landsman_2017} the starting point of a strict deformation quantization is often taken to be a continuous field of $C^*$-algebras.
The latter models an increasingly larger sequence of quantum physical systems, whose limit defines a macroscopic classical theory.
The advantage of this point of view is that it leads to a rigorous notion of the \textit{classical limit} of quantum theories \cite{Landsman_2017}.
This in turn yields a mathematically sound description of several physically interesting \textit{emergent} phenomena, \textit{e.g.} symmetry breaking \cite{Moretti_vandeVen_2021,vandeVen_2020,vandeVen_2021}.
This paper is considering this "micro-to-macro" point of view on strict deformation quantization.

From a technical point of view a \textbf{strict deformation quantization} is defined by the following data:
\begin{enumerate}
	\item
	A commutative Poisson $C^*$-algebra $\mathcal{A}_\infty$, namely a commutative $C^*$-algebra $\mathcal{A}_\infty$ equipped with a Poisson structure $\{\;,\;\}\colon\widetilde{\mathcal{A}}_\infty\times\widetilde{\mathcal{A}}_\infty\to\widetilde{\mathcal{A}}_\infty$ defined on a dense $*$-subalgebra $\widetilde{\mathcal{A}}_\infty\subseteq\mathcal{A}_\infty$ ---\textit{cf.} Section \ref{Subsec: the Poisson structure of the algebra of gamma-sequences}.
	
	\item
	A continuous bundle of $C^*$-algebras \cite{Dixmier_1977} $\prod_{N\in\overline{\mathbb{N}}}\mathcal{A}_N$, where $\overline{\mathbb{N}}=\mathbb{N}\cup\{\infty\}$;
	(Thorough the whole paper we will stick to the case of continuous bundle of $C^*$-algebras over $\overline{\mathbb{N}}$, see \cite{Landsman_2017} for the generic case.)
	\item
	A family of linear maps, called \textbf{quantization maps}, $Q_N\colon\widetilde{\mathcal{A}}_\infty\to\mathcal{A}_N$, $N\in\overline{\mathbb{N}}$, such that
	\begin{enumerate}
		\item\label{Item: quantization maps are Hermitian and define a continuous section}
		$Q_\infty=\operatorname{Id}_{\widetilde{\mathcal{A}}_\infty}$ and $Q_N(a)^*=Q_N(a^*)$ for all $a\in\widetilde{\mathcal{A}}_\infty$.
		Moreover, the assignment
		\begin{align*}
			\overline{\mathbb{N}}\ni N\mapsto Q_N(a)\in\mathcal{A}_N\,,
		\end{align*}
		defines a continuous section of the bundle $\prod_{N\in\overline{\mathbb{N}}}\mathcal{A}_N$.
		\item\label{Item: quantization maps fulfils the DGR condition}
		For all $a,a'\in\widetilde{\mathcal{A}}_\infty$ it holds
		\begin{align}\label{Eq: Dirac-Groenewold-Rieffel condition}
			\lim_{N\to\infty}\|Q_N(\{a,a'\})-iN[Q_N(a),Q_N(a')]\|_{\mathcal{A}_N}=0\,.
		\end{align}
		
		\item\label{Item: quantization maps are strict}
		For all $N\in\mathbb{N}$, $Q_N(\widetilde{\mathcal{A}}_\infty)$ is a dense $*$-subalgebra of $\mathcal{A}_N$.
	\end{enumerate}
\end{enumerate}
The algebra $\mathcal{A}_\infty$ represents the classical (macroscopic) observables of the physical system.
Likewise, the fibers $\mathcal{A}_N$, $N\in\mathbb{N}$, of the bundle $\prod_{N\in\overline{\mathbb{N}}}\mathcal{A}_N$ recollect the quantum observables of the (increasingly larger) quantum system.
	
A relevant example is the strict deformation quantization described in \cite{Landsman_Moretti_vandeVen_2020,Raggio_Werner_1989} for the $C^*$-algebra $[B]^\infty_\pi$ of (equivalence classes of) symmetric sequences --- \textit{cf.} Section \ref{Sec: The algebra of gamma-sequences} for further details.
In this scenario the role of the commutative $C^*$-algebra $\mathcal{A}_\infty$ is played by $[B]^\infty_\pi=C(S(B))$ ---here $B=M_\kappa(\mathbb{C})$, $\kappa\in\mathbb{N}$, while $S(B)$ denotes the states space over $B$.
The continuous bundle of $C^*$-algebras $\prod_{N\in\mathbb{N}}B^N_\pi$ is such that, for $N\in\mathbb{N}$, $B^N_\pi\subseteq B^N$ is the $N$-th symmetric tensor product of $B$.

From a physical point of view the ensuing quantization maps $Q_N$ are of particular interest as they relate to mean-field theories like the Curie-Weiss model \cite[\S 2]{Friedli_Velenik_2017}, for which the interaction between $N$ spin sites is described by
\begin{align}\label{Eq: Curie-Weiss Hamiltonian}
	H_{\textsc{cw},N}:=-\frac{J}{2N}\sum_{j_1+j_2=N-2}^N
	I^{j_1}\otimes\sigma_3\otimes I^{j_2}\otimes\sigma_3
	-h\sum_{j=1}^{N-1} I^{N-1-j}\otimes\sigma_1\otimes I^j\,,
\end{align}
where $\sigma_3,\sigma_1\in M_2(\mathbb{C})$ denote the Pauli's matrices while $I\in M_2(\mathbb{C})$ is the identity matrix and $I^j:=I^{\otimes j}$.
Here $J\in\mathbb{R}$ represents the strength of the spin interaction whereas $h\in\mathbb{R}$ models an external magnetic field acting on the system.
As observed in \cite{Landsman_Moretti_vandeVen_2020} one may recognize that 
\begin{align*}
	H_{\textsc{cw},N}/N=Q_N(h_{\textsc{cw}})+R_N\,,
\end{align*}
where $h_{\textsc{cw}}\in C(S(B))$ while $R_N\in B^N_\pi$ is such that $\|R_N\|_N=O(1/N)$.

The physical interpretation is that $C(S(B))$ is the algebra containing macroscopic observables, \textit{i.e.} observables of an infinite quantum system describing classical thermodynamics as a limit of quantum statistical mechanics.
This has furthermore led to a significant contribution in the study of the classical limit of ground states \cite{Landsman_Moretti_vandeVen_2020,Moretti_vandeVen_2020,Moretti_vandeVen_2021,vandeVen_2020}.
More precisely, in such works a mathematically rigorous description of the limit of ground states $\omega_N$ of $H_{\textsc{cw},N}$ in the regime of large particles $N\to\infty$ is given.
In particular, a classical counterpart $\omega_\infty$ (\textit{i.e.} a probability measure) of the quantum ground state $\omega_N\in S(B^N_\pi)$ is constructed  with the property that  $\omega_\infty(a):=\lim\limits_{N\to\infty}\omega_N(Q_N(a))$ for all $a\in C(S(B))$.
Additionally, this algebraic approach has revealed the existence of several physical \textit{emergent phenomena}, see \cite{vandeVen_2021} for an overview.
These results are consistent with the point of view of \cite{Landsman_2017} ---which is also the one considered in this paper--- for which a quantum theory is pre-existing and the classical limit is computed in a second step, not vice versa.

As characteristic for mean-field models, the Curie-Weiss Hamiltonian describes the energy of a system of $N$ spin sites under the assumption that the interaction is {\em non-local}, namely that every spin site interacts with all other spin site.
This leads to interesting results, but it is ultimately an approximation as one would rather expect each spin site to interact with finitely many neighbouring spin sites.
An exemplary model based on such a local interaction is the celebrated quantum Heisenberg Hamiltonian (for a spin chain) \cite[\S 6.2]{Bratteli_Robinson_1997}
\begin{align}\label{Eq: Heisenberg Hamiltonian}
	H_{\textsc{He},N}:=
	-\sum_{j=0}^{N-1}I^{N-2-j}\otimes\sum_{p,q=1}^3 J^{pq}\sigma_p\otimes\sigma_q\otimes I^{j}
	-\sum_{j=0}^{N-1} I^{N-1-j}\otimes \sum_{p=1}^3 h^p\sigma_p
	\otimes I^j\,,
\end{align}
where $J^{pq}$ is the symmetric matrix describing the spin interaction while $h^p$ are the components of an external magnetic field ---here for $j=N-1$ the contribution in the first sum reads $\sum\limits_{p,q=1}^3J^{pq}\sigma_q\otimes I^{N-2}\otimes\sigma_p$.
For this model the interaction is restricted to two neighbouring sites.

Similarly to what happens with mean-field models one may wonder whether there exists a strict deformation quantization of a suitable $C^*$-algebra such that \begin{align*}
	H_{\textsc{He},N}/N=Q_N(h_{\textsc{He}})+O(1/N)\,.
\end{align*}
The purpose of this paper is to prove that this is in fact the case, \textit{cf.} Theorem \ref{Thm: deformation quantization of the algebra of gamma sequences}.
In \cite{Murro_vandeVen_2022} a different (though similar in spirit) point of view is taken, and a strict deformation is considered such that $H_{\textsc{He},N}=Q_\kappa(h_{\textsc{He},N})+O(1/\kappa)$ where the semi-classical parameter $\kappa$ corresponds to the increasing dimension of the single site algebra $B=M_\kappa(\mathbb{C})$ for a \textit{fixed} number $N$ of lattice sites.
In contrast, this paper deals with an arbitrary but fixed dimension $\kappa\in\mathbb{N}$ considering instead the increasing number $N$ of spin sites as the semi-classical parameter.

Our result is particularly relevant because it provides an excellent basis for studying the classical limit of local quantum spin systems.
Similarly to the case of mean-field theories \cite{Landsman_Moretti_vandeVen_2020,Moretti_vandeVen_2021,vandeVen_2020,vandeVen_2021}, one may now consider a rigorous $C^*$-algebraic formalization of the limit of ground states or Gibbs states \cite{Drago_vandeVen_2023,Friedli_Velenik_2017}.
The latter can be used for the study of spontaneous symmetry breaking and phase transitions in realistic models such as the Heisenberg model.

From a technical point of view, the methods of this paper profit of those of \cite{Landsman_Moretti_vandeVen_2020,Raggio_Werner_1989} for mean-field models.
Nevertheless the results obtained therein do not apply straight away to our case.
As a matter of fact the strict deformation quantization for mean-field models (like the Curie-Weiss Hamiltonian) profits of:
\begin{enumerate}
	\item
	A large symmetry group, that is, mean-field models are symmetric under the permutation group $\mathfrak{S}_N$ of all $N$ spin sites.
	This leads to a high symmetry property which can be exploited in several steps of the construction, \textit{cf.} \cite{Landsman_Moretti_vandeVen_2020,Raggio_Werner_1989}.
	\item
	A fairly explicit description of the classical algebra $[B]^\infty_\pi=C(S(B))$.
	One may define $[B]^\infty_\pi$ in terms of equivalence classes of "symmetric sequences" ---\textit{cf.} Remark \ref{Rmk: gamma-sequences are bounded; comparison with symmetric-sequences}--- but the description in terms of $C(S(B))$ simplifies the discussion, \textit{e.g.} it allows to identify a Poisson structure in a rather direct way.
\end{enumerate}
Contrary to this case, local quantum spin Hamiltonians (\textit{e.g.} the Heisenberg model defined in \eqref{Eq: Heisenberg Hamiltonian}) are invariant under the strictly smaller subgroup generated by a fixed cyclic permutation of $N$ objects.
This spoils the possibility of applying the arguments of \cite{Landsman_Moretti_vandeVen_2020,Raggio_Werner_1989}.
The latter have to be reconsidered to take into account the smaller symmetry group.
Moreover, the classical algebra $[B]^\infty_\gamma$ for such models does not have a "simple" explicit description.
As a matter of fact, $[B]^\infty_\gamma$ is defined as the $C^*$-algebra generated by (equivalence classes of) "$\gamma$-sequences" ---\textit{cf.} Definition \ref{Def: C*-algebra of gamma-sequences}.
Nevertheless it is still possible to prove all properties of $[B]^\infty_\gamma$ relevant for the discussion of its strict deformation quantization.

The paper is structured as follows.
In Section \ref{Sec: The algebra of gamma-sequences} we introduce the notion of "$\gamma$-sequences" ---\textit{cf.} Definition \ref{Def: gamma-sequence}--- and discuss their properties.
The main result in this section is the proof that the $C^*$-algebra $[B]^\infty_\gamma$ generated by (equivalence classes of) $\gamma$-sequences is a commutative $C^*$-algebra.
The latter will play the role of the classical algebra $\mathcal{A}_\infty$ for which we will present a strict deformation quantization.

In Section \ref{Sec: strict deformation quantization of gamma-sequences} we state and prove the main theorem of this paper, which provides a strict deformation quantization of the commutative $C^*$-algebra $[B]^\infty_\gamma$.
To this avail, Section \ref{Sec: The continuous bundle of Cstar-algebras associated with gamma-sequences} is devoted to prove Proposition \ref{Prop: continuous bundle of Cstar algebra for gamma-sequences} which provides the continuous bundle of $C^*$-algebra $[B]_\gamma$ needed in the formulation of Theorem \ref{Thm: deformation quantization of the algebra of gamma sequences}.
The main technical hurdle of this section is to prove that, given a $\gamma$-sequence $(a_N)_N$, the sequence of the norms $(\|a_N\|_N)_N$ is convergent.
While this is straightforward for symmetric sequences (\textit{i.e.} those used when dealing with mean-field models) for $\gamma$-sequences this is non-trivial and has to be discussed carefully.
Sections \ref{Subsec: canonical representative of dense elements in the algebra of gamma-sequence}-\ref{Subsec: the Poisson structure of the algebra of gamma-sequences} discuss further relevant properties of (equivalence classes of) $\gamma$-sequences as well as the Poisson structure on the $C^*$-algebra $[B]^\infty_\gamma$.
Eventually Theorem \ref{Thm: deformation quantization of the algebra of gamma sequences} is proved by recollecting all results from the previous sections.

For the sake of clarity the following theorem recollects in a concise fashion the content of the main Theorem \ref{Thm: deformation quantization of the algebra of gamma sequences} together with the other relevant results of the paper.

\begin{theorem}[main results]
    The algebra $[B]^\infty_\gamma:=\overline{[\dot{B}]^\infty_\gamma}$ of equivalence classes of $\gamma$-sequences ---\textit{cf.} Definitions \ref{Def: gamma-sequence}-\ref{Def: C*-algebra of gamma-sequences}--- is a commutative $C^*$-algebra which is also endowed with a Poisson structure $\{\;,\;\}_\gamma$ ---\textit{cf.} Propositions \ref{Prop: the C* algebra of gamma-sequences is commutative}-\ref{Prop: Poisson structure on algebra of gamma-sequences}.
    
    Moreover, the data $[B]^\infty_\gamma$ and $B^N_\gamma:=\overline{\gamma}_N(B^N)$ ---\textit{cf.} Equation \eqref{Eq: gamma invariant subalgebra}--- define a continuous bundle $[B]_\gamma$ of $C^*$-algebras ---\textit{cf.} Proposition \ref{Prop: continuous bundle of Cstar algebra for gamma-sequences}.
	
	Finally, there exists a family of quantization maps $Q_N\colon [\dot{B}]^\infty_\gamma\to [B]^N_\gamma$, $N\in\overline{\mathbb{N}}$ such that the data $[B]^\infty_\gamma,[B]_\gamma,\{Q_N\}_{N\in\overline{\mathbb{N}}}$ define a strict deformation quantization ---\textit{cf.} Theorem \ref{Thm: deformation quantization of the algebra of gamma sequences}.
\end{theorem}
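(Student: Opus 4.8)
The plan is to prove the statement by assembling three essentially independent blocks, each of which reduces to a careful analysis of cyclic averages of local observables; throughout I would work with the concrete picture in which a $\gamma$-sequence is (equivalent to) a bounded sequence $(a_N)_N$, $a_N\in B^N$, obtained by averaging a fixed local element $a^{(m)}\in B^{\otimes m}$ over the $N$ cyclic translates $\tau^0,\dots,\tau^{N-1}$ of a chain of $N$ sites, and then passing to equivalence classes modulo the ideal $\mathcal{N}$ of sequences with $\|a_N\|_N\to 0$ ---\textit{cf.} Definitions \ref{Def: gamma-sequence}-\ref{Def: C*-algebra of gamma-sequences}. For \emph{commutativity}, given two $\gamma$-sequences built from local elements of ranges $m$ and $m'$, the commutator $[a_N,a'_N]$ is a $\tfrac{1}{N^2}$-weighted sum of $N^2$ commutators $[\tau^i(a^{(m)}\otimes I),\tau^{i'}(a'^{(m')}\otimes I)]$, of which only $O(N)$ have overlapping supports on the cyclic lattice while the remaining $O(N^2)$ vanish identically; since each term is bounded by $2\|a^{(m)}\|\,\|a'^{(m')}\|$ one gets $\|[a_N,a'_N]\|_N=O(1/N)$, so the generated $C^*$-algebra $[B]^\infty_\gamma$ is commutative (Proposition \ref{Prop: the C* algebra of gamma-sequences is commutative}). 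The weaker cyclic (rather than full permutation) symmetry is exactly what forces this bookkeeping to replace the cleaner arguments of \cite{Landsman_Moretti_vandeVen_2020,Raggio_Werner_1989}, but the count still closes. For the \emph{Poisson bracket} I would \emph{define} $\{a,a'\}_\gamma$ as the limit, in the sense of equivalence classes of $\gamma$-sequences, of $iN[a_N,a'_N]$: by the previous estimate $\|iN[a_N,a'_N]\|_N=O(1)$, and one checks that the overlapping terms organize, at leading order, into a genuine $\gamma$-sequence; bilinearity, antisymmetry, the Leibniz rule and the Jacobi identity then follow from the corresponding properties of commutators together with the vanishing of the sub-leading overlap contributions (Proposition \ref{Prop: Poisson structure on algebra of gamma-sequences}).

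For the \emph{continuous bundle}, the crux ---and the step I expect to be the main obstacle--- is to show that for every $\gamma$-sequence $(a_N)_N$ the numerical sequence $(\|a_N\|_N)_N$ converges, so that $N\mapsto\|a_N\|_N$ extends continuously to $\overline{\mathbb{N}}$. For symmetric sequences this is a one-line Fekete-type sub/superadditivity argument, but the cyclic average interlaces only imperfectly when one passes from $N$ to $N+1$ sites, so one has to control the discrepancy, presumably by comparing $\|a_N\|_N$ with the norm of a limiting object on the quasi-local algebra (equivalently a function on the relevant state space) and showing the error is $o(1)$ ---\textit{cf.} the discussion preceding Proposition \ref{Prop: continuous bundle of Cstar algebra for gamma-sequences}. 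Once continuity of the fiberwise norm is in hand, Proposition \ref{Prop: continuous bundle of Cstar algebra for gamma-sequences} follows from the standard Landsman criterion: the set of $\gamma$-sequences is a $*$-algebra of sections of $\prod_{N\in\overline{\mathbb{N}}}B^N_\gamma$, it is a $C(\overline{\mathbb{N}})$-module, it is dense in every fiber $B^N_\gamma=\overline{\gamma}_N(B^N)$ ---\textit{cf.} \eqref{Eq: gamma invariant subalgebra}--- and it has continuous norm, which is precisely the data defining a continuous bundle $[B]_\gamma$.

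For the \emph{quantization maps} I would define $Q_N$ on the dense $*$-subalgebra $[\dot{B}]^\infty_\gamma$ by sending the canonical representative of a class ---\textit{cf.} Section \ref{Subsec: canonical representative of dense elements in the algebra of gamma-sequence}--- to (the cyclic symmetrization of) its $N$-th term, with $Q_\infty=\operatorname{Id}$. Then Hermiticity is immediate and item \ref{Item: quantization maps are Hermitian and define a continuous section} is exactly the continuous-section statement obtained above; strictness (item \ref{Item: quantization maps are strict}) holds because the image of $Q_N$ already contains the cyclic averages $\overline{\gamma}_N(a^{(m)}\otimes I)$, which generate $B^N_\gamma$; and the Dirac--Groenewold--Rieffel identity \eqref{Eq: Dirac-Groenewold-Rieffel condition} (item \ref{Item: quantization maps fulfils the DGR condition}) reduces to the very definition of $\{\;,\;\}_\gamma$, since $Q_N(\{a,a'\}_\gamma)$ and $iN[Q_N(a),Q_N(a')]$ agree up to a sequence in $\mathcal{N}$. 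Collecting the three blocks yields the commutative Poisson $C^*$-algebra, the continuous bundle, and the strict deformation quantization, \textit{i.e.} the full content of Theorem \ref{Thm: deformation quantization of the algebra of gamma sequences}.
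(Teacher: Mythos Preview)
Your overall architecture matches the paper's, and the commutativity count and the definition of the quantization maps are essentially as in the text. There are, however, two places where your sketch does not close.

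\medskip

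\textbf{Convergence of $(\|a_N\|_N)_N$.} You correctly identify this as the crux, but your proposed method --- comparing $\|a_N\|_N$ with ``the norm of a limiting object on the quasi-local algebra'' --- is circular: there is no independent description of $[B]^\infty_\gamma$ (unlike $[B]^\infty_\pi\simeq C(S(B))$), so the limiting object is only defined \emph{after} you know the norms converge. The paper's argument is quite different and is the real content of Proposition~\ref{Prop: elements in the algebra of gamma-sequences norm have limit}: given $N\geq K$, one builds from any state $\omega_K\in S(B^K)$ a ``periodized'' state $\omega_{K,N}=\tau^r\otimes(\eta_{p_1}\otimes\cdots\otimes\eta_{p_K})^q\in S(B^N)$ and checks by direct computation that $\omega_{K,N}(a_N)=\omega_K(a_K)+O(1/K)+O(K/N)$. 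Taking the sup over $\omega_K$ yields $\|a_N\|_N\geq\|a_K\|_K+C_1/K+C_2K/N$, and a short real-analysis lemma (Lemma~\ref{Lem: inequality for convergence of a sequence}) then forces $\liminf=\limsup$. Your Fekete-type intuition is right in spirit, but the actual inequality one obtains is two-parameter and neither sub- nor super-additive; the periodized-state construction is the missing idea.

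\medskip

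\textbf{Well-definedness and the algebraic identities for $\{\;,\;\}_\gamma$.} Defining the bracket as ``the limit of $iN[a_N,a'_N]$'' is not well-posed on equivalence classes: two representatives of the same class differ by a sequence that is only $o(1)$, and after multiplication by $N$ this can change the answer. The paper resolves this by first constructing a \emph{unique canonical representative} of each class (Definitions~\ref{Def: canonical representative of gamma-sequence}--\ref{Def: canonical representative of product of gamma-sequences}, with uniqueness proved in Propositions~\ref{Prop: the canonical representative of a gamma-sequence is unique}--\ref{Prop: the canonical representative of a product of gamma-sequences is unique} via carefully chosen product states involving the trace) and then setting $\{[a_N],[a'_N]\}_\gamma:=[iN[a_N^{\textsc{can}},a_N'^{\textsc{can}}]]_N$. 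You invoke canonical representatives later for $Q_N$ but not here, where they are equally essential. Relatedly, your claim that Leibniz and Jacobi ``follow from the corresponding properties of commutators'' is too quick: since $(ab)^{\textsc{can}}\neq a^{\textsc{can}}b^{\textsc{can}}$ in general, one must show that the discrepancy $R_N$ between the two satisfies $\|N[R_N,\,\cdot\,]\|_N=O(1/N)$, which is exactly the content of Remark~\ref{Rmk: estimate of commutator with the remainder; Cstar algebra of symmetric sequences; character space for algebra of gamma sequences}\ref{Item: estimate of commutator with the remainder} and the third step in the proof of Proposition~\ref{Prop: Poisson structure on algebra of gamma-sequences}.
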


\paragraph{Acknowledgements.}
We are indebted with V. Moretti for countless helpful discussions on this project and to K. Landsman for his precious feedback.
N.D. is grateful to M. Dippell for a brief, yet effective, discussion on the proof of Lemma \ref{Lem: inequality for convergence of a sequence}.
C. J. F. van de Ven is supported by a postdoctoral fellowship granted by the Alexander von Humboldt Foundation (Germany).

\paragraph{Data availability statement.}
Data sharing is not applicable to this article as no new data were created or analysed in this study.

\paragraph{Conflict of interest statement.}
The authors certify that they have no affiliations with or involvement in any
organization or entity with any financial interest or non-financial interest in
the subject matter discussed in this manuscript.

\section{The algebra of $\gamma$-sequences}
\label{Sec: The algebra of gamma-sequences}

\subsection{Definition of $\gamma$-sequences}
\label{Subsec: definition of gamma-sequences}

In this section we will introduce $\gamma$-sequences and discuss their properties.

To fix some notations, let $\kappa\in\mathbb{N}$ and set $B:=M_\kappa(\mathbb{C})$.
For the sake of simplicity we shall denote by $B^N:=B^{\otimes N}$, where $N\in\mathbb{N}$, with the convention that $B^0=\mathbb{C}$.
The state space over $B^N$ will be denoted by $S(B^N)$: Given $\eta\in S(B)$ we set $\eta^N:=\eta^{\otimes N}\in S(B^N)$.
Whenever needed we will denote $\overline{\mathbb{N}}=\mathbb{N}\cup\{\infty\}$.

Following \cite{Landsman_Moretti_vandeVen_2020} we denote by $I,b_1,\ldots,b_{\kappa^2-1}$, $I\in B$ being the identity matrix, a basis of $B$ (as a $\mathbb{R}$-vector space) abiding by the requirements
\begin{align}\label{Eq: Mk-basis properties}
	\operatorname{tr}(b_j)=0\,,
	\qquad
	b_j^*=b_j\,,
	\qquad
	[b_j,b_\ell]=ic_{j\ell}^{\phantom{j\ell}m}b_m\,,
	\qquad\forall j,\ell=1,\ldots,\kappa^2-1\,.
\end{align}
where $c_{j\ell}^{\phantom{j\ell}m}$ denotes the structure constants of $\mathfrak{su}(\kappa)$.
In the particular case $\kappa=2$ we may choose $b_j=\sigma_j/2$ while $c_{j\ell}^{\phantom{j\ell}m}=\varepsilon_{j\ell s}\delta^{sm}$, $\varepsilon_{ijk}$ being the Levi-Civita symbol.
We will denote by $\widetilde{B}$ the vector space generated by $\{b_j\}_{j=1}^{\kappa^2-1}$. The latter corresponds to the $\ker\tau$, being $\tau\colon B\to\mathbb{C}$ the normalized trace defined by $\tau(a):=\operatorname{tr}(a)/\kappa$.

We then consider the linear operator (left-shift operator) $\gamma_N\colon B^N\to B^N$ uniquely defined by continuous and linear extension of the following map defined on elementary tensors\footnote{In the forthcoming discussion we will use the notation $a_N$ to denote an element $a_N\in B^N$. When we will need to use a subindex without necessarily stating the degree of the element we will use the notation $a_{(k)}$ so that $a_{(k)}\in B^{M(k)}$, $M(k)\in\mathbb{N}$.}
\begin{align}\label{Eq: gamma-operator}
	\gamma_N(a_{(1)}\otimes\ldots\otimes a_{(N)}):=a_{(2)}\otimes \ldots \otimes a_{(N)}\otimes a_{(1)}
	\qquad a_{(1)},\ldots, a_{(N)}\in B\,.
\end{align}
The operator $\gamma_N$ is an algebra endomorphism, moreover, $\gamma_N^N=\operatorname{Id}_B$, $\operatorname{Id}_B\colon B\to B$ being the identity operator.
We denote by $\overline{\gamma}_N\colon B^N\to B^N$ the \textbf{averaged $\gamma_N$ operator}, defined by
\begin{align}\label{Eq: averaged gamma-operator}
	\overline{\gamma}_N:=\frac{1}{N}\sum_{j=0}^{N-1}\gamma_N^j\,.
\end{align}
Clearly $\gamma_N\circ\overline{\gamma}_N=\overline{\gamma}_N=\overline{\gamma}_N\circ \gamma_N$: We denote by
\begin{align}\label{Eq: gamma invariant subalgebra}
	B^N_\gamma:=\overline{\gamma}_N(B^N)\,,
\end{align}
the $C^*$-subalgebra of $B^N$ made by $\gamma_N$-invariant elements.

Through this paper we will mostly consider sequences $(a_N)_N=(a_N)_{N\in\mathbb{N}}$ with $a_N\in B^N$ for all $N\in\mathbb{N}$.
A sequence $(a_N)_{N\geq K}$ will be implicitly extended to $(a_N')_{N\in\mathbb{N}}$ where $a_N'=a_N$ for $N\geq K$ and $a_N'=0$ for $N<K$.

\begin{definition}\label{Def: gamma-sequence}
	A sequence $(a_N)_N$ is called \textbf{$\gamma$-sequence} if there exists $M\in\mathbb{N}$ and $a_M\in B^M$ such that
	\begin{align}\label{Eq: gamma-sequence property}
		a_N=\overline{\gamma}^M_Na_M
		:=\begin{cases*}
			\overline{\gamma}_N(I^{N-M}\otimes a_M)
			\qquad N\geq M
			\\
			0\qquad N<M
		\end{cases*}\,.
	\end{align}
	where $I\in B$ denotes the identity of $B$ and $I^N=I^{\otimes N}$.
\end{definition}

\begin{remark}\label{Rmk: gamma-sequences are bounded; comparison with symmetric-sequences}
	\noindent
	\begin{enumerate}[(i)]
		\item\label{Item: gamma-sequences are bounded}
		For fixed $N,M\in\mathbb{N}$, $N\geq M$, $\overline{\gamma}^M_N\colon B^M\to B^N_\gamma$ is a linear operator with operator norm smaller than 1.
		This implies that, $(\overline{\gamma}^M_Na_M)_N$ is bounded with
		\begin{align*}
			\|(\overline{\gamma}^M_Na_M)_N\|_\infty
			:=\sup_{N\in\mathbb{N}}\|\overline{\gamma}^M_Na_M\|_N
			\leq\|a_M\|_M\,,
		\end{align*}
		where $\|\;\|_M$ denotes the norm on $B^M$.
		
		\item\label{Item: comparison with symmetric-sequences} 
		It is worth comparing our construction with the one presented in the literature \cite{Landsman_2017,Landsman_Moretti_vandeVen_2020,Raggio_Werner_1989}, based on symmetric sequences.
		We stress that the latter are exploited to deal with the Curie-Weiss Hamiltonian ---or more generally with mean-field theories \cite[\S 10]{Landsman_2017}--- which prescribe a non-local interaction between spin sites.
		On the other hand we are interested in models compatible with Hamiltonian describing a local interaction between spin sites ---\textit{e.g.} the Heisenberg Hamiltonian, \textit{cf.} Remark \ref{Rmk: reconstruction of starting point for gamma-sequence; Heisenberg and CW Hamiltonians; asymptotic consistency for gamma-sequences}.
		To describe the non-local interaction algebraically 
		 one considers the symmetrization operator $S_N\colon B^N\to B^N$ defined by continuous and linear extension of
		\begin{align*}
			S_N(a_{(1)}\otimes\ldots\otimes a_{(N)}):=\frac{1}{N!}\sum_{\varsigma\in\mathfrak{S}_N}
			a_{(\varsigma(1))}\otimes\ldots\otimes a_{(\varsigma(N))}
			\qquad
			a_{(1)},\ldots,a_{(N)}\in B\,,
		\end{align*}
		where $\mathfrak{S}_N$ is the set of permutation of $N$ objects \cite{Landsman_2017,Raggio_Werner_1989}.
		Considering the $C^*$-subalgebra $B_\pi^N:=S_NB^N\subset B^N$ one then defines a \textbf{symmetric-sequence} (shortly, $\pi$-sequence) to be a sequence $(a_N)_N$ such that there exists $M\in\mathbb{N}$ and $a_M\in B^M_\pi$ fulfilling
		\begin{align*}
			(a_N)_N=(\pi^M_Na_M)_N:=
			(S_N(I^{N-M}\otimes a_M))_{N\geq M}\,.
		\end{align*}
		One immediately sees the relation with Definition \ref{Def: gamma-sequence}: Actually a $\gamma$-sequence is defined in a way similar to $\pi$-sequences but averaging over a strictly smaller subgroup of $\mathfrak{S}_N$.
		In fact $\gamma$-sequences and $\pi$-sequences share many similar properties, although $\pi$-sequences are generally speaking better behaved.
	\end{enumerate}
\end{remark}

\subsection{Asymptotic properties of $\gamma$-sequences}
\label{Subsec: asymptotic properties of gamma-sequences}

In what follows we will be mainly interested in the asymptotic behaviour as $N\to\infty$ of the sequences under investigations.
For this reason, following \cite{Raggio_Werner_1989}, we introduce the \textbf{$\sim$-equivalence relation}
\begin{align}\label{Eq: equivalence relation among sequences}
	(a_N)_N\sim(b_N)_N\Longleftrightarrow
	\lim_{N\to\infty}\|a_N-b_N\|_N=0\,.
\end{align}
For a given sequence $(a_N)_N$ we will denoted by $[a_N]_N:=[(a_N)_N]$ the corresponding equivalence class with respect to \eqref{Eq: equivalence relation among sequences}.
The $\sim$-equivalence relation \eqref{Eq: equivalence relation among sequences} has a nice interplay with the \textbf{full $C^*$-product} $\prod_{N\in\mathbb{N}}B^N$ defined by
\begin{align}\label{Eq: full C*-product}
	\prod_{N\in\mathbb{N}}B^N:=\{(a_N)_N\,|\,(\|a_N\|_N)_N\in\ell^\infty(\mathbb{N})\}\,.
\end{align}
As it is well-known \cite{Blackadar_2006} $\prod_{N\in\mathbb{N}}B^N$ is a $C^*$-algebra with respect to sup norm $\|(a_N)_N\|_\infty:=\sup\limits_{N\in\mathbb{N}}\|a_N\|_N$.
Moreover, the \textbf{direct $C^*$-sum}
\begin{align}\label{Eq: direct C*-sum}
	\bigoplus_{N\in\mathbb{N}}B^N:=\{(a_N)_N\in\prod_{N\in\mathbb{N}}B^N\,|\,
	\lim_{N\to\infty}\|a_N\|_N=0\}\,,
\end{align}
is a closed two-sided ideal in $\prod_{N\in\mathbb{N}}B^N$ and thus we may consider the quotient
\begin{align}\label{Eq: quotient algebra}
	[B]_\sim:=\prod_{N\in\mathbb{N}}B^N/\bigoplus_{N\in\mathbb{N}}B^N\,,
\end{align}
which is nothing but the space of $\sim$-equivalence classes $[a_N]_N$ for bounded sequences $(a_N)_N$.
Importantly, $[B]_\sim$ is a $C^*$-algebra with norm
\begin{align}\label{Eq: norm of the quotient algebra}
	\|[a_N]_N\|_{[B]_\sim}=\limsup_{N\to\infty}\|a_N\|_N\,.
\end{align}

\begin{remark}\label{Rmk: reconstruction of starting point for gamma-sequence; Heisenberg and CW Hamiltonians; asymptotic consistency for gamma-sequences}
	\noindent
	\begin{enumerate}[(i)]
		\item\label{Item: gamma- and pi- sequences have equivalent classes}
		Since both $\gamma$- and $\pi$-sequences are bounded ---\textit{cf}. Remark \ref{Rmk: gamma-sequences are bounded; comparison with symmetric-sequences}-\ref{Item: gamma-sequences are bounded}--- they lead to well-defined elements $[\overline{\gamma}^M_N a_M]_N$, $[\pi^M_N a_M]_N\in [B]_\sim$.
		One may wonder whether $[\overline{\gamma}^M_Na_M]_N=[0]_N$ for a non-zero $a_M\in B^M$.
		This is in fact possible, but we postpone this discussion to Section \ref{Subsec: canonical representative of dense elements in the algebra of gamma-sequence} where we will prove that, for a given equivalence class $[\overline{\gamma}^M_Na_M]_N$ it is possible to extract a "canonical representative" ---\textit{cf.} Definition \ref{Def: canonical representative of gamma-sequence}--- with the property that $[\overline{\gamma}^M_Na_M]_N=[0]_N$ if and only if the canonical representative vanishes ---\textit{cf.} Proposition \ref{Prop: the canonical representative of a gamma-sequence is unique}.
		
		\item\label{Item: Heisenberg and CW Hamiltonians}
		With reference to Equation \eqref{Eq: Heisenberg Hamiltonian} we have (considering $\kappa=2$)
		\begin{align*}
			\frac{1}{N}H_{\textsc{He},N}
			=\overline{\gamma}^2_N\bigg(
			\sum_{p,q=1}^3 J^{pq}\sigma_p\otimes\sigma_q
			\bigg)
			+\overline{\gamma}^1_N\bigg(
			\sum_{p=1}^3 h^p\sigma_p
			\bigg)\,,
		\end{align*}
		showing the relation between $\gamma$-sequences and the Heisenberg Hamiltonian.
		Similarly, as discussed in \cite{Landsman_Moretti_vandeVen_2020}, Equation \eqref{Eq: Curie-Weiss Hamiltonian} leads to
		\begin{align*}
			\frac{1}{N}H_{\textsc{cw},N}
			=-\pi^2_N(J\sigma_3\otimes\sigma_3)
			+\pi^1_N(h\sigma_1)
			+O(1/N)\,,
		\end{align*}
		showing that $(H_{\textsc{cw},N}/N)_{N\geq 1}$ is equivalent to a $\pi$-sequence.
		
		At this stage it is worth observing that $\gamma$-sequences model an arbitrary Hamiltonian with local spin interaction.
		We say that $H_N\in B^N_\gamma$ is a (translation invariant) \textbf{Hamiltonian with local spin interaction} if and only if
		\begin{align}\label{Eq: Hamiltonian with local spin interaction}
		    H_N=\sum_{|i-j|\leq \ell}\sum_{p,q=1}^3 J^{pq}\sigma_p(i)\sigma_q(j)
		    +\sum_{i=1}^N\sum_{p=1}^3h^p\sigma_p(i)\,,
		\end{align}
		 were $\sigma_p(i)\sigma_q(j)$ is a short notation for $I^{i-1}\otimes\sigma_p\otimes I^{j-i-1}\otimes\sigma_q\otimes I^{N-i-j}$ and similarly $\sigma_p(i)=I^{i-1}\otimes\sigma_p\otimes I^{N-i}$.
		 The parameter $\ell\in\mathbb{N}$ determines the number of spin sites which interact with a fixed spin site $i$ ---\textit{e.g.} for the Heisenberg Hamiltonian $\ell=1$.
		 The strength of the interaction and of the external magnetic field is determined by $J^{pq}$, $h^p$.
		 Notice that the latter do not depend on the spin site: This entails that we are considering translation invariant local spin interactions.
		 
		 Any Hamiltonian $H_N$ as per Equation \eqref{Eq: Hamiltonian with local spin interaction} leads to a $\gamma$-sequences as per Definition \eqref{Def: gamma-sequence}.
		 Indeed, we have
		 \begin{align*}
		     H_N/N=\sum_{m=0}^{\ell-1}\overline{\gamma}_N^{2+m}\bigg(
		     \sum_{p,q=1}^3 J^{pq}\sigma_p\otimes I^m \otimes\sigma_q
		     \bigg)
		     +\overline{\gamma}_N^1\bigg(
		     \sum_{p=1}^3h^p\sigma_p\bigg)\,.
		 \end{align*}

		\item\label{Item: asymptotic consistency for gamma-sequences}
		The $\sim$-equivalence relation \eqref{Eq: equivalence relation among sequences} provides a first example showing the different behaviour of $\gamma$-sequences with respect to $\pi$-sequences.
		To this avail, let $a_M\in B^M_\pi$ and let us consider the $\pi$-sequence $(\pi^M_Na_M)_N$.
		By direct inspection one immediately sees that, for all $N'\geq N\geq M$
		\begin{align*}
			\pi_{N'}^N\pi_N^Ma_M
			=S_{N'}\bigg[I^{N'-N}\otimes S_N(I^{N-M}\otimes a_M)\bigg]
			=\pi^M_{N'}a_M\,,
		\end{align*}
		which shows that the family of maps $\pi^M_N\colon B^M_\pi\to B^N_\pi$ is "consistent", namely $\pi^N_{N'}\circ \pi^M_N=\pi^M_N$.
		The same property does not apply for $\gamma$-sequences, but it holds only asymptotically.
		Indeed for $a_M\in B^M$ one has, for $N\geq M$,
		\begin{align*}
			\overline{\gamma}^M_Na_M
			&=\frac{1}{N}\sum_{j=0}^{N-M}
			\gamma_N^j(I^{N-M}\otimes a_M)
			+R_N
			\\&
			=\frac{1}{N}\sum_{j=0}^{N-M}
			I^{N-M-j}\otimes a_M\otimes I^j
			+R_N\,,
			\qquad\|R_N\|_N\leq \frac{M-1}{N}\|a_M\|_M\,.
		\end{align*}
		It then follows that
		\begin{align*}
			\overline{\gamma}^N_{N'}\overline{\gamma}^M_Na_M
			=\overline{\gamma}^N_{N'}\bigg[
			\frac{1}{N}\sum_{j=0}^{N-M}
			I^{N-M-j}\otimes a_M\otimes I^j
			\bigg]
			+\overline{\gamma}^N_{N'}R_N
			=\overline{\gamma}^M_{N'}a_M
			+R'_{N'}\,,
		\end{align*}
		where we used the $\gamma$-invariance while
		\begin{align*}
			\|R'_{N'}\|_{N'}=
			\bigg\|\overline{\gamma}^N_{N'}R_N
			-\frac{M-1}{N}\overline{\gamma}^M_{N'}a_M\bigg\|_{N'}
			\leq\|R_N\|_N+\frac{M-1}{N}\|a_M\|_M
			=O(1/N)\,.
		\end{align*}
		This shows that, although $\overline{\gamma}^N_{N'}\circ\overline{\gamma}^M_N\neq\overline{\gamma}^M_N$ one still has
		\begin{align}\label{Eq: gamma sequences asymptotic consistency}
			\lim_{N\to\infty}
			\|[\overline{\gamma}^N_{N'}\overline{\gamma}^M_Na_M]_{N'}
			-[\overline{\gamma}^M_{N'}a_M]_{N'}\|_{B_\sim}=0\,.
		\end{align}
		As we shall see, naively speaking most the results obtained for $\pi$-sequences holds true also for $\gamma$-sequences but only asymptotically ---in the sense of relation \eqref{Eq: equivalence relation among sequences}.
	\end{enumerate}
\end{remark}

\subsection{The algebra generated by $\gamma$-sequences}
\label{Subsec: the algebra generated by gamma-sequences}

In what follows we will consider the $*$-algebra $\dot{B}^\infty_\gamma\subset\prod_{N\in\mathbb{N}}B^N$ generated by $\gamma$-sequences together with its projection $[\dot{B}]^\infty_\gamma\subset [B]_\sim$.
As we will see, the latter algebra enjoys remarkable properties, in particular, it can be completed to a commutative $C^*$-algebra $[B]^\infty_\gamma$.

\begin{definition}\label{Def: C*-algebra of gamma-sequences}
	Let $\dot{B}^\infty_\gamma$ be the $*$-algebra generated by $\gamma$-sequences ---\textit{cf.} Definition \ref{Def: gamma-sequence}.
	We denote by $[\dot{B}]^\infty_\gamma\subset [B]_\sim$ the projection of $\dot{B}^\infty_\gamma$ in $[B]_\sim$, that is, $[\dot{B}]^\infty_\gamma$ is the $*$-algebra generated by equivalence classes of $\gamma$-sequences.
	Thus, $[a_N]_N\in[\dot{B}]^\infty_\gamma$ if and only if
	\begin{align*}
		[a_N]_N=\sum_{\substack{\ell,k_1,\ldots,k_\ell}}c^{k_1\ldots k_\ell}
		[\overline{\gamma}_N^{M(k_1)}(a_{(k_1)})\cdots\overline{\gamma}^{M(k_\ell)}_N(a_{(k_\ell)})]_N\,,
	\end{align*}
	where $a_{(k_j)}\in B^{M(k_j)}$ while the sum over $\ell,k_1,\ldots,k_\ell$ is finite.
	We denote by $[B]^\infty_\gamma:=\overline{[\dot{B}]^\infty_\gamma}$ the closure of $[\dot{B}]^\infty_\gamma$ in $[B]_\sim$, that is, the $C^*$-algebra generated by equivalence classes of $\gamma$-sequences.
	To wit, an equivalence class $[a_N]_N$ belongs to $[B]^\infty_\gamma$ if and only if for all $\varepsilon>0$ there exists $N_\varepsilon\in\mathbb{N}$ and $[a'_N]_N\in[\dot{B}]^\infty_\gamma$ such that $\|a_N-a'_N\|_N<\varepsilon$ for all $N\geq N_\varepsilon$.
\end{definition}

\begin{proposition}\label{Prop: the C* algebra of gamma-sequences is commutative}
	Let $a_{M_1},\ldots,a_{M_\ell}$, $\ell\in\mathbb{N}$, be such that $a_{M_j}\in B^{M_j}$, $j=1,\ldots,\ell$.
	Then:
	\begin{align}\label{Eq: equivalence class of product of gamma sequences}
		[\overline{\gamma}^{M_1}_N(a_{M_1})\cdots\overline{\gamma}^{M_\ell}_N(a_{M_\ell})]_N=
		\bigg[\overline{\gamma}_N\bigg(\frac{1}{N^{\ell-1}}\sum_{|j|_\ell=N-|M|_\ell}
		I^{j_1}\otimes
		\wick{
			\c1{a}_{M_1}\otimes
			\ldots\otimes I^{j_\ell}\otimes
			\c1{a}_{M_\ell}
		}
		\bigg)\bigg]_N\,,
	\end{align}
	where $|j|_\ell:=j_1+\ldots+j_\ell$, $|M|_\ell=M_1+\ldots +M_\ell$ is a short notation while
	\begin{align}\label{Eq: total weighted symmetrization}
		I^{j_1}\otimes\wick{\c1 a_{M_1}\otimes\ldots\otimes I^{j_\ell}\otimes\c1{a}_{M_\ell}}
		:=\frac{1}{\ell}
		\sum_{\varsigma\in\mathfrak{S}_\ell}
		I^{j_1}\otimes
		a_{M_{\varsigma(1)}}\otimes\ldots\otimes I^{j_\ell}\otimes
		{a}_{M_{\varsigma(\ell)}}\,,
	\end{align}
	denotes "total weighted symmetrization" over the factor $a_{M_1},\ldots, a_{M_\ell}$
	\footnote{For example $I^{j_1}\otimes\wick{\c1 a_{M_1}\otimes I^{j_2}\otimes\c1{a}_{M_2}}=(I^{j_1}\otimes a_{M_1}\otimes I^{j_2}\otimes a_{M_2}+I^{j_1}\otimes a_{M_2}\otimes I^{j_2}\otimes a_{M_1})/2$.}.
	In particular $[B]^\infty_\gamma$ is a commutative $C^*$-subalgebra of $[B]_\sim$.
\end{proposition}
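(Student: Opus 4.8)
The plan is to establish the displayed identity \eqref{Eq: equivalence class of product of gamma sequences} by expanding the product of the averaged shift operators, discarding the ``collision'' contributions which are $O(1/N)$ in norm, and re-indexing the surviving ``collision-free'' contributions; commutativity of $[B]^\infty_\gamma$ then follows immediately, since the right-hand side of \eqref{Eq: equivalence class of product of gamma sequences} is visibly symmetric under permutations of $a_{M_1},\dots,a_{M_\ell}$.

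First I would fix $N$ large enough (the finitely many $N<|M|_\ell$ are irrelevant for $\sim$) and expand, using \eqref{Eq: averaged gamma-operator} and Definition \ref{Def: gamma-sequence},
\[
\overline{\gamma}^{M_1}_N(a_{M_1})\cdots\overline{\gamma}^{M_\ell}_N(a_{M_\ell})
=\frac{1}{N^\ell}\sum_{i_1,\dots,i_\ell=0}^{N-1}
\gamma_N^{i_1}(I^{N-M_1}\otimes a_{M_1})\cdots\gamma_N^{i_\ell}(I^{N-M_\ell}\otimes a_{M_\ell})\,.
\]
To a multi-index $\mathbf{i}=(i_1,\dots,i_\ell)$ I would attach the $\ell$ cyclic arcs $W_k(\mathbf{i})\subseteq\mathbb{Z}/N\mathbb{Z}$ of length $M_k$ on which $\gamma_N^{i_k}(I^{N-M_k}\otimes a_{M_k})$ acts non-trivially, and call $\mathbf{i}$ \emph{collision-free} when these arcs are pairwise disjoint. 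Since $\gamma_N$ is a $*$-automorphism of $B^N$ (hence isometric) and the norm is submultiplicative, each summand has norm at most $\prod_k\|a_{M_k}\|_{M_k}$; and the number of multi-indices which are \emph{not} collision-free is $O(N^{\ell-1})$, because forcing two prescribed arcs to meet constrains one of the $\ell$ free indices to $O(1)$ values. Hence the non-collision-free part of the sum has norm $O(1/N)$ and may be dropped modulo the equivalence relation \eqref{Eq: equivalence relation among sequences}.

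Next comes the combinatorial heart of the argument. When $\mathbf{i}$ is collision-free, the corresponding summand is, up to a cyclic shift $\gamma_N^s$, the tensor $T_{\varsigma,\mathbf{j}}:=I^{j_1}\otimes a_{M_1}\otimes I^{j_2}\otimes a_{M_{\varsigma(2)}}\otimes\cdots\otimes I^{j_\ell}\otimes a_{M_{\varsigma(\ell)}}$ for a unique $\varsigma\in\mathfrak{S}_\ell$ with $\varsigma(1)=1$ (the cyclic order in which the blocks are encountered, starting from $a_{M_1}$) and a unique gap vector $\mathbf{j}=(j_1,\dots,j_\ell)$ with $j_m\geq 0$ and $|j|_\ell=N-|M|_\ell$; conversely $(s,\varsigma,\mathbf{j})\in(\mathbb{Z}/N\mathbb{Z})\times\{\varsigma:\varsigma(1)=1\}\times\{\mathbf{j}\}$ recovers $\mathbf{i}$ bijectively. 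Summing over $s$ and using $\sum_{s=0}^{N-1}\gamma_N^s=N\overline{\gamma}_N$ yields, modulo $\sim$,
\[
\overline{\gamma}^{M_1}_N(a_{M_1})\cdots\overline{\gamma}^{M_\ell}_N(a_{M_\ell})
\sim\frac{1}{N^{\ell-1}}\sum_{\substack{\varsigma\in\mathfrak{S}_\ell\\\varsigma(1)=1}}\;
\sum_{|j|_\ell=N-|M|_\ell}\overline{\gamma}_N(T_{\varsigma,\mathbf{j}})\,.
\]
Finally, every $\tau\in\mathfrak{S}_\ell$ is the cyclic rotation of a unique $\varsigma$ with $\varsigma(1)=1$ by a unique amount $r\in\{0,\dots,\ell-1\}$, and that rotation, applied simultaneously to the gap vector, is a bijection of $\{\mathbf{j}:|j|_\ell=N-|M|_\ell\}$; since $\overline{\gamma}_N$ is insensitive to a cyclic shift of its argument, one obtains $\sum_{\tau\in\mathfrak{S}_\ell}\sum_{\mathbf{j}}\overline{\gamma}_N(I^{j_1}\otimes a_{M_{\tau(1)}}\otimes\cdots\otimes I^{j_\ell}\otimes a_{M_{\tau(\ell)}})=\ell\sum_{\varsigma(1)=1}\sum_{\mathbf{j}}\overline{\gamma}_N(T_{\varsigma,\mathbf{j}})$. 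Substituting this back, recognizing the inner $\frac1\ell\sum_{\tau}$ as the total weighted symmetrization \eqref{Eq: total weighted symmetrization}, and using linearity of $\overline{\gamma}_N$ gives exactly the right-hand side of \eqref{Eq: equivalence class of product of gamma sequences}. The delicate point of the whole proof is precisely this bookkeeping of arcs, overall rotations $s$, and the $\ell$-fold cyclic redundancy, which is what makes the constants $N^{-(\ell-1)}$ and $\ell^{-1}$ appear; no further analytic estimate is needed.

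For the last assertion, the right-hand side of \eqref{Eq: equivalence class of product of gamma sequences} is symmetric in $a_{M_1},\dots,a_{M_\ell}$, so $[\overline{\gamma}^{M_1}_N(a_{M_1})\cdots\overline{\gamma}^{M_\ell}_N(a_{M_\ell})]_N$ does not depend on the order of the factors. Taking $\ell=2$ shows that any two classes $[\overline{\gamma}^{M}_N(a_M)]_N$ of $\gamma$-sequences commute in $[B]_\sim$; since the set of such classes is closed under the $*$-operation and generates $[\dot{B}]^\infty_\gamma$ by Definition \ref{Def: C*-algebra of gamma-sequences}, the $*$-algebra $[\dot{B}]^\infty_\gamma$ is commutative. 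Commutativity passes to its norm-closure $[B]^\infty_\gamma=\overline{[\dot{B}]^\infty_\gamma}$ because the commutator is jointly norm-continuous on the $C^*$-algebra $[B]_\sim$, which completes the proof.
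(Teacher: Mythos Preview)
Your proof is correct and takes a genuinely different route from the paper's. The paper argues by induction on $\ell$: it first handles $\ell=2$ by pulling one $\overline{\gamma}_N$ outside, splitting the inner sum over $j\in\{0,\dots,N-1\}$ into ``overlapping'' and ``non-overlapping'' positions, and then for general $\ell$ it applies the inductive hypothesis to the first $\ell-1$ factors and multiplies in the last one, redistributing $a_{M_\ell}$ among the existing gaps $j_1,\dots,j_{\ell-1}$ (this is where the paper extracts the combinatorial factor $\ell-1$, and then uses $\gamma_N$-invariance to pass from permutations fixing the last slot to the full $\mathfrak{S}_\ell$-symmetrization divided by $\ell$). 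You instead expand all $\ell$ factors simultaneously and set up a single bijection between collision-free multi-indices $(i_1,\dots,i_\ell)$ and triples $(s,\varsigma,\mathbf{j})$ with $\varsigma(1)=1$, obtaining the prefactors $N^{-(\ell-1)}$ and $\ell^{-1}$ in one stroke.

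Your argument is cleaner and more conceptual for the formula itself; the paper's inductive argument is more pedestrian but has a side benefit worth noting: the explicit inductive construction of the remainder $R_N$ is revisited in Remark~\ref{Rmk: estimate of commutator with the remainder; Cstar algebra of symmetric sequences; character space for algebra of gamma sequences}\ref{Item: estimate of commutator with the remainder} to obtain the $O(1/N^2)$ commutator estimate used for the Poisson structure. Your ``collision'' remainder has exactly the same structure (a $\overline{\gamma}_N$-average of terms with an $N$-independent number of non-identity sites, weighted by $O(1/N)$), so the same estimate goes through, but if you use this proof in the paper you should say one sentence to that effect when you reach Proposition~\ref{Prop: Poisson structure on algebra of gamma-sequences}.
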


\begin{proof}
	\noindent
	Let $a_{M_1}\in B^{M_1},\ldots,a_{M_\ell}\in B^{M_\ell}$, $\ell\in\mathbb{N}$.
	We will prove that, for $N$ large enough,
	\begin{multline}\label{Eq: product of gamma sequences - estimate of remainder}
		\overline{\gamma}^{M_1}_N(a_{M_1})\cdots\overline{\gamma}^{M_\ell}_N(a_{M_\ell})=
		\overline{\gamma}_N\bigg(\frac{1}{N^{\ell-1}}\sum_{|j|_\ell=N-|M|_\ell}
		I^{j_1}\otimes
		\wick{
			\c1 a_{M_1}\otimes
			\ldots\otimes I^{j_\ell}\otimes
			\c1 a_{M_\ell}
		}\bigg)
		+R_N\,,
	\end{multline}
	where $R_N\in B^N$ is such that $\|R_N\|_N=O(1/N)$.
	On account of \eqref{Eq: equivalence relation among sequences} this implies Equation \eqref{Eq: equivalence class of product of gamma sequences}.
	
	We proceed by induction over $\ell\in\mathbb{N}$.
	For $\ell=1$ the right-hand side of Equation \eqref{Eq: product of gamma sequences - estimate of remainder} reduces to $\overline{\gamma}^M_Na_M+R_N$ so that we may choose $R_N=0$.
	For $\ell=2$ we find, for $N$ large enough (say, $N\geq 2(M_1+M_2)$),
	\begin{align*}
		\overline{\gamma}^{M_1}_N(a_{M_1})\overline{\gamma}^{M_2}_N(a_{M_2})
		&=\overline{\gamma}_N\bigg(
		\left(I^{N-M_1}\otimes a_{M_1}\right)
		\overline{\gamma}_N(I^{N-M_2}\otimes a_{M_2})
		\bigg)
		\\&=\overline{\gamma}_N\bigg(
		\left(I^{N-M_1}\otimes a_{M_1}\right)
		\frac{1}{N}\sum_{j=0}^{N-1}\gamma_N^j(I^{N-M_2}\otimes a_{M_2})
		\bigg)
		\\&=\overline{\gamma}_N\bigg(
		\left(I^{N-M_1}\otimes a_{M_1}\right)
		\frac{1}{N}\sum_{j=M_1}^{N-M_2-1}\gamma_N^j(I^{N-M_2}\otimes a_{M_2})
		\bigg)
		+R_N
		\\&=\overline{\gamma}_N\bigg(
		\frac{1}{N}\sum_{j_1+j_2=N-M_1-M_2}I^{j_2}\otimes a_{M_2}\otimes I^{j_1}\otimes a_{M_1}
		\bigg)
		+R_N\,,
		\\&=\overline{\gamma}_N\bigg(
		\frac{1}{N}\sum_{j_1+j_2=N-M_1-M_2}I^{j_2}\otimes
		\wick{
			\c1 a_{M_2}
			\otimes I^{j_1}\otimes
			\c1 a_{M_1}}
		\bigg)
		+R_N\,,
	\end{align*}
	where in the last line we used the symmetry of $j_1,j_2$ as well as the $\gamma_N$-invariance of the whole term.
	The remainder $R_N$ coincides with
	\begin{align*}
		\|R_N\|_N
		=\bigg\|
		\overline{\gamma}_N\bigg(
		(I^{N-M_1}\otimes a_{M_1})
		\frac{1}{N}\sum_{\substack{j\in\{0,\ldots,M_1-1\}\\\cup\{N-M_2,\ldots,N-1\}}}\gamma_N^j(I^{N-M_2}\otimes a_{M_2})
		\bigg)
		\bigg\|_N
		\leq\frac{C_{M_1,M_2}}{N}\,,
	\end{align*}
	where $C_{M_1,M_2}>0$ is a constant depending on $a_{M_1},a_{M_2}$.
	Roughly speaking, we removed the values of $j$ for which $a_{M_1}$ and $a_{M_2}$ have "overlapping positions".
	This happens in $M_1+M_2$ cases, whose fraction vanishes as $N\to\infty$.
		
	This proves Equation \eqref{Eq: product of gamma sequences - estimate of remainder} for $\ell=2$.
	Proceeding by induction on $\ell$, we now assume that Equation \eqref{Eq: product of gamma sequences - estimate of remainder} holds for all $\ell'<\ell$ and prove it for $\ell$.
	To this avail we consider, for $N\geq 2|M|_\ell$,
	\begin{multline*}
		\overline{\gamma}^{M_1}_N(a_{M_1})\cdots\overline{\gamma}^{M_\ell}_N(a_{M_\ell})
		\\=\overline{\gamma}_N\bigg(\frac{1}{N^{\ell-2}}\sum_{|j|_{\ell-1}=N-|M|_{\ell-1}}
		I^{j_1}\otimes
		\wick{
			\c1 a_{M_1}\otimes
			\ldots\otimes I^{j_{\ell-1}}\otimes
			\c1 a_{M_{\ell-1}}
		}\bigg)
		\overline{\gamma}^{M_\ell}_N(a_{M_\ell})
		+R_N\overline{\gamma}^{M_\ell}_N(a_{M_\ell})
		\\=\overline{\gamma}_N\bigg(\frac{1}{N^{\ell-2}}\sum_{|j|_{\ell-1}=N-|M|_{\ell-1}}
		(I^{j_1}\otimes
		\wick{
			\c1 a_{M_1}\otimes
			\ldots\otimes I^{j_{\ell-1}}\otimes
			\c1 a_{M_{\ell-1}}
		})\overline{\gamma}^{M_\ell}_N(a_{M_\ell})\bigg)
		+R_N'\,,
	\end{multline*}
	where $\|R_N'\|_N\leq\|R_N\|_N\|a_{M_\ell}\|=O(1/N)$.
	Thus, we focus on
	\begin{multline*}
		\overline{\gamma}_N\bigg(\frac{1}{N^{\ell-2}}
		(I^{j_1}\otimes
		\wick{
			\c1 a_{M_1}\otimes
			\ldots\otimes I^{j_{\ell-1}}\otimes
			\c1 a_{M_{\ell-1}}
		})\overline{\gamma}^{M_\ell}_N(a_{M_\ell})\bigg)
		\\=\overline{\gamma}_N\bigg(\frac{1}{N^{\ell-1}}
		(I^{j_1}\otimes
		\wick{
			\c1 a_{M_1}\otimes
			\ldots\otimes I^{j_{\ell-1}}\otimes
			\c1 a_{M_{\ell-1}}
		})
		\sum_{j_\ell=0}^{N-1}\gamma^{j_\ell}_N(I^{N-M_\ell}\otimes a_{M_\ell})\bigg)\,,
	\end{multline*}
	where $j_1,\ldots, j_{\ell-1}$ are such that $|j|_{\ell-1}=N-|M|_{\ell-1}$.
	We now proceed as in the case $\ell=2$
	by considering only those values $j_\ell$ for which the position of $a_{M_\ell}$ "overlaps" with the ones of $I^{j_1},\ldots, I^{j_\ell}$ and not with those of $a_{M_1},\ldots, a_{M_{\ell-1}}$. 
	Notice that, in focusing only on these $j_\ell$'s we are neglecting a contribution $R_N''$ with $\|R_N''\|_N=O(1/N)$.
	We obtain
	\begin{multline}\label{Eq: product with ell factor - aMell distributes}
		\overline{\gamma}_N\bigg(\frac{1}{N^{\ell-1}}
		(I^{j_1}\otimes
		\wick{
			\c1 a_{M_1}\otimes
			\ldots\otimes I^{j_{\ell-1}}\otimes
			\c1 a_{M_{\ell-1}}
		})
		\sum_{j_\ell=0}^{N-1}\gamma^{j_\ell}_N(I^{N-M_\ell}\otimes a_{M_\ell})\bigg)
		\\=\overline{\gamma}_N\bigg(\frac{1}{N^{\ell-1}}
		\sum_{h_1=0}^{j_1-M_\ell}
		I^{h_1}\otimes a_{M_\ell}
		\otimes I^{j_1-M_\ell-h_1}\otimes
		\wick{
			\c1 a_{M_1}\otimes
			\ldots\otimes I^{j_{\ell-1}}\otimes
			\c1 a_{M_{\ell-1}}
		}\bigg)
		\\+\ldots
		+\overline{\gamma}_N\bigg(\frac{1}{N^{\ell-1}}
		I^{j_1}\otimes
		\wick{
			\c1 a_{M_1}\otimes
			\ldots\otimes
			\sum_{0\leq h_{\ell-1}\leq j_\ell-M_\ell}
			I^{h_{\ell-1}}\otimes a_{M_\ell}
			\otimes I^{j_{\ell-1}-M_\ell-h_{\ell-1}}\otimes
			\c1 a_{M_{\ell-1}}
		}\bigg)
		+R_N''\,,
	\end{multline}
	where $\|R_N''\|_N=O(1/N)$ while the sum over the $h_p$ is empty if $j_p<M_\ell$ ---notice that at least one of these sums is not empty if $N$ is large enough.
	Notice that each of the $\ell-1$ sets of $\ell$ indexes
	\begin{multline*}
		\{h_1,j_1-M_\ell-h_1,j_2,\ldots, j_{\ell-1}\}\,,\, \{j_1,h_2,j_2-M_\ell-h_2,j_3\ldots, j_{\ell-1}\}\,,
		\\\ldots\, \{j_1,\ldots,j_{\ell-2},h_{\ell-1},j_{\ell-1}+M_\ell-h_{\ell-1}\}\,.
	\end{multline*}
	is such that its elements sum to $N-|M|_\ell$.
	Considering now the summation over $j_1,\ldots, j_{\ell-1}$ and using the $\gamma_N$-invariance each subset of indexes provides the same contribution.
        We are lead to
	\begin{multline*}
		\overline{\gamma}_N\bigg(\frac{1}{N^{\ell-2}}\sum_{|j|_{\ell-1}=N-|M|_{\ell-1}}
		(I^{j_1}\otimes
		\wick{
			\c1 a_{M_1}\otimes
			\ldots\otimes I^{j_{\ell-1}}\otimes
			\c1 a_{M_{\ell-1}}
		})\overline{\gamma}^{M_\ell}_N(a_{M_\ell})\bigg)
		\\=\overline{\gamma}_N\bigg(\frac{\ell-1}{N^{\ell-1}}\sum_{|j|_\ell=N-|M|_\ell}
		I^{j_1}\otimes
		\wick{
			\c1 a_{M_1}\otimes
			\ldots\otimes I^{j_{\ell-1}}\otimes
			\c1 a_{M_{\ell-1}}
		}\otimes I^{j_\ell}\otimes a_{M_\ell}\bigg)
		+R_N''\,,
		\\=\overline{\gamma}_N\bigg(\frac{1}{N^{\ell-1}}\sum_{|j|_\ell=N-|M|_\ell}
		I^{j_1}\otimes
		\wick{
			\c1 a_{M_1}\otimes
			\ldots\otimes I^{j_\ell}\otimes
			\c1 a_{M_\ell}
		}\bigg)
		+R_N''\,,
	\end{multline*}
	where in the last line we used that for all $\varsigma\in\mathfrak{S}_\ell$ there are $\ell$ permutations which are equivalent to $\varsigma$ up to a cyclic permutation.
	Indeed, for any permutation of $a_{M_1},\ldots, a_{M_\ell}$ we may use the $\gamma_N$-invariance to write the corresponding contribution fixing the position of the factor $a_{M_\ell}$.
	This boils down to a permutation of $a_{M_1},\ldots, a_{M_{\ell-1}}$ which is repeated $\ell$ times.
	
	By induction this proves Equation \eqref{Eq: product of gamma sequences - estimate of remainder} for all $\ell\in\mathbb{N}$ and thus Equation \eqref{Eq: equivalence class of product of gamma sequences}.
\end{proof}

\begin{remark}\label{Rmk: estimate of commutator with the remainder; Cstar algebra of symmetric sequences; character space for algebra of gamma sequences}
	\noindent
	\begin{enumerate}[(i)]
        \item
        The appearance of the total weighted symmetrization \eqref{Eq: total weighted symmetrization} ensures that, when $a_{M_j}=I^{M_j}$ for all $j\in\{1,\ldots,\ell\}$, the right-hand side of Equation \eqref{Eq: equivalence class of product of gamma sequences} coincides with $[I^N]_N$.
        This is related to the fact that $\frac{1}{N^{\ell-1}}\sum_{|j|_\ell=N-|M|_\ell}=(\ell-1)!+O(1/N)$.
        
		\item\label{Item: estimate of commutator with the remainder}
		A closer inspection to the remainder term $R_N$ of Equation \eqref{Eq: product of gamma sequences - estimate of remainder} reveals that
		\begin{align}\label{Eq: estimate on commutator with the remainder}
			\big\|[R_N,
			\overline{\gamma}^{M_1'}_N(a_{M_1'})\ldots\overline{\gamma}^{M_{\ell'}'}_N(a_{M_{\ell'}'})]\big\|_N
			=O(1/N^2)\,,
		\end{align}
		for all $\ell', M_1',\ldots, M_{\ell'}'\in\mathbb{N}$, and $a_M\in B^M$.
		Roughly speaking, the reason for this is due to the estimate $\|R_N\|_N=O(1/N)$ together with the fact that both $(R_N)_N$ and \\
		$(\overline{\gamma}^{M_1'}_Na_{M_1'}\ldots\overline{\gamma}^{M_{\ell'}'}a_{M_{\ell'}'})_N$ are sequences with "an increasing number of identities".
		In more details, Equation \eqref{Eq: product of gamma sequences - estimate of remainder} implies
		\begin{align*}
			\overline{\gamma}^{M_1'}_N(a_{M_1'})
			\ldots\overline{\gamma}^{M_{\ell'}'}(a_{M_{\ell'}'})
			=\frac{1}{N^{\ell'-1}}\sum_{|j|_{\ell'}=N-|M'|_{\ell'}}
			\overline{\gamma}_N(I^{j_1}\otimes
			\wick{\c1 a_{M_1'}\otimes\ldots\otimes I^{j_{\ell'}}\otimes \c1 a_{M_{\ell'}'}})
			+R_N'\,,
		\end{align*}
		where $\|R_N'\|_N=O(1/N)$.
		This implies
		\begin{multline*}
			[R_N,
			\overline{\gamma}^{M_1'}_N(a_{M_1'})\ldots\overline{\gamma}^{M_{\ell'}'}_N(a_{M_{\ell'}'})]
			=[R_N,R_N']
			\\+\bigg[R_N,\frac{1}{N^{\ell'-1}}\sum_{|j|_{\ell'}=N-|M'|_{\ell'}}
			\overline{\gamma}_N(I^{j_1}\otimes
			\wick{\c1 a_{M_1'}\otimes\ldots\otimes I^{j_{\ell'}}\otimes \c1 a_{M_{\ell'}'}})\bigg]\,.
		\end{multline*}
		The first contribution is estimated by $\|[R_N,R_N']\|_N=O(1/N^2)$ while for the second contribution we have
		\begin{multline*}
			\bigg\|\bigg[R_N,\frac{1}{N^{\ell'-1}}\sum_{|j|_{\ell'}=N-|M'|_{\ell'}}
			\overline{\gamma}_N(I^{j_1}\otimes
			\wick{\c1 a_{M_1'}\otimes\ldots\otimes I^{j_{\ell'}}\otimes \c1 a_{M_{\ell'}'}})\bigg]\bigg\|
			\\\leq\frac{1}{N}\sum_{p=0}^{N-1}\bigg\|\bigg[R_N,\frac{1}{N^{\ell'-1}}\sum_{|j|_{\ell'}=N-|M'|_{\ell'}}
			\gamma_N^p(I^{j_1}\otimes
			\wick{\c1 a_{M_1'}\otimes\ldots\otimes I^{j_{\ell'}}\otimes \c1 a_{M_{\ell'}'}})\bigg]\bigg\|
			\\\leq\frac{L}{N^2}C_{M_1,\ldots,M_{\ell'}}\,.
		\end{multline*}
		where $C_{M_1,\ldots,M_{\ell'}}>0$ does not depend on $N$.
		In the last inequality we used the estimate $\|R_N\|_N=O(1/N)$ and that, on account of the structure of $R_N$ ---\textit{cf.} the proof of Proposition \ref{Prop: the C* algebra of gamma-sequences is commutative}--- and of $I^{j_1}\otimes a_{M_1'}\otimes\ldots\otimes I^{j_{\ell'}}\otimes a_{M_{\ell'}'}$, the sum over $p$ is non-vanishing for finitely many values, say $L$, where $L$ is $N$-independent.
		This proves Equation \eqref{Eq: estimate on commutator with the remainder}.
		\item 
		In complete analogy with Definition \ref{Def: C*-algebra of gamma-sequences} one may introduce the $C^*$-algebra $[B]^\infty_\pi\subset [B]_\sim$ generated by equivalence classes of $\pi$-sequences \cite[Def. II.1]{Raggio_Werner_1989}.
		Moreover, as shown in \cite{Landsman_2017,Landsman_Moretti_vandeVen_2020,Raggio_Werner_1989}, for any $a_M\in B^M_\pi$ and $a_{M'}\in B^{M'}_\pi$ one finds
		\begin{align}\label{Eq: pi-sequences are asymptotically closed under product}
			[\pi^M_N(a_M)\pi^{M'}_N(a_{M'})]_N
			=[\pi^{M+M'}_N(S_{M+M'}(a_M\otimes a_{M'}))]_N\,,
		\end{align}
		which shows that also $[B]^\infty_\pi$ is a commutative $C^*$-algebra.
		In fact, the product of $\pi$-sequences is (asymptotically as $N\to\infty$) a $\pi$-sequence.
		Additionally, one may prove that the system $\{B^N_\pi\}_{N\in\mathbb{N}},\{\pi^M_N\}_{N\geq M}$ is a \textbf{generalized inductive system} \cite{Blackadar_2006, Blackadar_Kirchberg_1997}.
		This streamlines the identification of a bundle of $C^*$-algebras $\prod_{N\in\overline{\mathbb{N}}}B^N_\pi$ out of which a strict deformation quantization can be constructed \cite{Landsman_Moretti_vandeVen_2020,Raggio_Werner_1989}.
		
		The situation for $\gamma$-sequences is slightly different.
		Indeed, Equation \eqref{Eq: equivalence class of product of gamma sequences} shows that the product of $\gamma$-sequences is not a $\gamma$-sequence, even if its $\sim$-equivalence class is considered. Nevertheless, Equation \eqref{Eq: equivalence class of product of gamma sequences} shows that the product of equivalence classes of $\gamma$-sequences is commutative.
		As we shall see in Section \ref{Sec: The continuous bundle of Cstar-algebras associated with gamma-sequences} this will be enough to identify a continuous bundle of $C^*$-algebras $\prod_{N\in\overline{\mathbb{N}}}B^N_\gamma$ out of which a strict deformation quantization is obtained.
		
		Finally it is worth observing that, for all $a\in B$, one finds $\overline{\gamma}^1_Na=\pi^1_Na$ so that, given the results of \cite{Landsman_Moretti_vandeVen_2020,Raggio_Werner_1989}, $[B]^\infty_\pi\subseteq [B]^\infty_\gamma$.
		
		\item
		By standard Gelfand duality \cite[$\S$ II.2]{Blackadar_2006} we find
		\begin{align*}
			[B]^\infty_\gamma\simeq C(K([B]^\infty_\gamma))\,,
		\end{align*}
		where $K([B]^\infty_\gamma)$ denotes the character space over $[B]^\infty_\gamma$.
		An element $\varphi\in K([B]^\infty_\gamma)$ is completely characterized by
		\begin{align*}
			\varphi_M(a_M):=\varphi([\overline{\gamma}^M_Na_M]_N)\,,
			\qquad
			a_M\in B^M\,,
		\end{align*}
		which identifies a sequence $\{\varphi_M\}_{M\in\mathbb{N}}$ of states with $\varphi_M\in S(B^M)$.
		These states are "asymptotically equivalent" because of Equation \eqref{Eq: gamma sequences asymptotic consistency}.
		Indeed considering $\overline{\gamma}_N^M\colon B^M\to B^N_\gamma$, $\overline{\gamma}_N^Ma_M=\overline{\gamma}_N(I^{N-M}\otimes a_M)$, we find
		\begin{align*}
			\lim_{N\to\infty}(\varphi_N\circ \overline{\gamma}_N^M)(a_M)
			=\lim_{N\to\infty}\varphi\left([\overline{\gamma}_{N'}^N\overline{\gamma}_N^Ma_M]_{N'}\right)
			=\varphi\left([\overline{\gamma}_{N'}^Ma_M]_{N'}\right)
			=\varphi_M(a_M)\,.
		\end{align*}	
		A similar argument goes for $[B]^\infty_\pi$, where $K([B]^\infty_\pi)$ can be explicitly characterized.
		In particular $K([B]^\infty_\pi)\simeq S(B)$ \cite[Lem. IV.4]{Raggio_Werner_1989}. This identification may also be seen as a consequence of the prominent quantum De Finetti Theorem \cite[Thm. 8.9]{Landsman_2017}.
		As shown in \cite{Landsman_Moretti_vandeVen_2020}, $S(B)$ is a stratified manifold which carries a Poisson structure.
	\end{enumerate}
\end{remark}

\section{Strict deformation quantization of $\gamma$-sequences}
\label{Sec: strict deformation quantization of gamma-sequences}

The goal of this section is to construct a strict deformation quantization of the commutative algebra $[B]^\infty_\gamma$.
To this avail in Section \ref{Sec: The continuous bundle of Cstar-algebras associated with gamma-sequences} we will identity a suitable continuous bundle of $C^*$-algebras $[B]_\gamma$ by means of a standard construction \cite{Landsman_1998,Landsman_2017}.
In Section \ref{Subsec: canonical representative of dense elements in the algebra of gamma-sequence} we will introduce the notion of "canonical representative" for an element $[a_N]_N\in[\dot{B}]^\infty_\gamma$ ---\textit{cf.} Definitions \ref{Def: canonical representative of gamma-sequence}-\ref{Def: canonical representative of product of gamma-sequences}.
Eventually, in Section \ref{Subsec: the Poisson structure of the algebra of gamma-sequences} will show that $[B]^\infty_\gamma$ carries a Poisson structure and we will prove Theorem \ref{Thm: deformation quantization of the algebra of gamma sequences}, which provides the strict deformation quantization of $[B]^\infty_\gamma$.

\subsection{The continuous bundle of $C^*$-algebras $[B]_\gamma$ associated with $[B]^\infty_\gamma$}
\label{Sec: The continuous bundle of Cstar-algebras associated with gamma-sequences}

In this section we will define a continuous bundle of $C^*$-algebras $[B]_\gamma$ over $\overline{\mathbb{N}}=\mathbb{N}\cup\{\infty\}$ whose fibers are $[B]^N_\gamma:=B^N_\gamma$ for $N\in\mathbb{N}$ and $[B]^\infty_\gamma$, defined as per Definition \ref{Def: C*-algebra of gamma-sequences}, for $N=\infty$.

To this avail we briefly recall the main definitions and results we need ---\textit{cf.} \cite[App. C.19]{Landsman_2017}, \cite[\S IV.1.6]{Blackadar_2006}.
We denote by $C(\overline{\mathbb{N}})$ the space of $\mathbb{C}$-valued sequences $(\alpha_N)_{N\in\mathbb{N}}$ such that $\alpha_\infty:=\lim\limits_{N\to\infty}\alpha_N\in\mathbb{C}$ exists.
A \textbf{continuous bundle (or field) of $C^*$-algebras} over $\overline{\mathbb{N}}$ is a triple $\mathcal{A}$, $\{\mathcal{A}_N\}_{N\in\overline{\mathbb{N}}}$, $\{\psi_N\}_{N\in\overline{\mathbb{N}}}$ made by $C^*$-algebras $\mathcal{A},\mathcal{A}_N$, $N\in\overline{\mathbb{N}}$, and surjective homomorphisms $\psi_N\colon\mathcal{A}\to\mathcal{A}_N$ such that:
\begin{enumerate}[(i)]
	\item
	\label{Item: norm condition for bundle of Cstar algebras}
	The norm of $\mathcal{A}$ is given by $\|a\|_{\mathcal{A}}:=\sup\limits_{N\in\overline{\mathbb{N}}}\|\psi_N(a)\|_{\mathcal{A}_N}$;
	
	\item
	\label{Item: product by function condition for bundle of Cstar algebras}
	For all $\alpha=(\alpha_N)_{N\in\overline{\mathbb{N}}}\in C(\overline{\mathbb{N}})$ and $a\in\mathcal{A}$ there exists a $\alpha a\in\mathcal{A}$ with the property that $\psi_N(\alpha a)=\alpha_N\psi_N(a)$.
	
	\item
	\label{Item: norm continuity for bundle of Cstar algebras}
	For all $a\in\mathcal{A}$, $(\|\psi_N(a)\|_{\mathcal{A}_N})_{N\in\overline{\mathbb{N}}}\in C(\overline{\mathbb{N}})$.
\end{enumerate}
A \textbf{continuous section} of $\mathcal{A}$ is an element $a\in\prod_{N\in\overline{\mathbb{N}}}\mathcal{A}_N$ such that there exists $a'\in\mathcal{A}$ fulfilling $a_N=\psi_N(a')$ for all $N\in\overline{\mathbb{N}}$.
Clearly $\mathcal{A}$ can be identified with its continuous sections, therefore, in the forthcoming discussion we shall always regard $a\in\mathcal{A}$ as an element $a\in\prod_{N\in\overline{\mathbb{N}}}\mathcal{A}_N$.
For this reason from now on we will implicitly identify $\psi_N$, $N\in\overline{\mathbb{N}}$, with the projection $\prod_{N\in\overline{\mathbb{N}}}\mathcal{A}_N\to\mathcal{A}_N$.

\begin{remark}\label{Rmk: sufficient condition for continuous bundle of Cstar algebras}
    In applications, it is often difficult to identity a continuous bundle of $C^*$-algebras by assigning the triple $\mathcal{A}$, $\{\mathcal{A}_N\}_{N\in\overline{\mathbb{N}}}$, $\{\psi_N\}_{N\in\overline{\mathbb{N}}}$ directly.
    However, a useful result ---\textit{cf.} \cite[Prop. 1.2.3]{Landsman_1998}, \cite[Prop. C.124]{Landsman_2017}--- shows that it is in fact sufficient to identify a dense set of (a posteriori) continuous sections of $\mathcal{A}$.
    Actually, let $\widetilde{\mathcal{A}}\subseteq\prod_{N\in\overline{\mathbb{N}}}\mathcal{A}_N$ be such that:
    \begin{enumerate}
	    \item\label{Item: tildeA is pointwise dense}
	    For all $N\in\overline{\mathbb{N}}$ the set $\{a_N\,|\,a\in\widetilde{\mathcal{A}}\}$ is dense in $\mathcal{A}_N$;
	
	    \item\label{Item: tildeA is a star-algebra}
	    $\widetilde{\mathcal{A}}$ is a $*$-algebra;
	
	    \item\label{Item: tildeA fulfils the Rieffel condition}
	    For all $\widetilde{a}\in\widetilde{\mathcal{A}}$, it holds $\lim\limits_{N\to\infty}\|\widetilde{a}_N\|_{\mathcal{A}_N}=\|\widetilde{a}_\infty\|_{\mathcal{A}_\infty}$, \textit{i.e.} $(\|\widetilde{a}_N\|_{\mathcal{A}_N})_{N\in\overline{\mathbb{N}}}\in C(\overline{\mathbb{N}})$.
    \end{enumerate}
    Then defining $\mathcal{A}$ by
    \begin{align}\label{Eq: bundle defined from tildeA}
	    \mathcal{A}:=\bigg\lbrace a\in\prod_{N\in\overline{\mathbb{N}}}\mathcal{A}_N\,|\,
	    \forall\varepsilon>0\;\exists N_{\varepsilon}\in\mathbb{N}\,,\,
	    \exists a'\in\widetilde{\mathcal{A}}\colon
    	\|a_N-a'_N\|_{\mathcal{A}_N}<\varepsilon\;\forall N\geq N_\varepsilon
	    \bigg\rbrace\,,
    \end{align}
    one may prove that $\mathcal{A}$ is a continuous bundle of $C^*$-algebras over $\overline{\mathbb{N}}$ \cite{Landsman_1998,Landsman_2017}.
    In fact, $\mathcal{A}$ is the smallest continuous bundle of $C^*$-algebras over $\overline{\mathbb{N}}$ which contains $\widetilde{\mathcal{A}}$.
\end{remark}

We will now prove that $\prod_{N\in\overline{\mathbb{N}}}[B]^N_\gamma$ identifies a continuous bundle of $C^*$-algebras where $[B]^N_\gamma:=B^N_\gamma$ for $N\in\mathbb{N}$ while $[B]^\infty_\gamma$ denotes the $C^*$-algebra introduced in Definition \ref{Def: C*-algebra of gamma-sequences}.
To this avail we will identity a subset $\widetilde{\mathcal{A}}\subset\prod_{N\in\overline{\mathbb{N}}}[B]^N_\gamma$ fulfilling conditions \ref{Item: tildeA is pointwise dense}-\ref{Item: tildeA is a star-algebra}-\ref{Item: tildeA fulfils the Rieffel condition} of Remark \ref{Rmk: sufficient condition for continuous bundle of Cstar algebras}.
From a technical point of view, condition \ref{Item: tildeA fulfils the Rieffel condition} will require to prove that, for all $[a_N]_N\in[B]^\infty_\gamma$, the sequence $(\|a_N\|_N)_N$ has a limit as $N\to\infty$: This is proved in Proposition \ref{Prop: elements in the algebra of gamma-sequences norm have limit}.
To this avail, the following Lemma comes in handy.

\begin{lemma}\label{Lem: inequality for convergence of a sequence}
	Let $(\alpha_N)_{N\in\mathbb{N}}$ be a bounded sequence of real numbers such that
	\begin{align}\label{Eq: inequality for convergence of a sequence}
		\exists C_1,C_2\in\mathbb{R}\,,\,
		\exists N_0\in\mathbb{N}\;\colon\;
		\alpha_N\geq\alpha_K+C_1\frac{1}{K}+C_2\frac{K}{N}\;
		\forall N\geq K\geq N_0\,.
	\end{align}
	Then $(\alpha_N)_N\in C(\overline{\mathbb{N}})$, \textit{i.e.} $\alpha_\infty:=\lim\limits_{N\to\infty}\alpha_N\in\mathbb{R}$ exists.
\end{lemma}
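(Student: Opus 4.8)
The plan is to show that the hypothesis \eqref{Eq: inequality for convergence of a sequence} forces $\liminf_{N\to\infty}\alpha_N \geq \limsup_{N\to\infty}\alpha_N$, which together with boundedness gives convergence. Write $\alpha_- := \liminf_{N\to\infty}\alpha_N$ and $\alpha_+ := \limsup_{N\to\infty}\alpha_N$; both are finite real numbers since $(\alpha_N)_N$ is bounded. The goal is $\alpha_+ \leq \alpha_-$.

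First I would fix $\varepsilon > 0$ and choose $K \geq N_0$ large enough that simultaneously $\alpha_K \geq \alpha_- - \varepsilon$ (possible because $\alpha_-$ is a cluster value, so infinitely many indices come arbitrarily close to it — pick such a $K$) and $|C_1|/K < \varepsilon$. With this $K$ now held fixed, the inequality \eqref{Eq: inequality for convergence of a sequence} reads, for every $N \geq K$,
\begin{align*}
	\alpha_N \geq \alpha_K + C_1\frac{1}{K} + C_2\frac{K}{N}
	\geq \alpha_- - 2\varepsilon + C_2\frac{K}{N}\,.
\end{align*}
Now let $N \to \infty$ with $K$ fixed: the term $C_2 K/N \to 0$, so $\liminf_{N\to\infty}\alpha_N \geq \alpha_- - 2\varepsilon$. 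That is merely $\alpha_- \geq \alpha_- - 2\varepsilon$, which is vacuous — so a single fixed $K$ is not enough, and the real argument must let $K$ grow together with $N$ in a controlled way.

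The correct route: pick a subsequence $(\alpha_{N_j})_j$ realizing the $\limsup$, i.e.\ $\alpha_{N_j} \to \alpha_+$. For each $j$, apply \eqref{Eq: inequality for convergence of a sequence} with $N = N_j$ and with $K = K_j$ chosen so that $K_j \to \infty$, $K_j/N_j \to 0$, and $\alpha_{K_j} \to \alpha_-$ along a subsequence (this last point needs care — one must first extract indices near the $\liminf$ and then thin them so the growth conditions hold; since $\alpha_-$ is a cluster point there are infinitely many indices $K$ with $\alpha_K < \alpha_- + \varepsilon$, and among any infinite set one can choose $K_j$ with $K_j \leq \sqrt{N_j}$, say, so $K_j/N_j \to 0$ and $1/K_j \to 0$). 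Then
\begin{align*}
	\alpha_{N_j} \geq \alpha_{K_j} + C_1\frac{1}{K_j} + C_2\frac{K_j}{N_j}\,,
\end{align*}
and passing to the limit along $j$ gives $\alpha_+ \geq \alpha_- + 0 + 0 = \alpha_-$ combined with $\alpha_{K_j}\to\alpha_-$; wait — this yields $\alpha_+ \geq \alpha_-$, which is the wrong direction since always $\alpha_+ \geq \alpha_-$. So the inequality must instead be exploited the other way: one applies it with $K = N_j$ (a near-$\limsup$ index) and $N \geq N_j$ ranging over near-$\liminf$ indices, giving $\alpha_N \geq \alpha_{N_j} + o(1)$, hence $\alpha_- \geq \alpha_+$ in the limit.

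\emph{Summary of the argument I would write:} fix $\varepsilon>0$; choose $K \geq N_0$ with $\alpha_K \geq \alpha_+ - \varepsilon$ and $|C_1|/K \leq \varepsilon$ (possible: $\alpha_+$ is a cluster point and $K$ can be taken arbitrarily large); for this $K$, \eqref{Eq: inequality for convergence of a sequence} gives $\alpha_N \geq \alpha_+ - 2\varepsilon + C_2 K/N$ for all $N \geq K$, hence $\liminf_{N\to\infty}\alpha_N \geq \alpha_+ - 2\varepsilon$ after sending $N\to\infty$ (with $K$, $\varepsilon$ fixed, the $C_2 K/N$ term vanishes). As $\varepsilon>0$ is arbitrary, $\alpha_- = \liminf_{N}\alpha_N \geq \alpha_+ = \limsup_{N}\alpha_N$, and since the reverse inequality is automatic we conclude $\alpha_- = \alpha_+$, i.e.\ $\alpha_\infty := \lim_{N\to\infty}\alpha_N$ exists in $\mathbb{R}$. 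The main obstacle — the one subtlety worth stating carefully — is the order of quantifiers: one must first choose $K$ (near the $\limsup$) and only afterwards send $N \to \infty$, so that the "bad" cross term $C_2 K/N$ genuinely tends to $0$; a naive simultaneous limit destroys the estimate.
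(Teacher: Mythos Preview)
Your final summarized argument is correct and is essentially the paper's proof: fix $K$, send $N\to\infty$ in \eqref{Eq: inequality for convergence of a sequence} to obtain $\liminf_N\alpha_N\geq\alpha_K+C_1/K$ for every $K\geq N_0$, then take the $\limsup$ over $K$ to conclude $\limsup_K\alpha_K\leq\liminf_N\alpha_N$. The paper phrases both steps via arbitrary convergent subsequences rather than your $\varepsilon$-close choice of $K$ near $\alpha_+$, but the two formulations are equivalent; the exploratory false starts in your write-up (choosing $K$ near $\alpha_-$, or trying to let $K$ and $N$ grow simultaneously) should simply be deleted.
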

\begin{proof}
    Let $(\alpha_{N_j})_{j\in\mathbb{N}}$ be a convergent subsequence of $(\alpha_N)_{N\in\mathbb{N}}$.
    Then for all $K\geq N_0$ we find
    \begin{align*}
        \lim_{j\to\infty}\alpha_{N_j}
        \geq\lim_{j\to\infty}\bigg(\alpha_K+C_1\frac{1}{K}+C_2\frac{K}{N_j}\bigg)
        =\alpha_K+\frac{C_1}{K}\,.
    \end{align*}
    Since this holds true for all convergent subsequences we conclude that
    \begin{align*}
        \liminf_{N\to\infty}\alpha_N
        \geq\alpha_K+\frac{C_1}{K}
        \qquad\forall K\geq N_0\,.
    \end{align*}
    Considering again a convergent subsequence $(\alpha_{K_j})_{j\in\mathbb{N}}$ of $(\alpha_N)_{N\in\mathbb{N}}$ the above inequality implies
    \begin{align*}
        \lim_{j\to\infty}\alpha_{K_j}
        \leq\lim_{j\to\infty}\bigg(\liminf_{N\to\infty}\alpha_N-\frac{C_1}{K_j}\bigg)
        =\liminf_{N\to\infty}\alpha_N\,.
    \end{align*}
    Since this holds for all convergent subsequences we conclude that
    \begin{align*}
        \limsup_{N\to\infty}\alpha_N
        \leq\liminf_{N\to\infty}\alpha_N\,,
    \end{align*}
    therefore, $\lim\limits_{N\to\infty}\alpha_N=\liminf\limits_{N\to\infty}\alpha_N=\limsup\limits_{N\to\infty}\alpha_N$ exists and it is finite.
\end{proof}

\noindent

\begin{proposition}\label{Prop: elements in the algebra of gamma-sequences norm have limit}
	For all $[a_N]_N\in [B]^\infty_\gamma$ the sequence $(\|a_N\|_N)_N$ is in $C(\overline{\mathbb{N}})$.
	In particular we have
	\begin{align*}
		\|[a_N]_N\|_{[B]^\infty_\gamma}
		(=\|[a_N]_N\|_{[B]_\sim})
		=\lim\limits_{N\to\infty}\|a_N\|_N\,.
	\end{align*}
\end{proposition}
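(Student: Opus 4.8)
The strategy is to reduce the claim to the Lemma on convergence of sequences (Lemma~\ref{Lem: inequality for convergence of a sequence}) by establishing the inequality \eqref{Eq: inequality for convergence of a sequence} for the sequence $\alpha_N := \|a_N\|_N$. First I would handle the case of a ``basic'' element $[a_N]_N = [\overline{\gamma}^{M_1}_N(a_{M_1})\cdots\overline{\gamma}^{M_\ell}_N(a_{M_\ell})]_N \in [\dot B]^\infty_\gamma$, using the explicit formula of Proposition~\ref{Prop: the C* algebra of gamma-sequences is commutative}: up to an $O(1/N)$ remainder, $a_N$ equals $\overline{\gamma}_N$ applied to a weighted average over placements of the fixed blocks $a_{M_1},\dots,a_{M_\ell}$ separated by identities. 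The key point is a ``comparison between $N$ and $K$'': given $K \le N$ with both large, one can embed the $B^K$-picture into the $B^N$-picture by adding $N-K$ further identity factors, and estimate $\|a_N\|_N$ from below in terms of $\|a_K\|_K$. Concretely, I would show that $\overline{\gamma}^K_N$ (suitably interpreted as adjoining identities and re-averaging) sends the canonical $B^K$-representative to something within $O(K/N)$ of $a_N$, while also being norm non-increasing in the relevant direction; combined with the $O(1/K)$ error already present in the $B^K$-representative this yields exactly $\alpha_N \ge \alpha_K + C_1/K + C_2\,K/N$. This is essentially the content already worked out in Remark~\ref{Rmk: reconstruction of starting point for gamma-sequence; Heisenberg and CW Hamiltonians; asymptotic consistency for gamma-sequences}\eqref{Item: asymptotic consistency for gamma-sequences} (asymptotic consistency, Equation~\eqref{Eq: gamma sequences asymptotic consistency}), now made quantitative.

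Once convergence of $(\|a_N\|_N)_N$ is known for basic elements, I would extend it to all of $[\dot B]^\infty_\gamma$: a general element is a finite linear combination of such products, and by Proposition~\ref{Prop: the C* algebra of gamma-sequences is commutative} each product is again (asymptotically) of the averaged-block form, so a finite linear combination is too (the weighted symmetrizations and placements just add up). Hence the inequality~\eqref{Eq: inequality for convergence of a sequence} still holds — the constants $C_1,C_2$ now depending on all the data — and Lemma~\ref{Lem: inequality for convergence of a sequence} applies verbatim. Finally I would pass to the closure: if $[a_N]_N \in [B]^\infty_\gamma$, then for every $\varepsilon>0$ pick $[a'_N]_N \in [\dot B]^\infty_\gamma$ with $\|a_N - a'_N\|_N < \varepsilon$ for $N$ large; since $(\|a'_N\|_N)_N$ converges, $\bigl|\,\|a_N\|_N - \|a'_N\|_N\,\bigr| < \varepsilon$ forces $\limsup_N \|a_N\|_N - \liminf_N \|a_N\|_N \le 2\varepsilon$, whence $(\|a_N\|_N)_N \in C(\overline{\mathbb{N}})$. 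The identity $\|[a_N]_N\|_{[B]^\infty_\gamma} = \lim_N \|a_N\|_N$ then follows from \eqref{Eq: norm of the quotient algebra}, since $\limsup = \lim$.

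**Main obstacle.** The delicate step is the lower bound $\alpha_N \ge \alpha_K + C_1/K + C_2\,K/N$ for basic elements — specifically, producing a \emph{good lower bound} on $\|a_N\|_N$ (upper bounds are easy from sub-multiplicativity and Remark~\ref{Rmk: gamma-sequences are bounded; comparison with symmetric-sequences}). For $\pi$-sequences this is trivial because of exact consistency $\pi^K_{N}\circ\pi^M_K = \pi^M_{N}$ and the fact that $S_N$ is a norm-one projection compatible with the inclusions; for $\gamma$-sequences there is no exact consistency, only the asymptotic version, so one must carefully track how a near-optimal vector for $a_K$ (in a representation of $B^K$) can be promoted to a test vector for $a_N$ of almost the same norm. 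I expect this to require working in a concrete (e.g.\ GNS or infinite-tensor-product) representation, using the block structure to ``tensor in'' $N-K$ identities and re-averaging over the cyclic group, with the mismatch between the $\mathbb{Z}/K$ and $\mathbb{Z}/N$ averages contributing precisely the $O(K/N)$ term. The symmetry of the inequality~\eqref{Eq: inequality for convergence of a sequence} in the roles of the two error terms is exactly what Lemma~\ref{Lem: inequality for convergence of a sequence} is built to exploit, so getting the two error terms in the stated form — one depending only on $K$, one on the ratio $K/N$ — is the crux.
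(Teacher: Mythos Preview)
Your overall architecture is right: reduce to Lemma~\ref{Lem: inequality for convergence of a sequence} for elements of $[\dot B]^\infty_\gamma$, then pass to the closure by the $\varepsilon/3$-argument (this last step matches the paper exactly). The gap is precisely where you flag it, but your proposed mechanism for the lower bound does not work.

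The map $\overline{\gamma}^K_N$ is norm non-increasing, so asymptotic consistency $\overline{\gamma}^K_N a_K = a_N + O(1/K)$ yields $\|a_N\|_N \le \|a_K\|_K + O(1/K)$, which is the \emph{wrong} direction for Lemma~\ref{Lem: inequality for convergence of a sequence}. (For a single $\gamma$-sequence this upper bound alone already forces $\limsup\le\liminf$, so one could bypass the Lemma there; but this shortcut collapses for products.) For a product, the canonical representative $b_N$ of Proposition~\ref{Prop: the C* algebra of gamma-sequences is commutative} involves \emph{all} gap sizes $j_2\in\{0,\dots,N-|M|_\ell\}$ between the blocks, whereas $\overline{\gamma}^K_N b_K$ only produces gaps $\le K-|M|_\ell$; the missing $\sim N-K$ terms do not cancel, so $\|\overline{\gamma}^K_N b_K - b_N\|_N$ is of order $1$, not $o(1)$. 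Equivalently, $\overline{\gamma}^K_N$ is not (even approximately) multiplicative. Your fallback idea of promoting a near-optimal vector by ``tensoring in $N-K$ identities'' fails for the same reason: after averaging over $\mathbb{Z}/N$, only a fraction $\sim K/N$ of the cyclic translates see the nontrivial part of the test vector, so you recover $\tfrac{K}{N}\,\omega_K(a_K)$ rather than $\omega_K(a_K)+O(K/N)$.

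The paper's device, which you are missing, is \emph{periodization on the state side}. Given $\omega_K\in S(B^K)$, write it as a finite combination of product states $\eta_{p_1}\otimes\cdots\otimes\eta_{p_K}$ and set, for $N=r+qK$,
\[
\omega_{K,N}:=\tau^r\otimes(\eta_{p_1}\otimes\cdots\otimes\eta_{p_K})^q\in S(B^N)\,.
\]
Because the test state is $K$-periodic, each of the $q\approx N/K$ periods contributes equally when evaluating $\overline{\gamma}_N(\cdots)$, and one obtains $\omega_{K,N}(a_N)=\omega_K(a_K)+O(1/K)+O(K/N)$; taking the supremum over $\omega_K$ gives the genuine lower bound $\|a_N\|_N\ge\|a_K\|_K-C_1/K-C_2K/N$ required by Lemma~\ref{Lem: inequality for convergence of a sequence}. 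The $O(1/K)$ term, absent for a single $\gamma$-sequence, arises in the product case from the $j_2$'s for which a block straddles the boundary between two consecutive periods.
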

\begin{proof}
	To begin with we prove the claim for $[a_N]_N=[\overline{\gamma}^M_Na_M]_N$.
	We will then move to $[a_N]_N\in[\dot{B}]^\infty_\gamma$ eventually discussing $[a_N]_N\in [B]^\infty_\gamma$.
	
	\begin{description}
		\item[$\boxed{\overline{\gamma}^M_Na_M}$]
		Let $a_M\in B^M$, $M\in\mathbb{N}$, and let us consider $[\overline{\gamma}^M_Na_M]_N$.
		Let $N,K\in\mathbb{N}$, $N\geq K\geq M$ and consider $\omega_K\in S(B^K)$.
		We decompose $\omega_K$ in a finite convex combination of product states
		\begin{align*}
			\omega_K=\sum_{p_1,\ldots,p_K}\omega_K^{p_1\ldots p_K}
			\eta_{p_1}\otimes\ldots\otimes\eta_{p_K}\,,
		\end{align*}
		where $\eta_{p_\ell}\in S(B)$ for all $\ell=1,\ldots,K$.
		We then consider
		\begin{align*}
			\omega_{K,N}:=\sum_{p_1,\ldots,p_K}\omega_K^{p_1\ldots p_K}
			\tau^r\otimes
			(\eta_{p_1}\otimes\ldots\otimes\eta_{p_K})^q\in S(B^N)\,,
		\end{align*}
		where $N=r+qK$, $q\in\mathbb{N}$ and $r\in\{0,\ldots,K-1\}$ while $\tau\in S(B)$ is normalized the trace state.
		We consider
		\begin{align*}
			\omega_{K,N}(\overline{\gamma}^M_Na_M)
			=\omega_{K,N}
			\bigg(\frac{1}{N}\sum_{j=0}^{N-1}I^{N-M-j}\otimes a_M\otimes I^j\bigg)\,.
		\end{align*}
		By direct inspection we have that, for all $\ell\in\{0,\ldots,K-1\}$,
		\begin{multline*}
			\frac{1}{N}\bigg[\sum_{p_1,\ldots,p_K}\omega_K^{p_1\ldots p_K}
			\tau^r\otimes
			(\eta_{p_1}\otimes\ldots\otimes\eta_{p_K})^q\bigg]
			\bigg(I^{N-M-\ell}\otimes a_M\otimes I^\ell\bigg)
			\\=\frac{1}{N}\omega_K(\gamma_K^\ell (I^{K-M}\otimes a_M))\,.
		\end{multline*}
		The same contribution arises if $j\leq N-r-M=qK-M$ and $j=\ell\mod K$.
		The number of such $j$'s is roughly
		\begin{align*}
			q-M/K=N/K-r/K-M/K=N/K+O(1)\,,
		\end{align*}
		where the $O(1)$ contribution is bounded both in $N$ and in $K$.
		The net result is
		\begin{multline*}
			\omega_{K,N}(\overline{\gamma}^M_Na_M)
			=\sum_{\ell=0}^{K-1}\bigg(\frac{N}{K}+O(1)\bigg)\frac{1}{N}\omega_K(\gamma_K^\ell (I^{K-M}\otimes a_M))
			\\+\frac{1}{N}\sum_{j=N-r-M+1}^{N-1}
			\omega_{K,N}\big(\overline{\gamma}_N^j(I^{N-M}\otimes a_M)\big)
			\\=\omega_K(\overline{\gamma}_K^Ma_M)
			+O(K/N)\,,
		\end{multline*}
		where we observed that the sum over $j\in[N-r-M+1,N-1]$ contains at most $r+M-1=O(K)$ terms each of which is bounded by $\|a_M\|_M$.
		Overall we find
		\begin{align*}
			\|\overline{\gamma}^M_Na_M\|_N
			\geq|\omega_{K,N}(\overline{\gamma}^M_Na_M)|
			&=\left|\omega_K(\overline{\gamma}^M_Ka_M)+C\frac{K}{N}\right|
			\geq|\omega_K(\overline{\gamma}^M_Ka_M)|
			-C\frac{K}{N}\,,
		\end{align*}
		where $C>0$ depends on $M$ but not on $N$ or $K$.
		The arbitrariness of $\omega_K\in S(B^K)$ leads to
		\begin{align*}
			\|\overline{\gamma}^M_Na_M\|_N
			\geq\|\overline{\gamma}^M_Ka_M\|_K
			-C\frac{K}{N}\,.
		\end{align*}
		Thus, Lemma \ref{Lem: inequality for convergence of a sequence} applies to the sequence $(\|\overline{\gamma}^M_Na_M\|_N)_N$ proving that $\lim\limits_{N\to\infty}\|\overline{\gamma}^M_Na_M\|_N$ exists.
		
		\item[$\boxed{[\dot{B}]^\infty_\gamma}$]
		We now consider an arbitrary element $[a_N]_N$.
		Although our proof works for an arbitrary element of $[a_N]_N\in[\dot{B}]^\infty_\gamma$, for the sake of (notational) simplicity we restrict ourself to the case
		\begin{align*}
			[a_N]_N
			&=\bigg[\sum_{k_1,k_2}c^{k_1k_2}
			\overline{\gamma}^{M(k_1)}_N(a_{(k_1)})
			\overline{\gamma}^{M(k_2)}_N(a_{(k_2)})\bigg]_N\,,
		\end{align*}
		where the sum over $k_1,k_2$ is finite.
		To prove that $(\|a_N\|_N)_N$ has a limit as $N\to\infty$ we rely on Equation \eqref{Eq: equivalence class of product of gamma sequences} together with an argument similar in spirit to the one used for the case of a single $\gamma$-sequence.
		In fact, Proposition \ref{Prop: the C* algebra of gamma-sequences is commutative} implies that
		\begin{multline*}
			\bigg\|\sum_{k_1,k_2}c^{k_1k_2}
			\overline{\gamma}^{M(k_1)}_N(a_{(k_1)})
			\overline{\gamma}^{M(k_2)}_N(a_{(k_2)})\bigg\|_N
			\\=\bigg\|\sum_{k_1,k_2}c^{k_1k_2}
			\frac{1}{N}\sum_{j_1+j_2=N-M(k_1)-M(k_2)}
			\overline{\gamma}_N\left(
			I^{j_1}\otimes a_{(k_1)}\otimes
			I^{j_2}\otimes a_{(k_2)}
			\right)\bigg\|_N
			+O(1/N)\,,
		\end{multline*}
		so that we may restrict to the first factor on the right-hand side.
		
		As for the case of single $\gamma$-sequence let $N,K\in\mathbb{N}$ be such that $N\geq K\geq \max\limits_{k_1,k_2}\{2(M(k_1)+M(k_2))\}$ where the maximum is taken over all pairs $k_1,k_2\in\mathbb{N}$ appearing in the sum defining $[a_N]_N$.
		We consider $\omega_K\in S(B^K)$ and, as above, we set
		\begin{align*}
			\omega_{N,K}:=\sum_{p_1,\ldots,p_K}\omega_K^{p_1\ldots p_K}
			\tau^r\otimes(\eta_{p_1}\otimes\ldots\otimes\eta_{p_K})^q
			\in S(B^N)\,,
		\end{align*}
		where $N=r+qK$, $q\in\mathbb{N}$ and $r\in\{0,\ldots,K-1\}$ while $\omega_K=\sum\limits_{p_1,\ldots,p_K}\omega_K^{p_1\ldots p_K}\eta_{p_1}\otimes\ldots\otimes\eta_{p_K}$ is an arbitrary finite convex decomposition of $\omega_K$ into product states.
		We then evaluate
		\begin{multline*}
			\omega_{K,N}\bigg(
			\sum_{k_1,k_2}c^{k_1k_2}
			\frac{1}{N}\sum_{j_1+j_2=N-M(k_1)-M(k_2)}
			\overline{\gamma}_N\big(
			I^{j_1}\otimes a_{(k_1)}\otimes
			I^{j_2}\otimes a_{(k_2)}
			\big)
			\bigg)
			\\=\sum_{k_1,k_2}c^{k_1k_2}
			\frac{1}{N}\sum_{j_1+j_2=N-M(k_1)-M(k_2)}
			\omega_{K,N}\left(\overline{\gamma}_N\big(
			I^{j_1}\otimes a_{(k_1)}\otimes
			I^{j_2}\otimes a_{(k_2)}
			\big)\right)\,.
		\end{multline*}
		To this avail we fix $k_1,k_2$ and split the sum over $j_2$ in two cases:
		\begin{enumerate}[(a)]
			\item
			Let consider the sum for $0\leq j_2\leq N-M(k_1)-M(k_2)-r$.
			For $0\leq \ell\leq K-M(k_1)-M(k_2)$ we find, with the same argument used for a single $\gamma$-sequence,
			\begin{multline*}
				\omega_{K,N}\bigg(\overline{\gamma}_N\big(
				I^{N-M(k_1)-M(k_2)-\ell}\otimes a_{(k_1)}\otimes
				I^{\ell}\otimes a_{(k_2)}\big)\bigg)
				\\=\omega_K\bigg(
				\overline{\gamma}_K(I^{K-M(k_1)-M(k_2)-\ell}\otimes a_{(k_1)}
				\otimes I^{\ell}\otimes a_{(k_2)})\bigg)
				+O(K/N)\,,
			\end{multline*}
			The number of $j_2$'s such that $0\leq j_2\leq N-M(k_1)-M(k_2)-r$ and $j_2=\ell\mod K$ is roughly $q=N/K+O(1)$, therefore, summing over such $j_2$'s leads to a contribution of
			\begin{multline*}
				\omega_{K,N}\bigg(
				\frac{1}{N}
				\sum_{\substack{j_1+j_2=N-M(k_1)-M(k_2) \\ 0\leq j_2\leq N-r-M(k_1)-M(k_2)\\ j_2\leq K-M(k_1)-M(k_2) \mod K}}
				\overline{\gamma}_N\big(
				I^{j_1}\otimes a_{(k_1)}\otimes
				I^{j_2}\otimes a_{(k_2)}
				\big)
				\bigg)
				\\=\omega_K\bigg(
				\frac{1}{K}\sum_{j_1+j_2=K-M(k_1)-M(k_2)}
				\overline{\gamma}_K(I^{j_1}\otimes a_{(k_1)}
				\otimes I^{j_2}\otimes a_{(k_2)})
				\bigg)
				+O(K/N)\,.
			\end{multline*}
			It remains to discuss the sum over $0\leq j_2\leq N-M(k_1)-M(k_2)-r$ with $j_2\in[K-M(k_1)-M(k_2),K-1]\mod K$: In this case we find
			\begin{multline*}
				\bigg|\omega_{K,N}\bigg(
				\frac{1}{N}\sum_{\substack{j_1+j_2=N-M(k_1)-M(k_2)\\0\leq j_2\leq N-M(k_1)-M(k_2)-r\\K-M(k_1)-M(k_2)\leq j_2\leq K-1\mod K}}
				\overline{\gamma}_N\big(I^{j_1}\otimes a_{(k_1)}\otimes I^{j_2}\otimes a_{(k_2)}\big)
				\bigg)\bigg|
				\\\leq\frac{1}{N}
				\sum_{\substack{j_1+j_2=N-M(k_1)-M(k_2)\\0\leq j_2\leq N-M(k_1)-M(k_2)-r\\K-M(k_1)-M(k_2)\leq j_2\leq K-1\mod K}}
				\|a_{(k_1)}\|_{M(k_1)}\|a_{(k_2)}\|_{M(k_2)}
				\\\leq\frac{1}{N}\bigg(\frac{N}{K}+O(1)\bigg)
				(M(k_1)+M(k_2)-1)
				\|a_{(k_1)}\|_{M(k_1)}\|a_{(k_2)}\|_{M(k_2)}
				=O(1/K)\,,
			\end{multline*}
			where we observed that, for each of the $M(k_1)+M(k_2)-1$ values of $\ell\in[K-M(k_1)-M(k_2),K-1]$, there are $q=N/K+O(1)$ values of $j_2\leq N-M(k_1)-M(K_2)-r$ such that $j_2=\ell\mod K$.
			Loosely speaking these contributions arise when $j_2$ is such that "$a_{(k_2)}$ overlaps with the (translated) position of $a_{(k_1)}$".
			This does not allow to reconstruct $\omega_K$, therefore, these cases are estimated by $O(1/K)$.
			
			\item
			If $j_2\in[N-M(k_1)-M(k_2)-r+1,N-M(k_1)-M(k_2)]$ ---which is empty if $r=0$--- we have
			\begin{multline*}
				\omega_{K,N}\bigg(
				\frac{1}{N}
				\sum_{\substack{j_1+j_2=N-M(k_1)-M(k_2)\\ N-M(k_1)-M(k_2)-r\leq j_2\leq N-M(k_1)-M(k_2)}}
				\overline{\gamma}_N\big(
				I^{j_1}\otimes a_{(k_1)}\otimes
				I^{j_2}\otimes a_{(k_2)}
				\big)\bigg)
				\\=O(K/N)\,.
			\end{multline*}
		\end{enumerate}
		
		Recollecting our result we have
		\begin{multline*}
			\bigg|\omega_{K,N}\bigg(
			\sum_{k_1,k_2}c^{k_1k_2}
			\frac{1}{N}\sum_{j_1+j_2=N-M(k_1)-M(k_2)}
			\overline{\gamma}_N\big(
			I^{j_1}\otimes a_{(k_1)}\otimes
			I^{j_2}\otimes a_{(k_2)}
			\big)\bigg)\bigg|
			\\=\bigg|\omega_K\bigg(
			\sum_{k_1,k_2}c^{k_1k_2}
			\frac{1}{K}\sum_{j_1+j_2=K-M(k_1)-M(k_2)}
			\overline{\gamma}_K(I^{j_1}\otimes a_{(k_1)}
			\otimes I^{j_2}\otimes a_{(k_2)})
			\bigg)
			+O(1/K)
			+O(K/N)\bigg|
			\\=\bigg|\omega_K\bigg(
			\sum_{k_1,k_2}c^{k_1k_2}
			\frac{1}{K}\sum_{j_1+j_2=K-M(k_1)-M(k_2)}
			\overline{\gamma}_K(I^{j_1}\otimes a_{(k_1)}
			\otimes I^{j_2}\otimes a_{(k_2)})
			\bigg)\bigg|
			-\frac{C_1}{K}
			-C_2\frac{K}{N}\,.
		\end{multline*}
		where $C_1,C_2>0$ do not depend neither on $N$ nor on $K$.
		The arbitrariness of $\omega_K\in S(B^K)$ leads to
		\begin{multline*}
			\bigg\|\sum_{k_1,k_2}\frac{1}{N}\sum_{j_1+j_2=N-M(k_1)-M(k_2)}
			\overline{\gamma}_N\bigg(
			I^{j_1}\otimes a_{(k_1)}\otimes
			I^{j_2}\otimes a_{(k_2)}
			\bigg)\bigg\|_N
			\\\geq\bigg\|\sum_{k_1,k_2}\frac{1}{K}\sum_{j_1+j_2=K-M(k_1)-M(k_2)}
			\overline{\gamma}_K\bigg(
			I^{j_1}\otimes a_{(k_1)}\otimes
			I^{j_2}\otimes a_{(k_2)}
			\bigg)\bigg\|_K-\frac{C_1}{K}-C_2\frac{K}{N}\,.
		\end{multline*}
		Thus, Lemma \ref{Lem: inequality for convergence of a sequence} applies and the limit
		\begin{align*}
			\lim_{N\to\infty}\|\overline{\gamma}_N^{M(k_1)}(a_{(k_1)})
			\overline{\gamma}_N^{M(k_2)}(a_{(k_2)})\|_N\,,
		\end{align*}
		exists and it is finite.
		
		\item[$\boxed{[B]^\infty_\gamma}$]
		Finally, let $[a_N]_N\in [B]^\infty_\gamma$.
		Then, for all $\varepsilon>0$ there exists $N_\varepsilon\in\mathbb{N}$ and $[a_N']_N\in[\dot{B}]^\infty_\gamma$ such that
		\begin{align*}
			\|a_N-a_N'\|_N<\varepsilon
			\qquad\forall N\geq N_\varepsilon\,.
		\end{align*}
		Moreover, since $(\|a_N'\|_N)_N$ is convergent, there exists $N_\varepsilon'\in\mathbb{N}$ such that
		\begin{align*}
			\left|\|a_N'\|_N-\|a_M'\|_M\right|<\varepsilon
			\qquad\forall N,M\geq N_\varepsilon'\,.
		\end{align*}
		For $N,M\geq \max\{N_\varepsilon,N_\varepsilon'\}$ we then have
		\begin{align*}
			\left|\|a_N\|_N-\|a_M\|_M\right|
			&\leq\left|\|a_N\|_N-\|a_N'\|_N\right|
			+\left|\|a_N'\|_N-\|a_M'\|_M\right|
			+\left|\|a_M'\|_M-\|a_M\|_M\right|
			\\&\leq\|a_N-a_N'\|_N
			+\left|\|a_N'\|_N-\|a_M'\|_M\right|
			+\|a_M'-a_M\|_M
			\leq 3\varepsilon\,,
		\end{align*}
		proving that $(\|a_N\|_N)_N$ is a Cauchy sequence.
	\end{description}
\end{proof}

\begin{remark}
	The result of Proposition \ref{Prop: elements in the algebra of gamma-sequences norm have limit} applies also for $\pi$-sequences.
	For this latter case the proof streamlines because
	\begin{align*}
		\|\pi^M_Na_M\|_N
		=\|\pi_N^K\pi^M_Ka_M\|_N
		\leq\|\pi^M_Ka_M\|_K\,,
	\end{align*}
	so that $(\|\pi^M_Na_M\|)_N$ is decreasing.
	The difficulties in moving from $[B]^\infty_\pi$ to $[B]^\infty_\gamma$ is twofold.
	On the one hand, for $\gamma$-sequences $\|\overline{\gamma}^M_Na_M\|_N$ is not decreasing, although it fulfils a similar properties asymptotically.
	On the other hand, the product of $\gamma$-sequences is not a $\gamma$-sequence, even when equivalence classes are considered.
	This requires a different strategy to ensure the existence of the limit $\lim\limits_{N\to\infty}\|a_N\|_N$ for $[a_N]_N\in [B]^\infty_\gamma$.
\end{remark}

The following proposition proves the existence of the continuous bundle of $C^*$-algebras of interest.
\begin{proposition}\label{Prop: continuous bundle of Cstar algebra for gamma-sequences}
	Let $\{B^N_\gamma\}_{N\in\mathbb{N}}$ be the family of $C^*$-algebras introduced in Equation \eqref{Eq: gamma invariant subalgebra}.
	Let $\{[B]^N_\gamma\}_{N\in\overline{\mathbb{N}}}$ be defined by $[B]^N_\gamma:=B^N_\gamma$ for $N\in\mathbb{N}$ while $[B]^\infty_\gamma$ is the $C^*$-algebra generated by equivalence classes of $\gamma$-sequences, \textit{cf.} Definition \ref{Def: C*-algebra of gamma-sequences}.
	Let $[\dot{B}]_\gamma\subset\prod_{N\in\overline{\mathbb{N}}}[B]^N_\gamma$ be the subset defined by
	\begin{align}\label{Eq: a posteriori continuous sections of the gamma-bundle}
		[\dot{B}]_\gamma:=\bigg\lbrace (A_N)_{N\in\overline{\mathbb{N}}}\in\prod_{N\in\overline{\mathbb{N}}}[B]^N_\gamma\,|\,
		\exists (a_N)_N\in \dot{B}^\infty_\gamma\colon
		A_N=\begin{cases}
			a_N& N\in\mathbb{N}
			\\
			[a_N]_N& N=\infty
		\end{cases}
		\bigg\rbrace\,.
	\end{align}
	Then $[\dot{B}]_\gamma$ fulfils conditions \ref{Item: tildeA is pointwise dense}-\ref{Item: tildeA is a star-algebra}-\ref{Item: tildeA fulfils the Rieffel condition} and thus it leads to a continuous bundle of $C^*$-algebras
	\begin{align}\label{Eq: gamma-bundle}
		[B]_\gamma:=\bigg\lbrace (A_N)_{N\in\overline{\mathbb{N}}}\in\prod_{N\in\overline{\mathbb{N}}}[B]^N_\gamma\,|\,
		\forall\varepsilon>0\;\exists N_{\varepsilon}\in\mathbb{N}\,,\,
		\exists A'\in[\dot{B}]_\gamma\colon
		\|A_N-A'_N\|_N<\varepsilon\;\forall N\geq N_\varepsilon
		\bigg\rbrace\,.
	\end{align}
\end{proposition}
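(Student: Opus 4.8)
The plan is to verify that the subset $[\dot{B}]_\gamma$ defined in \eqref{Eq: a posteriori continuous sections of the gamma-bundle} satisfies the three conditions \ref{Item: tildeA is pointwise dense}-\ref{Item: tildeA is a star-algebra}-\ref{Item: tildeA fulfils the Rieffel condition} of Remark \ref{Rmk: sufficient condition for continuous bundle of Cstar algebras}; once this is done, the general construction recalled in that remark immediately yields that $[B]_\gamma$ as in \eqref{Eq: gamma-bundle} is a continuous bundle of $C^*$-algebras over $\overline{\mathbb{N}}$, indeed the smallest one containing $[\dot{B}]_\gamma$.

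For condition \ref{Item: tildeA is pointwise dense} (pointwise density), the case $N=\infty$ is nothing but Definition \ref{Def: C*-algebra of gamma-sequences}: by construction $[B]^\infty_\gamma=\overline{[\dot{B}]^\infty_\gamma}$ is the closure of $\{[a_N]_N\,|\,(a_N)_N\in\dot{B}^\infty_\gamma\}=\{A_\infty\,|\,A\in[\dot{B}]_\gamma\}$. For finite $N$ I expect to prove the stronger statement that $\{A_N\,|\,A\in[\dot{B}]_\gamma\}=B^N_\gamma$. Indeed, on the one hand every $(a_N)_N\in\dot{B}^\infty_\gamma$ has $N$-th term in $B^N_\gamma$, since $B^N_\gamma$ is a $C^*$-subalgebra of $B^N$ containing each $\overline{\gamma}^M_N(a_M)$ and is therefore stable under the finite sums and products used to generate $\dot{B}^\infty_\gamma$. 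On the other hand, given $b\in B^N_\gamma=\overline{\gamma}_N(B^N)$, the $\gamma$-sequence generated by $b$ viewed as an element of $B^N$ (\textit{cf.} Definition \ref{Def: gamma-sequence} with $M=N$) lies in $\dot{B}^\infty_\gamma$ and its $N$-th term equals $\overline{\gamma}_N(I^0\otimes b)=\overline{\gamma}_N(b)=b$.

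For condition \ref{Item: tildeA is a star-algebra}, the key observation is that the assignment $(a_N)_N\mapsto\big((a_N)_{N\in\mathbb{N}},[a_N]_N\big)$ is a $*$-homomorphism from $\dot{B}^\infty_\gamma$ into $\prod_{N\in\overline{\mathbb{N}}}[B]^N_\gamma$: at finite $N$ it is the identity, while passing to the $\sim$-equivalence class at $N=\infty$ is a $*$-homomorphism because $\bigoplus_{N\in\mathbb{N}}B^N$ is a two-sided $*$-ideal of $\prod_{N\in\mathbb{N}}B^N$ (\textit{cf.} \eqref{Eq: direct C*-sum}). Since $\dot{B}^\infty_\gamma$ is a $*$-algebra, its image $[\dot{B}]_\gamma$ is a $*$-subalgebra of $\prod_{N\in\overline{\mathbb{N}}}[B]^N_\gamma$; all sections involved are bounded thanks to Remark \ref{Rmk: gamma-sequences are bounded; comparison with symmetric-sequences}-\ref{Item: gamma-sequences are bounded} together with the stability of bounded sequences under finite sums and products.

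Condition \ref{Item: tildeA fulfils the Rieffel condition} is the only step with genuine content, and it is precisely the output of Proposition \ref{Prop: elements in the algebra of gamma-sequences norm have limit}: for any $A\in[\dot{B}]_\gamma$ coming from $(a_N)_N\in\dot{B}^\infty_\gamma$ one has $A_\infty=[a_N]_N\in[\dot{B}]^\infty_\gamma\subseteq[B]^\infty_\gamma$, hence $(\|a_N\|_N)_N$ converges with $\lim_{N\to\infty}\|a_N\|_N=\|[a_N]_N\|_{[B]^\infty_\gamma}=\|A_\infty\|_{[B]^\infty_\gamma}$, i.e. $(\|A_N\|_N)_{N\in\overline{\mathbb{N}}}\in C(\overline{\mathbb{N}})$. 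With the three conditions in hand, Remark \ref{Rmk: sufficient condition for continuous bundle of Cstar algebras} applies verbatim and the claim follows. I expect no obstacle beyond correctly invoking Proposition \ref{Prop: elements in the algebra of gamma-sequences norm have limit}; the remaining steps are routine verifications that the pointwise algebraic operations on $\prod_{N\in\overline{\mathbb{N}}}[B]^N_\gamma$ restrict to $[\dot{B}]_\gamma$ and that the evaluation at $N=\infty$ is compatible with them.
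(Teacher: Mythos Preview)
Your proposal is correct and follows essentially the same approach as the paper: verify the three conditions of Remark \ref{Rmk: sufficient condition for continuous bundle of Cstar algebras}, using for pointwise density at finite $N$ the $\gamma$-sequence generated by a given $b\in B^N_\gamma$ (so that its $N$-th term is $\overline{\gamma}_N b=b$), and invoking Proposition \ref{Prop: elements in the algebra of gamma-sequences norm have limit} for the Rieffel condition. Your write-up is in fact slightly more detailed than the paper's, e.g.\ in justifying that $[\dot{B}]_\gamma$ is a $*$-algebra via the $*$-homomorphism $(a_N)_N\mapsto\big((a_N)_N,[a_N]_N\big)$.
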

\begin{proof}
    We will prove conditions \ref{Item: tildeA is pointwise dense}-\ref{Item: tildeA is a star-algebra}-\ref{Item: tildeA fulfils the Rieffel condition} of Remark \ref{Rmk: sufficient condition for continuous bundle of Cstar algebras}.
	The space $[\dot{B}]_\gamma$ is a $*$-algebra, therefore, condition \ref{Item: tildeA is a star-algebra} is fulfilled.
	Concerning condition \ref{Item: tildeA is pointwise dense}, we have to prove that
	\begin{align*}
		Z_M:=\{A_M\in[B]^M_\gamma\,|\,(A_N)_{N\in\overline{\mathbb{N}}}\in[\dot{B}]_\gamma\}
		\subseteq[B]^M_\gamma\,,
	\end{align*}
	is dense in $[B]^M_\gamma$ for all $M\in\overline{\mathbb{N}}$.
	For $M\in\mathbb{N}$ it is enough to observe that, for all $a_M\in [B]^M_\gamma=B^M_\gamma$, we may consider $(A_N)_{N\in\overline{\mathbb{N}}}\in [\dot{B}]_\gamma$ defined by
	\begin{align*}
		A_N=\begin{cases}
			\overline{\gamma}^M_Na_M& N\in\mathbb{N}
			\\
			[\overline{\gamma}_N^Ma_M]_N& N=\infty
		\end{cases}\,,
	\end{align*}
	which leads to $A_M=\overline{\gamma}_Ma_M=a_M$, \textit{i.e.} $Z_M=[B]^M_\gamma$.
	If $M=\infty$ we have $Z_\infty=[\dot{B}]^\infty_\gamma$ whose closure is per definition $[B]^\infty_\gamma$ ---\textit{cf.} Definition \ref{Def: C*-algebra of gamma-sequences}. 
	
	Finally condition \ref{Item: tildeA fulfils the Rieffel condition} is equivalent to
	\begin{align*}
		\lim_{N\to\infty}\|A_N\|_N
		=\lim_{N\to\infty}\|a_N\|_N
		=\|[a_N]_N\|_{[B]^\infty_\gamma}
		=\|A_\infty\|_{[B]^\infty_\gamma}
		\qquad\forall (A_N)_{N\in\overline{\mathbb{N}}}\in[\dot{B}]_\gamma\,,
	\end{align*}
	where the existence of $\lim\limits_{N\to\infty}\|a_N\|_N$ is ensured by Proposition \ref{Prop: elements in the algebra of gamma-sequences norm have limit}.
\end{proof}

\subsection{Canonical representative of $[a_N]_N\in[\dot{B}]^\infty_\gamma$}
\label{Subsec: canonical representative of dense elements in the algebra of gamma-sequence}

To proceed further in the construction of the deformation quantization of $[B]^\infty_\gamma$ we have to discuss the possibility of identifying a canonical representative of an element $[a_N]_N\in[\dot{B}]^\infty_\gamma$.
This is required for both endowing $[B]^\infty_\gamma$ with a Poisson structure as well as for defining the quantization maps $Q_N\colon[\dot{B}]^\infty_\gamma\to [B]^N_\gamma$ ---\textit{cf.} Theorem \ref{Thm: deformation quantization of the algebra of gamma sequences}.

To begin with, we address the following problem: Given $[\overline{\gamma}^M_Na_M]_N\in[\dot{B}]^\infty_\gamma$ does it hold
\begin{align*}
	[\overline{\gamma}_N^Ma_M]_N=[0]_N
	\Longleftrightarrow
	a_M=0\;?
\end{align*}
A positive answer in this direction would imply that, given an equivalence class $[\overline{\gamma}^M_Na_M]_N$, one is able to determine uniquely the $\gamma$-sequence $(\overline{\gamma}^M_Na_M)_N$.
Unfortunately, the answer to this question is negative because
\begin{align*}
	[\overline{\gamma}_N^M a_M]_N
	=[\overline{\gamma}_N^{M+K}(I^K\otimes a_M)]_N
	=[\overline{\gamma}_N^{M+K}(a_M\otimes I^K)]_N\,,
\end{align*}
although the associated sequences are not the same.
Indeed
\begin{align*}
	\overline{\gamma}_M^{M+K}(I^K\otimes a_M)=0
	\neq\overline{\gamma}_Ma_M=\overline{\gamma}^M_Ma_M\,.
\end{align*}
This counterexample suggests to focus on the $C^*$-subalgebra $\widetilde{B}^M$ where $\widetilde{B}=\ker\tau$, $\tau\in S(B)$ being the trace state ---\textit{cf.} Section \ref{Sec: The algebra of gamma-sequences}.
In fact, therein the situation is slightly better as shown by the following Lemma.

\begin{lemma}\label{Lem: Btilde gamma-sequences retain information on the initial value}
	Let $\widetilde{a}_M\in\widetilde{B}^M$ be such that $[\overline{\gamma}^M_N\widetilde{a}_M]_N=[0]_N$.
	Then $\widetilde{a}_M=0$.
\end{lemma}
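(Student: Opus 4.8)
The strategy is to test the vanishing equivalence class $[\overline{\gamma}^M_N\widetilde{a}_M]_N=[0]_N$ against a carefully chosen family of product states in $S(B^N)$, exploiting the extra structure of $\widetilde{B}^M=(\ker\tau)^{\otimes M}$. The point is that for a generic product state $\eta^N=\eta^{\otimes N}$ with $\eta\in S(B)$, most of the $N$ translated copies of $\widetilde{a}_M$ inside $\overline{\gamma}^M_N\widetilde{a}_M=\tfrac1N\sum_{j=0}^{N-1}\gamma_N^j(I^{N-M}\otimes\widetilde{a}_M)$ are evaluated "cleanly" (without wrap-around across the tensor boundary), and each such term contributes $\eta^{\otimes M}(\widetilde a_M)$; there are $N-M+1$ of them, so $\eta^N(\overline{\gamma}^M_N\widetilde{a}_M)=\eta^{\otimes M}(\widetilde a_M)+O(M/N)$. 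Hence $[\overline{\gamma}^M_N\widetilde{a}_M]_N=[0]_N$ forces $\eta^{\otimes M}(\widetilde a_M)=0$ for every $\eta\in S(B)$.

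The next step is to upgrade "$\eta^{\otimes M}(\widetilde a_M)=0$ for all $\eta\in S(B)$" to "$\widetilde a_M=0$". In general a single power $\eta\mapsto\eta^{\otimes M}(a)$ does not separate points of $B^M$, but it does recover the \emph{fully symmetric} part: writing $a=\sum_\alpha a^{(1)}_\alpha\otimes\cdots\otimes a^{(M)}_\alpha$, the map $\eta\mapsto\eta^{\otimes M}(a)$ is a homogeneous polynomial of degree $M$ on the (real) space of states, and by polarization it vanishes identically iff the symmetrization $S_M(a)$ vanishes. So the hypothesis only gives $S_M(\widetilde a_M)=0$, which is weaker than $\widetilde a_M=0$. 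To get the full conclusion one must use \emph{more} states than just the product states $\eta^{\otimes N}$: the natural move is to test against states of the form $\eta_1\otimes\eta_2\otimes\cdots$ with a \emph{periodic} pattern of period $M$, say $\omega_N=(\eta_1\otimes\cdots\otimes\eta_M)^{\otimes q}\otimes\tau^{\otimes r}$ with $N=qM+r$. Then the clean terms in $\overline{\gamma}^M_N\widetilde a_M$ whose support lands exactly on a period block contribute $(\eta_{\sigma(1)}\otimes\cdots\otimes\eta_{\sigma(M)})(\widetilde a_M)$ for each cyclic shift $\sigma$; averaging, $\omega_N(\overline{\gamma}^M_N\widetilde a_M)\to\tfrac1M\sum_{k=0}^{M-1}(\eta_{1+k}\otimes\cdots)(\widetilde a_M)$, i.e. one recovers the \emph{cyclic} symmetrization $\tfrac1M\sum_{k=0}^{M-1}\gamma_M^k(\widetilde a_M)$ evaluated on arbitrary product states $\eta_1\otimes\cdots\otimes\eta_M$. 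Since product states of $B^M$ are total in $S(B^M)$, this yields $\tfrac1M\sum_{k=0}^{M-1}\gamma_M^k(\widetilde a_M)=\overline\gamma_M\widetilde a_M=0$.

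Now the restriction to $\widetilde B^M=(\ker\tau)^{\otimes M}$ does the final work. The averaged operator $\overline\gamma_M$ is the projection onto $\gamma_M$-invariants, so we have shown $\widetilde a_M\in(\ker\overline\gamma_M)\cap\widetilde B^M$; equivalently $\widetilde a_M$ lies in the span of elements $x-\gamma_M(x)$. The claim $\widetilde a_M=0$ is therefore \emph{false} as stated for general $\widetilde a_M$ — so the proof must instead reduce $M$ to a canonical minimal value. This is exactly the mechanism illustrated by the counterexample preceding the lemma: $[\overline\gamma^M_N a_M]_N$ only depends on $a_M$ up to padding with identities, and on $\widetilde B^M$ the padding ambiguity is removed \emph{except} for the cyclic averaging. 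I expect the actual argument the authors give to be: first show (as above) that $[\overline\gamma^M_N\widetilde a_M]_N=[0]_N$ implies $\overline\gamma_M\widetilde a_M=0$ together with $\overline\gamma_{M+k}(I^{\otimes k}\otimes\widetilde a_M)=0$ for all $k\ge 0$; then observe that for $\widetilde a_M\in\widetilde B^M$ the only way \emph{all} of these cyclic averages (at every length $M+k$) vanish simultaneously is $\widetilde a_M=0$, because letting $k$ grow one eventually separates the tensor factors of $\widetilde a_M$ from each other (each $b_j$ has $\tau(b_j)=0$, so a test state supported on disjoint blocks picks out $\eta_1(b_{j_1})\cdots\eta_M(b_{j_M})$ with no cross-terms).

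\textbf{Main obstacle.} The delicate point is precisely the passage from "$\overline\gamma_M\widetilde a_M=0$" (cyclic invariance of the class is not enough) to "$\widetilde a_M=0$": one must genuinely use the freedom to vary both the \emph{length} $N$ and the \emph{periodicity} of the test states, and check that increasing the period beyond $M$ (e.g. to $M+k$ with the $\ker\tau$ condition) collapses the cyclic average to the individual tensor entries. Equivalently, one shows that the family of functionals $\widetilde a_M\mapsto\lim_N\omega_{N}(\overline\gamma^M_N\widetilde a_M)$, as $\omega_N$ ranges over all periodic product states of all periods $\ge M$, separates points of $\widetilde B^M$. The bookkeeping of wrap-around terms (each $O(1/N)$, finitely many per residue class) is routine and parallels the proof of Proposition \ref{Prop: elements in the algebra of gamma-sequences norm have limit}; the conceptual content is the separation statement, which is where the hypothesis $\widetilde a_M\in\widetilde B^M$ rather than $a_M\in B^M$ is essential.
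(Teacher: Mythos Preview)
Your instinct to test against periodic product states and eventually to use periods longer than $M$ is right, and your sketch can be completed into a proof; but the paper's argument is much shorter and bypasses all of the detours you describe. The trick you are missing is to insert a single $\tau$-separator in each period block and to allow a \emph{general} state $\omega_M\in S(B^M)$ (not just products $\eta_1\otimes\cdots\otimes\eta_M$): set $\omega_{M,N}:=\tau^r\otimes(\tau\otimes\omega_M)^q$ with $N=r+q(M+1)$. Because $\widetilde a_M\in(\ker\tau)^{\otimes M}$, every cyclic translate $\gamma_N^j(I^{N-M}\otimes\widetilde a_M)$ that is not perfectly aligned with one of the $\omega_M$-blocks lands at least one tensor factor on a $\tau$-slot and hence contributes zero. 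Exactly the residues $j\equiv 0\pmod{M+1}$ survive, there are $q\approx N/(M+1)$ of them, and one finds $\omega_{M,N}(\overline\gamma^M_N\widetilde a_M)=\tfrac1{M+1}\omega_M(\widetilde a_M)+O(1/N)$. Letting $N\to\infty$ gives $\omega_M(\widetilde a_M)=0$ for every $\omega_M$, hence $\widetilde a_M=0$. That is the whole proof.

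By comparison, your warm-up with $\eta^{\otimes N}$ and then with period $M$ yields only $S_M(\widetilde a_M)=0$ and $\overline\gamma_M\widetilde a_M=0$, which are genuinely weaker and force you into your "main obstacle". That obstacle dissolves the moment you go to period $M+1$ with a $\tau$-separator: there is no need to "let $k$ grow", and even within your own framework, from $\overline\gamma_{M+1}(I\otimes\widetilde a_M)=0$ evaluated on $\tau\otimes\eta_1\otimes\cdots\otimes\eta_M$ you immediately recover $(\eta_1\otimes\cdots\otimes\eta_M)(\widetilde a_M)=0$. Two smaller points: the sentence "the claim $\widetilde a_M=0$ is therefore false as stated" is misleading (the lemma is true; you mean that your argument so far is insufficient); and your closing heuristic about "a test state supported on disjoint blocks picking out $\eta_1(b_{j_1})\cdots\eta_M(b_{j_M})$" does not work as written, since in $\overline\gamma_{M+k}(I^k\otimes\widetilde a_M)$ the element $\widetilde a_M$ always occupies $M$ \emph{consecutive} slots and cannot be separated factor by factor --- the correct mechanism is exactly the $\tau$-separator described above.
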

\begin{proof}
	Per definition $[\overline{\gamma}^M_N\widetilde{a}_M]_N=[0]_N$ if and only if $\lim\limits_{N\to\infty}\|\overline{\gamma}^M_N\widetilde{a}_M\|_N=0$.
	Let $\omega_M\in S(B^M)$ and let $\tau\in S(B)$ be the normalized trace state $\tau(a):=\operatorname{tr}(a)/\kappa$.
	Let $N\geq M+1$, $q\in\mathbb{N}$ and $r\in\{0,\ldots,M\}$ be such that $N=r+q(M+1)$.
	We consider the state
	\begin{align*}
		\omega_{M,N}:=\tau^r\otimes(\tau\otimes\omega_M)^q\in S(B^N)\,.
	\end{align*}
	By direct inspection we find that
	\begin{align}\label{Eq: action of periodic state with trace on a gamma-sequence}
		\omega_{M,N}(\overline{\gamma}_N^M\widetilde{a}_M)
		=[\tau^r\otimes(\tau\otimes\omega_M)^q]\bigg(\frac{1}{N}\sum_{j=0}^{N-1}\gamma_N^j(I^{N-M}\otimes \widetilde{a}_M)\bigg)
		=\frac{1}{M+1}\omega_M(\widetilde{a}_M)
		+O(1/N)\,,
	\end{align}
	Indeed, for $j=0$ the resulting contribution is $\omega_M(\widetilde{a}_M)/N$.
	The same contribution appears when $j=0 \mod M+1$: Since $j\in\{0,\ldots,N-1\}$ this happens $q$ times, moreover, $q=N/(M+1)+O(1)$ leading to the right-hand side of Equation \eqref{Eq: action of periodic state with trace on a gamma-sequence}.
	Whenever $j\neq 0\mod M+1$ the resulting contribution is $0$, on account of the fact that $\tau$ vanishes on $\widetilde{B}$.
	
	Equation \eqref{Eq: action of periodic state with trace on a gamma-sequence} implies that, for all $\omega_M\in S(B^M)$,
	\begin{align*}
		0=\lim_{N\to\infty}\|\overline{\gamma}_N^M\widetilde{a}_M\|
		\geq\frac{1}{M+1}|\omega_M(\widetilde{a}_M)|\,.
	\end{align*}
	The arbitrariness of $\omega_M$ leads to $\|\widetilde{a}_M\|_M=0$, that is, $\widetilde{a}_M=0$.
\end{proof}

Thus, although the equivalence class $[\overline{\gamma}^M_Na_M]_N$ does not identify a unique sequence $(\overline{\gamma}^M_Na_M)_N$, Lemma \ref{Lem: Btilde gamma-sequences retain information on the initial value} suggests that a (a posteriori unique) canonical representative may be extracted by working with the "$\widetilde{B}$-irreducible components" of the $\gamma$-sequence.
To this avail, we introduce the notion of $\widetilde{B}$-irreducibility.
This identifies those elements in $B^M$ which cannot be written as $I\otimes a_{M-1}$ or $a_{M-1}\otimes I$ for some $a_{M-1}\in B^{M-1}$.

\begin{definition}\label{Def: Btilde-irreducible element}
	An element $a_M\in B^M$ is called $\widetilde{B}$-\textbf{irreducible}, and we write $a_M\in B^M_{\textsc{irr}}$, if either $M=0$ or
	\begin{align}\label{Eq: Btilde-irreducible element}
		(\tau\otimes\omega_{M-1})(a_M)
		=(\omega_{M-1}\otimes\tau)(a_M)=0\,,
	\end{align}
	for all $\omega_{M-1}\in S(B^{M-1})$.
\end{definition}
\begin{remark}
	Notice that, per definition, $a_0\in\mathbb{C}$ is $\widetilde{B}$-irreducible, moreover, $B^1_{\textsc{irr}}=\widetilde{B}$, $B^2_{\textsc{irr}}=\widetilde{B}^2$.
	For the sake of completeness, Appendix \ref{App: characterization of tildeB irredubible elements} provides a complete characterization of $B^M_{\textsc{irr}}$ for all $M\in\mathbb{N}$.
\end{remark}

The notion of $\widetilde{B}$-irreducible elements leads to a proper definition of "canonical representative" for a $\gamma$-sequence ---\textit{cf.} Definition \ref{Def: canonical representative of gamma-sequence}.
Indeed, let consider an arbitrary $a_M\in B^M$.
By considering a basis $I,b_1,\ldots, b_{\kappa^2-1}$ of $B$ fulfilling \eqref{Eq: Mk-basis properties} we may decompose $a_M$ as
\begin{multline}\label{Eq: decomposition of BM into part with a different number of identities}
	a_M=a_0I^M
	+\sum_{\substack{j_1+j_2=M-1 \\ k}}c^k_{j_1j_2} I^{j_1}\otimes b_k\otimes I^{j_2}
	+\sum_{\substack{j_1+j_2+j_3=M-2 \\ k_1,k_2}}c^{k_1k_2}_{j_1j_2j_3}
	I^{j_1}\otimes b_{k_1}
	\otimes I^{j_2}\otimes b_{k_2}\otimes I^{j_3}
	\\+\ldots+\sum_{\substack{j_1+\ldots+j_{\ell+1}=M-\ell \\ k_1,\ldots,k_\ell}}c^{k_1\ldots k_\ell}_{j_1\ldots j_{\ell+1}}
	I^{j_1}\otimes b_{k_1}
	\otimes\ldots
	\otimes I^{j_\ell}\otimes b_{k_\ell}\otimes I^{j_{\ell+1}}
	\\+\ldots
	+\sum_{k_1,\ldots,k_M}c^{k_1\ldots k_M}b_{k_1}\otimes\ldots\otimes b_{k_M}\,,
\end{multline}
where $a_0,c^{k_1\ldots k_\ell}_{j_1\ldots j_{\ell+1}}\in\mathbb{C}$ and the sum over $k_1,\ldots,k_\ell$ is finite.
At this stage we observe that $(\overline{\gamma}^M_Na_M)_N=(\overline{\gamma}^M_Na_M')_N$ where $a_M'\in B^M$ is defined by
\begin{align}
	\nonumber
	a_M'&=a_0I^M
	+I^{M-1}\otimes \sum_{\substack{j_1+j_2=M-1 \\ k}}c^k_{j_1j_2}b_k
	+\sum_{\substack{j_1+j_2+j_3=M-2 \\ k_1,k_2}}c^{k_1k_2}_{j_1j_2j_3}
	I^{j_1+j_3}\otimes b_{k_1}
	\otimes I^{j_2}\otimes b_{k_2}
	\\\nonumber
	&+\ldots+\sum_{\substack{j_1+\ldots+j_{\ell+1}=M-\ell\\k_1,\ldots,k_\ell}}c^{k_1\ldots k_\ell}_{j_1\ldots j_{\ell+1}}
	I^{j_1+j_{\ell+1}}\otimes b_{k_1}
	\otimes\ldots
	\otimes I^{j_\ell}\otimes b_{k_\ell}
	\\\nonumber
	&+\ldots
	+\sum_{k_1,\ldots,k_M}c^{k_1\ldots k_M}b_{k_1}\otimes\ldots\otimes b_{k_M}
	\\\label{Eq: all to the left-I representative}
	&=\sum_{j=0}^MI^{M-j}\otimes a'_j
	\in\bigoplus_{j=0}^MI^{M-j}\otimes B^j_{\textsc{irr}}\,.
\end{align}
We stress that some of the $a'_j$'s may vanish in the process.
However, it is important to observe that, moving from $a_M$ to $a_M'$, the $\gamma$-sequence (and thus its equivalence class) does not change.
Notice that, if we replace $a_M$ with $I^K\otimes a_M$ or $a_M\otimes I^K$, the $\widetilde{B}$-irreducible elements $\{a'_j\}_{j=0}^M$ do not change.

\begin{definition}\label{Def: canonical representative of gamma-sequence}
	Let $(\overline{\gamma}^M_Na_M)_N$ be a $\gamma$-sequence and let $\sum_{j=0}^MI^{M-j}\otimes a'_j$ be the element defined as per Equation \eqref{Eq: all to the left-I representative}, where $a'_j\in B^j_{\textsc{irr}}$ for all $j\in\{0,\ldots,M\}$.
	The sequence
	\begin{align*}
		\sum_{j=0}^M(\overline{\gamma}^j_Na'_j)_N
		\in\dot{B}^\infty_\gamma\,,
	\end{align*}
	is called the \textbf{canonical representative} of $[\overline{\gamma}^M_Na_M]_N$.
\end{definition}

\begin{remark}\label{Rmk: the canonical representative is only asymptotically equivalent; it is enough to consider tildeB irreducible gamma-sequences}
	\noindent
	\begin{enumerate}[(i)]
		\item\label{Item: the canonical representative is only asymptotically equivalent}
		It is worth pointing out that, while $(\overline{\gamma}^M_Na_M)_N=(\overline{\gamma}^M_Na_M')_N$ for $a_M'=\sum_{j=0}^MI^{M-j}\otimes a'_j$, for the canonical representative we only have equality of equivalence classes, \textit{i.e.} $[\overline{\gamma}^M_Na_M]_N=\sum_{j=0}^M[\overline{\gamma}^j_Na'_j]_N$.
		In particular we have
		\begin{align}\label{Eq: relation between gamma sequence and its canonical representative}
			(\overline{\gamma}^M_Na_M)_N
			=(\overline{\gamma}^M_Na_M')_N
			=\sum_{j=0}^M(\overline{\gamma}^j_Na_j')_N
			+R_N\,,
		\end{align}
		where $\|R_N\|=O(1/N^\infty)$.
		For example if $a_M=a_0I^M$ then the canonical representative is the constant sequence $a_N=a_0I^N$, $N\in\mathbb{N}$, which coincides with $(\overline{\gamma}^M_Na_M)_N$ only for $N\geq M$.
		
		\item\label{Item: it is enough to consider tildeB irreducible gamma-sequences}
		On account of the previous discussion we observe that the algebra generated by $\gamma$-sequences of the form $(\overline{\gamma}^M_Na_M)_N$ for $a_M\in B^M_{\textsc{irr}}$, $M\in\mathbb{N}$, exhaust the whole space $\dot{B}^\infty_\gamma$.
	\end{enumerate}
\end{remark}

The following proposition shows that the canonical representative introduced in Definition \ref{Def: canonical representative of gamma-sequence} is unique.

\begin{proposition}\label{Prop: the canonical representative of a gamma-sequence is unique}
	Let $M\in\mathbb{N}$ and $a_j\in B^j_{\textsc{irr}}$ for all $j=0,\ldots, M$.
	Then
	\begin{align}\label{Eq: the canonical representative of a gamma-sequence is unique}
		\lim_{N\to\infty}\bigg\|\sum_{j=0}^M\overline{\gamma}^j_Na_j\bigg\|_N=0
		\Longleftrightarrow a_0=0\,,\ldots,a_M=0\,.
	\end{align}
\end{proposition}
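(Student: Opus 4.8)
The implication $\Leftarrow$ is trivial: if $a_0=\ldots=a_M=0$ then $\overline{\gamma}^j_Na_j=0$ for every $N$ and every $j$, so the norm in the statement vanishes identically.

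For $\Rightarrow$ assume $\lim\limits_{N\to\infty}\|\sum_{j=0}^M\overline{\gamma}^j_Na_j\|_N=0$. The plan is to evaluate this sequence on a suitable family of product states on $B^N$, to extract from it a family of scalar identities, and then to determine $a_0,\ldots,a_M$ one at a time. Fix an integer $g\geq M$, put $L:=M+g$, and for $N$ large write $N=r+qL$ with $r\in\{0,\ldots,L-1\}$. Given states $\eta_1,\ldots,\eta_M\in S(B)$ one considers the product state
\begin{equation*}
\omega_N:=\tau^r\otimes\big(\eta_1\otimes\ldots\otimes\eta_M\otimes\tau^g\big)^q\in S(B^N)\,,
\end{equation*}
where $\tau\in S(B)$ is the trace state. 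Since $|\omega_N(\sum_{j=0}^M\overline{\gamma}^j_Na_j)|\leq\|\sum_{j=0}^M\overline{\gamma}^j_Na_j\|_N$, the hypothesis gives $\lim_{N\to\infty}\omega_N(\sum_{j=0}^M\overline{\gamma}^j_Na_j)=0$. The next step is to compute $\omega_N(\overline{\gamma}^j_Na_j)=\frac{1}{N}\sum_{\ell=0}^{N-1}\omega_N\big(\gamma_N^\ell(I^{N-j}\otimes a_j)\big)$ by a counting argument analogous to the one in the proof of Lemma \ref{Lem: Btilde gamma-sequences retain information on the initial value}. Because $a_j\in B^j_{\textsc{irr}}$ has vanishing partial trace over its first and over its last leg, every cyclic placement of $a_j$ meeting one of the blocks $\tau^g$ with an endpoint contributes $0$; moreover, since $g\geq M\geq j$, a placement straddling two $\eta_1\otimes\ldots\otimes\eta_M$ blocks is forced to meet such a $\tau^g$ block with an endpoint. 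Hence the only surviving placements are those contained in a single $\eta_1\otimes\ldots\otimes\eta_M$ block, namely $q$ of them for each offset $p\in\{0,\ldots,M-j\}$, each contributing $(\eta_{p+1}\otimes\ldots\otimes\eta_{p+j})(a_j)$. Using $q/N\to 1/L$ and $\omega_N(\overline{\gamma}^0_Na_0)=a_0$ one arrives at
\begin{equation*}
a_0+\frac{1}{L}\sum_{j=1}^M\ \sum_{p=0}^{M-j}(\eta_{p+1}\otimes\ldots\otimes\eta_{p+j})(a_j)=0
\end{equation*}
for every $g\geq M$ and all $\eta_1,\ldots,\eta_M\in S(B)$. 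As the inner double sum does not depend on $L$, comparing two distinct values of $L$ forces $a_0=0$ together with
\begin{equation*}
\sum_{j=1}^M\ \sum_{p=0}^{M-j}(\eta_{p+1}\otimes\ldots\otimes\eta_{p+j})(a_j)=0\qquad\forall\,\eta_1,\ldots,\eta_M\in S(B)\,.
\tag{$\star$}
\end{equation*}

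It then remains to prove $a_m=0$ for $m=1,\ldots,M$, which I would do by induction on $m$. Fix $m$ and specialise $(\star)$ by setting $\eta_1=\ldots=\eta_{M-m}=\tau$ while keeping $\eta_{M-m+1},\ldots,\eta_M$ arbitrary. Any term indexed by $(j,p)$ with $p+1\leq M-m$ then has $\eta_{p+1}=\tau$, hence vanishes (again because the first partial trace of $a_j\in B^j_{\textsc{irr}}$ is zero), so that only terms with $p\geq M-m$ remain; combined with the constraint $p\leq M-j$ this forces $j\leq m$. Terms with $j<m$ vanish by the inductive hypothesis, and the only remaining term with $j=m$ is the one with $p=M-m$. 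Therefore $(\star)$ collapses to $(\eta_{M-m+1}\otimes\ldots\otimes\eta_M)(a_m)=0$ for all $\eta_{M-m+1},\ldots,\eta_M\in S(B)$; since these product functionals span $(B^m)^*$, we conclude $a_m=0$, which completes the induction.

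The delicate point is the bookkeeping behind $(\star)$: one must check carefully that, for the periodic state $\omega_N$ with $g\geq M$, the only placements of an irreducible $a_j$ which are not annihilated are those lying inside a single $\eta_1\otimes\ldots\otimes\eta_M$ block, so that all remaining boundary and wrap-around placements contribute $O(1/N)$ and disappear in the limit. Once $(\star)$ is available, the rest is the elementary triangular elimination above, which uses only that the defining relations of $B^j_{\textsc{irr}}$ kill every evaluation whose outermost functional is the trace state.
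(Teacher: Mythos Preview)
Your proof is correct. The approach is close in spirit to the paper's but organized differently, and the reorganization is worth noting.

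The paper runs the induction directly at the level of states: to isolate $a_\ell$ (assuming $a_0=\ldots=a_{\ell-1}=0$), it evaluates on periodic states $\tau^r\otimes(\tau^{2M-\ell}\otimes\eta_1\otimes\ldots\otimes\eta_\ell)^q$ with $\eta$-blocks of length exactly $\ell$. Higher $a_j$ ($j>\ell$) are then killed automatically by irreducibility (they cannot fit inside the short $\eta$-block), lower ones by the inductive hypothesis, and only $a_\ell$ survives. The case $a_0=0$ is handled separately using $\tau^N$.

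You instead fix the $\eta$-block length at $M$ once and for all, extract a single master identity $(\star)$ by passing to the limit, and then perform the induction algebraically on $(\star)$ by specializing some of the $\eta_i$ to $\tau$. Your separation of $a_0$ from the rest by varying the gap length $g$ (hence $L$) is a nice touch that the paper does not use. The substance of the two arguments is the same---periodic product states with $\tau$-gaps wide enough to forbid cross-block placements, plus the defining property of $B^j_{\textsc{irr}}$---but your version packages the combinatorics into one equation and then does linear algebra, whereas the paper rebuilds the right state at every step. Both are clean; yours arguably makes the triangular structure of the problem more visible.
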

\begin{proof}
	The proof is similar to the one of Lemma \ref{Lem: Btilde gamma-sequences retain information on the initial value}.
	By direct inspection we have
	\begin{align*}
		0=\lim_{N\to\infty}
		\bigg\|\sum_{j=0}^M\overline{\gamma}^j_Na_j\bigg\|_N
		\geq\lim_{N\to\infty}
		\bigg|\tau^N\bigg(\sum_{j=0}^M\overline{\gamma}^j_Na_j\bigg)\bigg|
		=|a_0|\,.
	\end{align*}
	Let now $\eta\in S(B)$ and let $\omega_{\eta,N}:=\tau^r\otimes(\tau^{2M-1}\otimes\eta)^q\in S(B^N)$, where $N=r+2Mq$, $q\in\mathbb{N}$ and $r\in\{0,\ldots, 2M-1\}$.
	We have
	\begin{align*}
		0=\lim_{N\to\infty}
		\bigg\|\sum_{j=0}^M\overline{\gamma}^j_Na_j\bigg\|_N
		\geq\lim_{N\to\infty}
		\bigg|\omega_{\eta,N}\bigg(\sum_{j=1}^M\overline{\gamma}^j_Na_j\bigg)\bigg|
		=\frac{1}{2M}|\eta(a_1)|\,,
	\end{align*}
	which implies $a_1=0$ because of the arbitrariness of $\eta\in S(B)$.
	Notice that $\omega_{\eta,N}(\overline{\gamma}_N^ja_j)=0$ for all $j\geq 2$ on account of the assumption $a_j\in B^j_{\textsc{irr}}$.
	
	Proceeding by induction we may assume that $a_1=\ldots=a_{\ell-1}=0$ and prove that $a_\ell=0$.
	To this avail let $\eta_1,\ldots,\eta_\ell\in S(B)$ and set
	\begin{align*}
		\omega_{\eta_1,\ldots,\eta_\ell,N}:=\tau^r\otimes
		(\tau^{2M-\ell}\otimes\eta_1\otimes\ldots\otimes\eta_\ell)^q
		\in S(B^N)\,,
	\end{align*}
	where $N=r+2Mq$, $q\in\mathbb{N}$ and $r\in\{0,\ldots,2M-1\}$.
	Using the inductive hypothesis we find
	\begin{align*}
		0=\lim_{N\to\infty}
		\bigg\|\sum_{j=\ell}^M\overline{\gamma}^j_Na_j\bigg\|_N
		\geq\lim_{N\to\infty}
		\bigg|\omega_{\eta_1,\ldots,\eta_\ell,N}\bigg(\sum_{j=\ell}^M\overline{\gamma}^j_Na_j\bigg)\bigg|
		=\frac{1}{2M}|(\eta_1\otimes\ldots\otimes\eta_\ell)(a_\ell)|\,,
	\end{align*}
	where, with the same argument as above, the contributions arising from $a_j$, $j\geq\ell+1$, vanish.
	The arbitrariness of $\eta_1,\ldots,\eta_\ell\in S(B)$ implies $a_\ell=0$.
\end{proof}

Summing up, every equivalence class $[\overline{\gamma}^M_Na_M]_N\in[\dot{B}]^\infty_\gamma$ has a unique canonical representative obtained by decomposing $a_M$ into its $\widetilde{B}$-irreducible components.

We shall now discuss the notion of canonical representative for a generic element $[a_N]_N\in[\dot{B}]^\infty_\gamma$.
Proposition \ref{Prop: the C* algebra of gamma-sequences is commutative} and Remark \ref{Rmk: the canonical representative is only asymptotically equivalent; it is enough to consider tildeB irreducible gamma-sequences}-\ref{Item: it is enough to consider tildeB irreducible gamma-sequences} lead to the following definition.

\begin{definition}\label{Def: canonical representative of product of gamma-sequences}
	Let $[a_N]_N\in[\dot{B}]^\infty_\gamma$ be such that
	\begin{multline*}
		[a_N]_N=\sum_{\ell,k_1,\ldots,k_\ell}c^{k_1\ldots k_\ell}
		[\overline{\gamma}^{M(k_1)}_N(a_{(k_1)})
		\cdots\overline{\gamma}^{M(k_\ell)}_N(a_{(k_\ell)})]_N
		\\=\sum_{\ell,k_1,\ldots,k_\ell}c^{k_1\ldots k_\ell}
		\frac{1}{N^{\ell-1}}
		\sum_{|j|_\ell=N-|M(k)|_\ell}
		[\overline{\gamma}_N(
		I^{j_1}\otimes
		\wick{
			\c1{a}_{(k_1)}\otimes
			\ldots\otimes I^{j_\ell}\otimes
			\c1{a}_{(k_\ell)}
		})]_N\,.
	\end{multline*}
	where $a_{k_j}\in B^{M(k_j)}_{\textsc{irr}}$ for all $k_j$, while the sum over $\ell,k_1,\ldots,k_\ell$ is finite and $|M(k)|_\ell:=M(k_1)+\ldots+M(k_\ell)$.
	The sequence
	\begin{align}
		\sum_{\ell,k_1,\ldots,k_\ell}c^{k_1\ldots k_\ell}
		\bigg(\frac{1}{N^{\ell-1}}
		\sum_{|j|_\ell=N-|M(k)|_\ell}
		\overline{\gamma}_N(
		I^{j_1}\otimes
		\wick{
			\c1{a}_{(k_1)}\otimes
			\ldots\otimes I^{j_\ell}\otimes
			\c1{a}_{(k_\ell)}
		})\bigg)_{N\geq |M(k)|_\ell}\,,
	\end{align}
	is called the \textbf{canonical representative} of $[a_N]_N$.
\end{definition}

Similarly to Proposition \ref{Prop: the canonical representative of a gamma-sequence is unique} we have the following result, showing that the canonical representative introduced in Definition \ref{Def: canonical representative of product of gamma-sequences} is unique.

\begin{proposition}\label{Prop: the canonical representative of a product of gamma-sequences is unique}
	It holds
	\begin{multline}\label{Eq: result for the canonical representative of a product of gamma-sequences is unique}
		\lim_{N\to\infty}\bigg\|
		\sum_{\ell,k_1,\ldots,k_\ell}c^{k_1\ldots k_\ell}
		\frac{1}{N^{\ell-1}}
		\sum_{|j|_\ell=N-|M(k)|_\ell}
		\overline{\gamma}_N(
		I^{j_1}\otimes
		\wick{
			\c1{a}_{(k_1)}\otimes
			\ldots\otimes I^{j_\ell}\otimes
			\c1{a}_{(k_\ell)}
		})
		\bigg\|_N=0
		\\\Longleftrightarrow
		\sum_{\ell,k_1,\ldots,k_\ell}c^{k_1\ldots k_\ell}
		\frac{1}{N^{\ell-1}}
		\sum_{|j|_\ell=N-|M(k)|_\ell}
		\overline{\gamma}_N(
		I^{j_1}\otimes
		\wick{
			\c1{a}_{(k_1)}\otimes
			\ldots\otimes I^{j_\ell}\otimes
			\c1{a}_{(k_\ell)}
		})=0
		\quad\forall N\in\mathbb{N}\,,
	\end{multline}
	where the sum over $\ell,k_1,\ldots,k_\ell\in\mathbb{N}$ is finite and $a_{k_j}\in B^{M(k_j)}_\textsc{irr}$ for all $k_j$.
\end{proposition}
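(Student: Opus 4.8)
The $\Leftarrow$ implication is immediate, so I concentrate on $\Rightarrow$. Write $T_N$ for the canonical-representative sequence of Definition \ref{Def: canonical representative of product of gamma-sequences} and assume $\lim_{N\to\infty}\|T_N\|_N=0$. The plan is to adapt the probing technique used for Lemma \ref{Lem: Btilde gamma-sequences retain information on the initial value} and Proposition \ref{Prop: the canonical representative of a gamma-sequence is unique} --- evaluate $T_N$ against trace--diluted periodic product states and let $N\to\infty$ --- but now one must, in addition, disentangle the contributions coming from products with a different number $\ell$ of factors. Concretely, fix a period pattern made of finitely many short "windows" of generic single-site states $\eta\in S(B)$ separated by long runs of the trace state $\tau$, let $L$ be the period length, and set $\omega^{(L)}_N:=\tau^r\otimes(\mathrm{pattern})^q\in S(B^N)$ with $N=r+qL$. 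Since $\|T_N\|_N\to0$ forces $\omega^{(L)}_N(T_N)\to0$ for every such state, it suffices to extract enough information from these limits.

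The first step is a filtration by the number of factors, obtained by varying $L$. For a single $\ell$-factor term, a computation of exactly the type already carried out in the proof of Proposition \ref{Prop: elements in the algebra of gamma-sequences norm have limit} shows that, as $N\to\infty$, $\omega^{(L)}_N$ applied to $\frac{1}{N^{\ell-1}}\sum_{|j|_\ell=N-|M(k)|_\ell}\overline{\gamma}_N(\cdots)$ converges to $G_\ell/L^{\ell}$, where $G_\ell$ depends on the pattern (window lengths, window positions, the $\eta$'s) but not on $L$: the factor $1/L^{\ell}$ arises from summing over the $\sim(N/L)^{\ell}$ ways of distributing the $\ell$ irreducible blocks among the $\eta$-windows of the $q\sim N/L$ periods, and $\widetilde{B}$-irreducibility together with the long $\tau$-runs forces each block to sit entirely inside one $\eta$-window --- a block meeting a $\tau$-site at either end is annihilated by Definition \ref{Def: Btilde-irreducible element}. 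Summing over all terms, $\lim_{N\to\infty}\omega^{(L)}_N(T_N)$ is a polynomial in $1/L$ whose coefficient of $1/L^{\ell}$ is the aggregate contribution of the $\ell$-factor terms. Since this polynomial vanishes for all large $L$, each such coefficient vanishes: for every fixed pattern and every $\ell$, the total $\ell$-factor contribution is $0$.

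The second step recovers the data inside fixed $\ell$. The identities just obtained say that the symmetric multilinear object $\Xi^{(\ell)}$ assembled from the coefficients $c^{k_1\ldots k_\ell}$ and the irreducible factors $a_{(k_i)}$ --- symmetric precisely because of the total weighted symmetrization \eqref{Eq: total weighted symmetrization} --- is annihilated by all the "insertion functionals" produced by the various patterns. Taking patterns whose $\eta$-windows all have a fixed length $M$, and using that the product states $\eta_1\otimes\cdots\otimes\eta_M$, $\eta_i\in S(B)$, separate $B^M$ and hence $B^M_{\textsc{irr}}$, one recovers the full multilinear data by an induction on the total degree $\sum_i M(k_i)$ --- peeling off the lower-degree windows that a longer window necessarily also "sees" --- exactly as the single-block induction in Proposition \ref{Prop: the canonical representative of a gamma-sequence is unique} recovers $a_0,a_1,\dots,a_M$ one grade at a time. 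A polarization argument then gives $\Xi^{(\ell)}=0$ for every $\ell$.

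Finally, because the sequence $T_N$ depends on the defining data only through the $\Xi^{(\ell)}$ --- this is the whole purpose of passing to the weighted-symmetrized form in Definition \ref{Def: canonical representative of product of gamma-sequences}, cf. Proposition \ref{Prop: the C* algebra of gamma-sequences is commutative} --- vanishing of all $\Xi^{(\ell)}$ forces $T_N=0$ for every $N\in\mathbb{N}$, not merely asymptotically; this also explains why the statement is phrased in terms of $T_N$ and not of the $c^{k_1\ldots k_\ell}$, which need not vanish individually. I expect the main obstacle to be the bookkeeping in the first two steps: one must keep precise track of which distributions of the irreducible blocks over the $\eta$-windows contribute, and at which order in $1/L$ and in the dilution parameter, showing that the "overlap" configurations --- where a block meets a $\tau$-site, or two blocks collide --- contribute only lower-order terms, in complete analogy with the $O(1/K)$ and $O(K/N)$ estimates appearing in the proof of Proposition \ref{Prop: elements in the algebra of gamma-sequences norm have limit}. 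Everything else is a routine, if lengthy, elaboration of arguments already present in the paper.
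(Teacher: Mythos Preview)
Your route is genuinely different from the paper's. The paper does not filter by $\ell$: for the case $\ell\le 2$ that it writes out in detail, it fixes a probe with exactly two windows of carefully chosen \emph{minimal} sizes $\underline{M}_1\ge\underline{M}_2$ padded by $\tau^{\overline{M}}$-deserts, shows that this probe sees only those pairs $(k_1,k_2)$ with degree profile $\{\underline{M}_1,\underline{M}_2\}$, and then through a chain of polarization identities in $\omega_{\underline{M}_1},\omega_{\underline{M}_2}$ forces precisely those summands of $a_N$ to vanish identically in $N$; one then recomputes $\underline{M}_1,\underline{M}_2$ on the remainder and iterates. Your idea of reading off the $\ell$-layers as coefficients of a polynomial in $1/L$ is an appealing organizing principle that the paper does not use.

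There is, however, a step that does not go through as written. The claim that an $\ell$-factor term contributes exactly at order $1/L^\ell$ rests on every irreducible block being forced into an $\eta$-window, which needs $M(k_j)\ge 1$. The statement allows $a_{(k_j)}\in B^0_{\textsc{irr}}=\mathbb{C}$, and a constant factor imposes no window constraint: an $\ell$-term containing $m$ constant factors lands at order $1/L^{\ell-m}$ and contaminates the coefficient you assign to the $(\ell-m)$-factor terms. Concretely, the $\ell=2$ canonical-representative term built from $(1,a)$ with $a\in B^M_{\textsc{irr}}$ equals $\tfrac{N-M+1}{N}\,\overline{\gamma}_N^M a$; subtracting the $\ell=1$ term $\overline{\gamma}_N^M a$ leaves $-\tfrac{M-1}{N}\,\overline{\gamma}_N^M a$, which all of your probes send to zero in the limit yet which is nonzero for every $N\ge M\ge 2$. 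The paper's argument sidesteps this mixing by working within a single $\ell$ and letting $\underline{M}_2=0$ be a legitimate window size. Beyond this point your sketch is plausible, though your second step needs patterns carrying $\ell$ \emph{independent} windows so that the polarization of $G_\ell$ becomes genuinely multilinear in the single-window functionals --- the analogue of the paper's passage from the diagonal state $\omega_{\underline{M}_1}$ to two independent copies $\omega_{\underline{M}_1},\omega'_{\underline{M}_1}$.
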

\begin{proof}
	For the sake of clarity, we will discuss the proof for $\ell\leq 2$.
	This simplifies the construction without affecting the validity of the argument.
	We thus consider the sequence
	\begin{align}\label{Eq: sequence for the canonical representative of a product of gamma-sequences is unique}
		a_N:=\sum_{k_1,k_2}c^{k_1k_2}
		\frac{1}{N}\sum_{|j|=N-|M(k)|_2}
		\overline{\gamma}_N(I^{j_1}\otimes a_{(k_1)}\otimes I^{j_2}\otimes a_{(k_2)})\,,
	\end{align}
	where the sum over $k_1,k_2$ is finite and $a_{(k)}\in B^{M(k)}_{\textsc{irr}}$ for all $k$.
	Notice that, whenever $M(k_1)=0$ or $M(k_2)=0$ the corresponding contribution reduces to a single $\gamma$-sequence up to a remainder $O(1/N)$.
	We have to prove that $\lim\limits_{N\to\infty}\|a_N\|_N=0$ implies $a_N=0$ for all $N\in\mathbb{N}$.
	
	We observe that $\|a_N\|_N\underset{N\to\infty}{\longrightarrow}0$ entails
	\begin{align*}
		0=\lim_{N\to\infty}\|a_N\|_N\geq|\tau^N(a_N)|
		=\sum_{\substack{k_1\colon M(k_1)=0 \\ k_2\colon M(k_2)=0}}c^{k_1k_2}a_{(k_1)}a_{(k_2)}\,,
	\end{align*}
	so that we may assume $(M(k_1),M(k_2))\neq(0,0)$ in \eqref{Eq: sequence for the canonical representative of a product of gamma-sequences is unique}.
	
	We now analyse \eqref{Eq: sequence for the canonical representative of a product of gamma-sequences is unique} with the help of the following parameters:
	\begin{multline}\label{Eq: M-parameters}
		\overline{M}:=\max\limits_{k_1,k_2}\max\{M(k_1),M(k_2)\}\,,
		\\
		\underline{M}_1:=\min\limits_{k_1,k_2}\max\{M(k_1),M(k_2)\}\,,
		\qquad
		\underline{M}_2:=\min\limits_{\substack{k_1\colon M(k_1)\leq \underline{M}_1\\ k_2\colon M(k_2)\leq\underline{M}_1}}\min\{M(k_1),M(k_2)\}\,.
	\end{multline}
	Roughly speaking $\overline{M}$ is the maximal degree of the $a_{(k)}$'s appearing in \eqref{Eq: sequence for the canonical representative of a product of gamma-sequences is unique}.
	The parameter $\underline{M}_1\leq\overline{M}$ is the minimal "bigger length" among all pairs $(k_1,k_2)$ appearing in \eqref{Eq: sequence for the canonical representative of a product of gamma-sequences is unique}.
	Notice that $\underline{M}_1>0$ on account of the hypothesis $(M(k_1),M(k_2))\neq (0,0)$.
	Finally $\underline{M}_2\leq\underline{M}_1$ is the minimal length of the $a_{(k)}$'s appearing when considering only those pairs $(k_1,k_2)$ for which $\max\{M(k_1),M(k_2)\}\leq\underline{M}_1$ ---notice that this implies $M(k)=\underline{M}_1$ for at least one between $k\in\{k_1,k_2\}$.
	
	Let $\omega_{\underline{M}_1}\in S(B^{\underline{M}_1})$, $\omega_{\underline{M}_2}\in S(B^{\underline{M}_2})$ and let $\omega_{\underline{M}_1,\underline{M}_2,N}\in S(B^N)$ be defined by
	\begin{align}\label{Eq: state with trace desert and two states of minimal maximal length}
		\omega_{\underline{M}_1,\underline{M}_2,N}
		:=\tau^r\otimes
		(\tau^{\overline{M}}\otimes \omega_{\underline{M}_1}\otimes \tau^{\overline{M}}\otimes\omega_{\underline{M}_2})^q\,,
	\end{align}
	where $N=r+(2\overline{M}+\underline{M}_1+\underline{M}_2)q$, $q\in\mathbb{N}$, $r\in\{0,\ldots, 2\overline{M}+\underline{M}_1+\underline{M}_2-1\}$.
	We consider
	\begin{align}
		\nonumber
		\omega_{\underline{M}_1,\underline{M}_2,N}(a_N)
		&=\sum_{k_1,k_2}c^{k_1k_2}
		\frac{1}{N}\sum_{|j|=N-|M(k)|_2}
		\omega_{\underline{M}_1,\underline{M}_2,N}\bigg[\overline{\gamma}_N(I^{j_1}\otimes a_{(k_1)}\otimes I^{j_2}\otimes a_{(k_2)})\bigg]
		\\\label{Eq: evaluation of to be zero sequence on state with trace desert and two states with minimal maximal length}
		&=\sum_{\substack{k_1\colon M(k_1)\leq\underline{M}_1\\k_2\colon M(k_2)\leq\underline{M}_1}}c^{k_1k_2}
		\frac{1}{N}\sum_{|j|=N-|M(k)|_2}
		\omega_{\underline{M}_1,\underline{M}_2,N}\bigg[\overline{\gamma}_N(I^{j_1}\otimes a_{(k_1)}\otimes I^{j_2}\otimes a_{(k_2)})\bigg]\,,
	\end{align}
	where in the second line we observed that
	\begin{align*}
		\omega_{\underline{M}_1,\underline{M}_2,N}\bigg[\overline{\gamma}_N(I^{j_1}\otimes a_{(k_1)}\otimes I^{j_2}\otimes a_{(k_2)})\bigg]=0
		\qquad\textrm{if } \min\{M(k_1),M(k_2)\}>\underline{M}_1\,;
	\end{align*}
	This follows from the fact that, if $\underline{M}_1<M_2\leq\overline{M}$, for all $a_{M_2}\in B^{M_2}_{\textsc{irr}}$ we have
	\begin{align*}
		\omega_{\underline{M}_1,\underline{M}_2,N}(\overline{\gamma}_N(a_{N-M_2}\otimes a_{M_2}))=0\,,
	\end{align*}
	no matter the choice of $a_{N-M_2}\in B^{N-M_2}$.
	In fact, for all $j\in\{0,\ldots, N-1\}$ one finds that
	\begin{align*}
		\omega_{\underline{M}_1,\underline{M}_2,N}\big[
		\gamma_N^j(a_{N-M_2}\otimes a_{M_2})\big]
		=\tau^r\otimes
		(\tau^{\overline{M}}\otimes \omega_{\underline{M}_1}\otimes \tau^{\overline{M}}\otimes\omega_{\underline{M}_2})^q
		\big[
		\gamma_N^j(a_{N-M_2}\otimes a_{M_2})\big]\,,
	\end{align*}
	is non vanishing only if the position of $a_{M_2}$ "overlaps completely" with either $\omega_{\underline{M}_1}$ or with $\omega_{\underline{M}_2}$, however, this is not possible because $M_2>\underline{M}_1\geq\underline{M}_2$.
	Notice that overlapping with both states is impossible since each pair $\omega_{\underline{M}_1},\omega_{\underline{M}_2}$ is separated by $\tau^{\overline{M}}$ and $M_2\leq \overline{M}$.

	We now analyse the remaining contributions \eqref{Eq: evaluation of to be zero sequence on state with trace desert and two states with minimal maximal length} of $\omega_{\underline{M}_1,\underline{M}_2,N}(a_N)$.
	Notice that the condition $M(k_1)\leq\underline{M}_1$ and $M(k_2)\leq \underline{M}_1$ implies $M(k)=\underline{M}_1$ for at least one between $k_1,k_2$.
	In fact, we also have $M(k_1),M(k_2)\geq\underline{M}_2$ which implies  $M(k_2)=\underline{M}_2$ or $M(k_1)=\underline{M}_2$ for at least one pair $(k_1,k_2)$.
	Moreover, by direct inspection:
	\begin{enumerate}[(a)]
		\item\label{Item: case omegaM1omegaM2}
		If $j_2=\overline{M}\mod 2\overline{M}+\underline{M}_1+\underline{M}_2$ then
		\begin{multline}\label{Eq: application of ad hoc state for j2 equal to maximal length}
			\omega_{\underline{M}_1,\underline{M}_2,N}\big[\overline{\gamma}_N(I^{j_1}\otimes a_{(k_1)}\otimes I^{j_2}\otimes a_{(k_2)})\big]
			\\=\frac{1}{2\overline{M}+\underline{M}_1+\underline{M}_2}
			\bigg[\omega_{\underline{M}_1}(I^{\underline{M}_1-M(k_1)}\otimes a_{(k_1)})
			\omega_{\underline{M}_2}(I^{\underline{M}_2-M(k_2)}\otimes a_{(k_2)})
			\\+\omega_{\underline{M}_2}(I^{\underline{M}_2-M(k_1)}\otimes a_{(k_1)})
			\omega_{\underline{M}_1}(I^{\underline{M}_1-M(k_2)}\otimes a_{(k_2)})\bigg]
			+O(1/N)\,,
		\end{multline}
		with the convention that the contribution vanishes if, say, $\underline{M}_2<M(k_1)$ ---this may happen if $\underline{M}_2<\underline{M}_1$ and $M(k_1)=\underline{M}_1$.
		This restrict the non-vanishing contributions to those pairs $(k_1,k_2)$ such that $\{M(k_1),M(k_2)\}=\{\underline{M}_1,\underline{M}_2\}$.
		Notice that there exists at least one such pair on account of the definition of $\underline{M}_2$ ---\textit{cf.} Equation \eqref{Eq: M-parameters}.
		
		To prove \eqref{Eq: application of ad hoc state for j2 equal to maximal length} it suffices to observe that for all $\ell\in\{0,\ldots,N-1\}$ we have
		\begin{multline*}
			\omega_{\underline{M}_1,\underline{M}_2,N}\big[\gamma_N^\ell(I^{j_1}\otimes a_{(k_1)}\otimes I^{j_2}\otimes a_{(k_2)})\big]
			\\=\begin{dcases*}
				\omega_{\underline{M}_1}(I^{\underline{M}_1-M(k_1)}\otimes a_{(k_1)})
				\omega_{\underline{M}_2}(I^{\underline{M}_2-M(k_2)}\otimes a_{(k_2)})
				\\
				\textrm{if }
				\ell=0\mod \underline{M}_1+\underline{M}_2+2\overline{M}
				\\
				\omega_{\underline{M}_2}(I^{\underline{M}_2-M(k_1)}\otimes a_{(k_1)})
				\omega_{\underline{M}_1}(I^{\underline{M}_1-M(k_2)}\otimes a_{(k_2)})
				\\
				\textrm{if }
				\ell=\underline{M}_1+\overline{M}\mod \underline{M}_1+\underline{M}_2+2\overline{M}
				\\
				0\quad\textrm{otherwise}
			\end{dcases*}
		\end{multline*}
		Since the number of $\ell\in\{0,\ldots,N-1\}$ such that $\ell=0\mod 2\overline{M}+\underline{M}_1+\underline{M}_2$ (\textit{resp}. $\ell=\underline{M}_1+\overline{M}\mod 2\overline{M}+\underline{M}_1+\underline{M}_2$) is roughly $N/(2\overline{M}+\underline{M}_1+\underline{M}_2)+O(1)$ the formula for $\omega_{\underline{M}_1,\underline{M}_2,N}\big[\overline{\gamma}_N(I^{j_1}\otimes a_{(k_1)}\otimes I^{j_2}\otimes a_{(k_2)})\big]$ follows.
		
		\item\label{Item: case omegaM1omegaM1 omegaM2omegaM2}
		Similarly, if $j_2=2\overline{M}+\underline{M}_1\mod 2\overline{M}+\underline{M}_1+\underline{M}_2$ then
		\begin{multline*}
			\omega_{\underline{M}_1,\underline{M}_2,N}\big[\overline{\gamma}_N(I^{j_1}\otimes a_{(k_1)}\otimes I^{j_2}\otimes a_{(k_2)})\big]
			\\=\frac{1}{2\overline{M}+\underline{M}_1+\underline{M}_2}
			\bigg[\omega_{\underline{M}_1}(I^{\underline{M}_1-M(k_1)}\otimes a_{(k_1)})
			\omega_{\underline{M}_1}(I^{\underline{M}_1-M(k_2)}\otimes a_{(k_2)})
			\\+\omega_{\underline{M}_2}(I^{\underline{M}_2-M(k_1)}\otimes a_{(k_1)})
			\omega_{\underline{M}_2}(I^{\underline{M}_2-M(k_2)}\otimes a_{(k_2)})\bigg]
			+O(1/N)\,,
		\end{multline*}
		where again the contribution is non-vanishing if and only if $\{M(k_1),M(k_2)\}=\{\underline{M}_1,\underline{M}_2\}$.
		
		\item
		In all other cases the contribution vanishes.
	\end{enumerate}
	The number of $j_2\in\mathbb{N}$ such that $j_2\leq N-|\underline{M}(k)|_2$ and $j_2=\overline{M}\mod 2\overline{M}+\underline{M}_1+\underline{M}_2$ (\textit{resp.} $j_2=2\overline{M}+\underline{M}_1\mod 2\overline{M}+\underline{M}_1+\underline{M}_2$) is roughly $N/(2\overline{M}+\underline{M}_1+\underline{M}_2)+O(1)$.
	Moreover, we have
	\begin{align*}
		\omega_{\underline{M}_2}(I^{\underline{M}_2-M(k)}\otimes a_{(k)})
		=(\omega_{\underline{M}_1-\underline{M}_2}\otimes\omega_{\underline{M}_2})
		(I^{\underline{M}_1-M(k)}\otimes a_{(k)})\,,
	\end{align*}
	where $\omega_{\underline{M}_1-\underline{M}_2}\in S(B^{\underline{M}_1-\underline{M}_2})$ is arbitrarily chosen.
	
	Thus, combining cases \eqref{Item: case omegaM1omegaM2}-\eqref{Item: case omegaM1omegaM1 omegaM2omegaM2} we find
	\begin{multline*}
		\omega_{\underline{M}_1,\underline{M}_2,N}(a_N)=
		\sum_{\substack{(k_1,k_2)\colon\\\{M(k_1),M(k_2)\}=\{\underline{M}_1,\underline{M}_2\}}}
		\frac{c^{k_1k_2}}{(2\overline{M}+\underline{M}_1+\underline{M}_2)^2}
		\\\bigg(\frac{1}{2}\omega_{\underline{M}_1}
		+\frac{1}{2}\omega_{\underline{M}_1-\underline{M}_2}\otimes\omega_{\underline{M}_2}\bigg)^{2}
		\bigg[I^{\underline{M}_1-M(k_1)}\otimes a_{(k_1)}
		\otimes I^{\underline{M}_1-M(k_2)}\otimes a_{(k_2)}
		\bigg]
		+O(1/N)\,.
	\end{multline*}
	Since $\|a_N\|_N\geq |\omega_{\underline{M}_1,\underline{M}_2,N}(a_N)|$ and $\|a\|_N\underset{N\to\infty}{\longrightarrow}0$ we find
	\begin{align}\label{Eq: linear combination vanishing on square of sum of states}
		\sum_{\substack{(k_1,k_2)\colon\\\{M(k_1),M(k_2)\}=\{\underline{M}_1,\underline{M}_2\}}}
		\!\!\!\!\!\!\!\!\!\!\!\!\!\!\!\!\!\!\!\!\!
		c^{k_1k_2}
		\bigg(\frac{1}{2}\omega_{\underline{M}_1}
		+\frac{1}{2}\omega_{\underline{M}_1-\underline{M}_2}\otimes\omega_{\underline{M}_2}\bigg)^{2}
		\bigg[I^{\underline{M}_1-M(k_1)}\otimes a_{(k_1)}
		\otimes I^{\underline{M}_1-M(k_2)}\otimes a_{(k_2)}
		\bigg]=0\,,
	\end{align}
	for all $\omega_{\underline{M}_1}\in S(B^{\underline{M}_1})$, $\omega_{\underline{M}_2}\in S(B^{\underline{M}_2})$ and $\omega_{\underline{M}_1-\underline{M}_2}\in S(B^{\underline{M}_1-\underline{M}_2})$.
	Choosing
	\begin{align*}
		\omega_{\underline{M}_1}
		=\omega_{\underline{M}_1-\underline{M}_2}\otimes\omega_{\underline{M}_2}\,,
	\end{align*}
	we have
	\begin{align}\label{Eq: linear combination vanishing on square of sum of states - diagonal case for product states}
		\sum_{\substack{(k_1,k_2)\colon\\\{M(k_1),M(k_2)\}=\{\underline{M}_1,\underline{M}_2\}}}
		\!\!\!\!\!\!\!\!\!\!\!\!\!\!\!\!\!\!\!\!\!
		c^{k_1k_2}
		(\omega_{\underline{M}_1-\underline{M}_2}\otimes\omega_{\underline{M}_2})^2
		\bigg[I^{\underline{M}_1-M(k_1)}\otimes a_{(k_1)}
		\otimes I^{\underline{M}_1-M(k_2)}\otimes a_{(k_2)}\bigg]=0\,,
	\end{align}
	for all $\omega_{\underline{M}_2}\in S(B^{\underline{M}_2})$ and $\omega_{\underline{M}_1-\underline{M}_2}\in S(B^{\underline{M}_1-\underline{M}_2})$.
	This implies that in the general case, unfolding $(\omega_{\underline{M}_1}+\omega_{\underline{M}_1-\underline{M}_2}\otimes\omega_{\underline{M}_2})^2$ and using Equation \eqref{Eq: linear combination vanishing on square of sum of states - diagonal case for product states} we have
	\begin{multline}\label{Eq: linear combination vanishing on square of sum of states - linear identity for product states}
		\sum_{\substack{(k_1,k_2)\colon\\\{M(k_1),M(k_2)\}=\{\underline{M}_1,\underline{M}_2\}}}
		c^{k_1k_2}
		(\omega_{\underline{M}_1}\otimes \omega_{\underline{M}_1})
		\bigg[I^{\underline{M}_1-M(k_1)}\otimes a_{(k_1)}
		\otimes I^{\underline{M}_1-M(k_2)}\otimes a_{(k_2)}\bigg]
		\\+
		\sum_{\substack{(k_1,k_2)\colon\\\{M(k_1),M(k_2)\}=\{\underline{M}_1,\underline{M}_2\}}}
		\!\!\!\!\!\!\!\!\!\!\!\!\!\!\!\!\!\!\!\!\!
		c^{k_1k_2}
		\bigg(\omega_{\underline{M}_1}
		\otimes\omega_{\underline{M}_1-\underline{M}_2}\otimes\omega_{\underline{M}_2}\bigg)
		\bigg[(I^{\underline{M}_1-M(k_1)}\otimes a_{(k_1)})
		\otimes_\pi (I^{\underline{M}_1-M(k_2)}\otimes a_{(k_2)})
		\bigg]=0\,,
	\end{multline}
	for all $\omega_{\underline{M}_1}\in S(B^{\underline{M}_1})$, $\omega_{\underline{M}_2}\in S(B^{\underline{M}_2})$ and $\omega_{\underline{M}_1-\underline{M}_2}\in S(B^{\underline{M}_1-\underline{M}_2})$ while $a\otimes_\pi a':=a\otimes a'+a'\otimes a$.
	Equation \eqref{Eq: linear combination vanishing on square of sum of states - linear identity for product states} is now linear in $\omega_{\underline{M}_1-\underline{M}_2}\otimes\omega_{\underline{M}_2}$.
	Since convex combinations of states in $S(B^{\underline{M}_1-\underline{M}_2})\otimes S(B^{\underline{M}_2})$ generate $S(B^{\underline{M}_1})$ we find that
	\begin{multline}\label{Eq: linear combination vanishing on square of sum of states - different M1 states}
		\sum_{\substack{(k_1,k_2)\colon\\\{M(k_1),M(k_2)\}=\{\underline{M}_1,\underline{M}_2\}}}
		c^{k_1k_2}
		(\omega_{\underline{M}_1}\otimes\omega_{\underline{M}_1})
		\bigg[I^{\underline{M}_1-M(k_1)}\otimes a_{(k_1)}
		\otimes I^{\underline{M}_1-M(k_2)}\otimes a_{(k_2)}\bigg]
		\\+
		\sum_{\substack{(k_1,k_2)\colon\\\{M(k_1),M(k_2)\}=\{\underline{M}_1,\underline{M}_2\}}}
		c^{k_1k_2}
		\big(\omega_{\underline{M}_1}
		\otimes\omega_{\underline{M}_1}'\big)
		\bigg[(I^{\underline{M}_1-M(k_1)}\otimes a_{(k_1)})
		\otimes_\pi (I^{\underline{M}_1-M(k_2)}\otimes a_{(k_2)})
		\bigg]=0\,,
	\end{multline}
	for all $\omega_{\underline{M}_1},\omega_{\underline{M}_1}'\in S(B^{\underline{M}_1})$.
	Choosing $\omega_{\underline{M}_1}=\omega_{\underline{M}_1}'$ we find
	\begin{align*}\label{Eq: linear combination vanishing on square of sum of states - diagonal case}
		\sum_{\substack{(k_1,k_2)\colon\\\{M(k_1),M(k_2)\}=\{\underline{M}_1,\underline{M}_2\}}}
		c^{k_1k_2}
		(\omega_{\underline{M}_1}\otimes\omega_{\underline{M}_1})
		\bigg[I^{\underline{M}_1-M(k_1)}\otimes a_{(k_1)}
		\otimes I^{\underline{M}_1-M(k_2)}\otimes a_{(k_2)}\bigg]
		=0\,,
	\end{align*}
	out which Equation \eqref{Eq: linear combination vanishing on square of sum of states - different M1 states} simplifies to
	\begin{align}
		\sum_{\substack{(k_1,k_2)\colon\\\{M(k_1),M(k_2)\}=\{\underline{M}_1,\underline{M}_2\}}}
		c^{k_1k_2}
		\big(\omega_{\underline{M}_1}
		\otimes\omega_{\underline{M}_1}'\big)
		\bigg[(I^{\underline{M}_1-M(k_1)}\otimes a_{(k_1)})
		\otimes_\pi (I^{\underline{M}_1-M(k_2)}\otimes a_{(k_2)})
		\bigg]=0\,,
	\end{align}
	for all $\omega_{\underline{M}_1},\omega_{\underline{M}_1}'\in S(B^{\underline{M}_1})$.
	
	The arbitrariness of $\omega_{\underline{M}_1},\omega_{\underline{M}_1}'\in S(B^{\underline{M}_1})$ and the fact that any $\omega_{2\underline{M}_1}\in S(B^{2\underline{M}_1})$ can be written as a convex combination of product states in $\omega_{\underline{M}_1}\otimes\omega_{\underline{M}_1}'$ lead to
	\begin{align}\label{Eq: to be zero finite contribution}
		\sum_{\substack{(k_1,k_2)\colon\\\{M(k_1),M(k_2)\}=\{\underline{M}_1,\underline{M}_2\}}}
		c^{k_1k_2}
		(I^{\underline{M}_1-M(k_1)}\otimes a_{(k_1)})
		\otimes_\pi (I^{\underline{M}_1-M(k_2)}\otimes a_{(k_2)})
		=0\,.
	\end{align}
	On account of the symmetry in $k_1,k_2$ in the sum, we may assume that $M(k_1)=\underline{M}_1$ and $M(k_2)=\underline{M}_2$ for all pairs $(k_1,k_2)$.
	Equation \eqref{Eq: to be zero finite contribution} reduces to
	\begin{align}\label{Eq: to be zero finite contribution - simplified version}
		\sum_{\substack{k_1\colon M(k_1)=\underline{M}_1\\k_2\colon M(k_2)=\underline{M}_2}}
		c^{k_1k_2}
		a_{(k_1)}\otimes_\pi
		(I^{\underline{M}_1-\underline{M}_2}\otimes a_{(k_2)})
		=0\,.
	\end{align}
	Let $\omega_{\underline{M}_1}\in S(B^{\underline{M}_1})$ and $\omega_{\underline{M}_2}\in S(B^{\underline{M}_2})$ and let
	\begin{align*}
		\omega_{\underline{M}_1,\underline{M}_2}
		:=\frac{1}{2}\omega_{\underline{M}_1}\otimes\tau^{\underline{M}_1-\underline{M}_2}\otimes\omega_{\underline{M}_2}
		+\frac{1}{2}\tau^{\underline{M}_1-\underline{M}_2}\otimes\omega_{\underline{M}_2}\otimes\omega_{\underline{M}_1}\,.
	\end{align*}
	Then Equation \eqref{Eq: to be zero finite contribution - simplified version} leads to
	\begin{multline*}
		0=\omega_{\underline{M}_1,\underline{M}_2}
		\bigg(\sum_{\substack{k_1\colon M(k_1)=\underline{M}_1\\k_2\colon M(k_2)=\underline{M}_2}}
		c^{k_1k_2}
		a_{(k_1)}\otimes_\pi
		(I^{\underline{M}_1-\underline{M}_2}\otimes a_{(k_2)})\bigg)
		\\=(\omega_{\underline{M}_1}\otimes\omega_{\underline{M}_2})
		\bigg(\sum_{\substack{k_1\colon M(k_1)=\underline{M}_1\\k_2\colon M(k_2)=\underline{M}_2}}
		c^{k_1k_2}
		a_{(k_1)}\otimes_\pi
		a_{(k_2)})\bigg)\,,
	\end{multline*}
	with the convention that $\omega_{\underline{M}_\ell}(a_{(k)})=0$ if $M(k)\neq\underline{M}_\ell$.
	This shows that Equation \eqref{Eq: to be zero finite contribution - simplified version} implies
	\begin{align}\label{Eq: to be zero finite contribution - simplified version without identities}
		\sum_{\substack{k_1\colon M(k_1)=\underline{M}_1\\k_2\colon M(k_2)=\underline{M}_2}}
		c^{k_1k_2}
		a_{(k_1)}\otimes_\pi
		\otimes a_{(k_2)}
		=0\,.
	\end{align}
	We now observe that Equation \eqref{Eq: to be zero finite contribution - simplified version without identities} is equivalent to
	\begin{align}\label{Eq: to be zero finite contribution in gamma-sequence}
		\sum_{\substack{k_1\colon M(k_1)=\underline{M}_1\\k_2\colon M(k_2)=\underline{M}_2}}
		c^{k_1k_2}
		\frac{1}{N}\sum_{j_1+j_2=N-\underline{M}_1-\underline{M}_2}
		\overline{\gamma}_N\big(I^{j_1}
		\otimes
		\wick{
		\c1
		a_{(k_1)}
		\otimes
		I^{j_2}\otimes
		\c1 a_{(k_2)}})
		=0\,,
		\qquad
		\forall N\geq \underline{M}_1-\underline{M}_2\,.
	\end{align}
	Indeed, by direct inspection Equation \eqref{Eq: to be zero finite contribution - simplified version without identities} implies that
	\begin{align}
		\sum_{\substack{k_1\colon M(k_1)=\underline{M}_1\\k_2\colon M(k_2)=\underline{M}_2}}
		c^{k_1k_2}
		I^{j_1}\otimes
		\wick{
		\c1
		a_{(k_1)}\otimes
		I^{j_2}\otimes
		\c1
		a_{(k_2)}
		}
		=0\,,
		\qquad\forall j_1,j_2\in\mathbb{N}\,,
	\end{align}
	and thus it implies Equation \eqref{Eq: to be zero finite contribution in gamma-sequence}.
	Conversely, if Equation \eqref{Eq: to be zero finite contribution in gamma-sequence} holds true then evaluation on the state $\tau^{\ell_1}\otimes\omega_{\underline{M}_1}\otimes\tau^{\ell_2}\otimes\omega_{\underline{M}_2}$ leads to
	\begin{multline*}
		0=(\tau^{\ell_1}\otimes\omega_{\underline{M}_1}
		\otimes\tau^{\ell_2}\otimes\omega_{\underline{M}_2})
		\bigg(
		\sum_{\substack{k_1\colon M(k_1)=\underline{M}_1\\k_2\colon M(k_2)=\underline{M}_2}}
		c^{k_1k_2}
		\frac{1}{N}\sum_{j_1+j_2=N-\underline{M}_1-\underline{M}_2}
		\overline{\gamma}_N\big(I^{j_1}
		\otimes
		\wick{
		\c1 a_{(k_1)}
		\otimes
		I^{j_2}\otimes
		\c1 a_{(k_2)}})
		\bigg)
		\\=\frac{1}{N}(\omega_{\underline{M}_1}\otimes\omega_{\underline{M}_2})
		\bigg(
		\sum_{\substack{k_1\colon M(k_1)=\underline{M}_1\\k_2\colon M(k_2)=\underline{M}_2}}
		c^{k_1k_2}
		\frac{1}{N} a_{(k_1)}
		\otimes_\pi a_{(k_2)}
		\bigg)\,,
	\end{multline*}
	where $\ell_1,\ell_2$ are such that $\ell_1+\ell_2=N-\underline{M}_1-\underline{M}_2$ while $\omega_{\underline{M}_1}\in S(B^{\underline{M}_1})$ and $\omega_{\underline{M}_2}\in S(B^{\underline{M}_2})$ are arbitrary states.
	This implies Equation \eqref{Eq: to be zero finite contribution - simplified version}.
	
	By comparison with \eqref{Eq: sequence for the canonical representative of a product of gamma-sequences is unique} we conclude that Equation \eqref{Eq: to be zero finite contribution in gamma-sequence} is nothing but the sum of the terms in $a_N$ whose pairs $k_1,k_2$ fulfils $\{M(k_1),M(k_2)\}=\{\underline{M}_1,\underline{M}_2\}$.
	
	At this stage we may either argue that this is in contradiction with the definition of $\underline{M}_1,\underline{M}_2$ ---unless $a_N=0$--- because $\min\limits_{\substack{k_1\colon M(k_1)\leq\underline{M}_1\\k_2\colon M(k_2)\leq\underline{M}_1}}\min\{M(k_1),M(k_2)\}>\underline{M}_2$.
	Alternatively we may consider the remaining contribution to $a_N$ and argue again as above identifying new values $\underline{M}_1,\underline{M}_2,\overline{M}$.
	In either case we have $a_N=0$ for all $N\in\mathbb{N}$ as claimed.
\end{proof}

\begin{remark}\label{Rmk: canonical representative for symmetric sequences}
	The notion of canonical representative applies also for symmetric sequences.
	Indeed, let $[\pi^M_Na_M]_N\in [B]^\infty_\pi$ where $a_M\in B^M_\pi$.
	As for $\gamma$-sequences one has \\ $[\pi^M_Na_M]_N=[\pi^{M+K}_N(I^K\otimes a_M)]_N$ so that the $\pi$-sequence generating $[\pi^M_Na_M]_N$ is not uniquely determined.
	Nevertheless, since $a_M\in B^M_\pi$, one obtain the following unique decomposition:
	\begin{align*}
		a_M=S_M(\widetilde{a}_0I^M+I^{M-1}\otimes \widetilde{a}_1+\ldots+\widetilde{a}_M)
		=S_M\bigg(\sum_{j=0}^MI^{M-j}\otimes\widetilde{a}_j\bigg)
		\in S_M\bigg(\bigoplus_{j=0}^M I^{M-j}\otimes\widetilde{B}^j_\pi\bigg)\,.
	\end{align*}
	With this decomposition at hand the canonical representative of $[\pi^M_Na_M]_N$ is defined by
	\begin{align*}
		\bigg(\sum_{j=0}^M\pi^j_N\widetilde{a}_j\bigg)_N\,.
	\end{align*}
	This point of view is equivalent to the one adopted in \cite{Landsman_Moretti_vandeVen_2020}.
\end{remark}

\subsection{The Poisson structure of $[B]^\infty_\gamma$}
\label{Subsec: the Poisson structure of the algebra of gamma-sequences}

In this section we will endow $[B]^\infty_\gamma$ with a Poisson structure defined on $[\dot{B}]^\infty_\gamma$.
Eventually we will discuss the deformation quantization of $[B]_\gamma$.

We recall that a \textbf{Poisson structure} over a $C^*$-algebra $\mathcal{A}$ is given by a bilinear map $\{\;,\;\}\colon \mathcal{A}_0\times \mathcal{A}_0\to\mathcal{A}_0$ defined on a dense $*$-subalgebra $\mathcal{A}_0\subset\mathcal{A}$ which fulfils:
\begin{align}
	\label{Eq: skew-symmetry of Poisson structure}
	\{a,a'\}&=-\{a',a\}\,,
	\qquad
	\{a,a'\}^*=\{a^*,a^{\prime *}\}
	\\
	\label{Eq: derivation property of Poisson structure}
	\{a,a'a''\}&=\{a,a'\}a''+a'\{a,a''\}\,,
	\\
	\label{Eq: Jacobi identity of Poisson structure}
	\{a,\{a',a''\}\}
	&=\{\{a,a'\},a''\}
	+\{a',\{a,a''\}\}\,,
\end{align}
for all $a,a',a''\in \mathcal{A}_0$.

\begin{proposition}\label{Prop: Poisson structure on algebra of gamma-sequences}
	Let $\{\;,\;\}_\gamma\colon[\dot{B}]^\infty_\gamma\times[\dot{B}]^\infty_\gamma\to[\dot{B}]^\infty_\gamma$ be the bilinear map defined by
	\begin{align}\label{Eq: Poisson structure on equivalence classes of gamma-sequences}
		\{[a_N]_N,[a_N']_N\}_\gamma
		:=[iN[a_N^{\textsc{can}},a_N^{\prime\,\textsc{can}}]]_N\,,
	\end{align}
	where $(a^{\textsc{can}}_N)_N$ denotes the canonical representative of $[a_N]_N$ ---\textit{cf.} Definitions \ref{Def: canonical representative of gamma-sequence}-\ref{Def: canonical representative of product of gamma-sequences}.
	Then $\{\;,\;\}_\gamma$ is a Poisson structure on $[B]^\infty_\gamma$.
\end{proposition}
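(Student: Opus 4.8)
The plan is to verify, in turn, the defining properties of a Poisson structure with dense $*$-subalgebra $[\dot{B}]^\infty_\gamma\subset[B]^\infty_\gamma$ (density and the $*$-algebra property being already built into Definition \ref{Def: C*-algebra of gamma-sequences}): namely that $\{\;,\;\}_\gamma$ is well defined and bilinear with image in $[\dot{B}]^\infty_\gamma$, that it satisfies the skew-symmetry and reality conditions \eqref{Eq: skew-symmetry of Poisson structure}, the derivation property \eqref{Eq: derivation property of Poisson structure}, and the Jacobi identity \eqref{Eq: Jacobi identity of Poisson structure}. The common mechanism is to compute at the level of canonical representatives and to reduce each axiom to the corresponding algebraic identity for the commutator in the matrix algebra $B^N$, the errors being absorbed by the estimate $\|[R_N,\overline{\gamma}^{M_1'}_N(a_{M_1'})\cdots\overline{\gamma}^{M_{\ell'}'}_N(a_{M_{\ell'}'})]\|_N=O(1/N^2)$ of Remark \ref{Rmk: estimate of commutator with the remainder; Cstar algebra of symmetric sequences; character space for algebra of gamma sequences}-\ref{Item: estimate of commutator with the remainder}. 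As a preliminary I would record that $[a_N]_N\mapsto(a_N^{\textsc{can}})_N$ is a well-defined $\mathbb{C}$-linear and $*$-compatible map on $[\dot{B}]^\infty_\gamma$: it is independent of the chosen presentation by Propositions \ref{Prop: the canonical representative of a gamma-sequence is unique}-\ref{Prop: the canonical representative of a product of gamma-sequences is unique} (the difference of two canonical representatives of one and the same class is again of canonical-representative form and $\sim$-trivial, hence identically zero), while $\mathbb{C}$-linearity and $*$-compatibility hold because $\overline{\gamma}_N$, the total weighted symmetrization \eqref{Eq: total weighted symmetrization}, and the decomposition \eqref{Eq: all to the left-I representative} are $\mathbb{C}$-linear and commute with the involution (here one uses that $\widetilde{B}=\ker\tau$ and the spaces $B^j_{\textsc{irr}}$ are $*$-closed).

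The crucial point is that, for canonical representatives $(a_N^{\textsc{can}})_N$ and $(a_N'^{\textsc{can}})_N$, one has $\|[a_N^{\textsc{can}},a_N'^{\textsc{can}}]\|_N=O(1/N)$ and $(iN[a_N^{\textsc{can}},a_N'^{\textsc{can}}])_N$ is $\sim$-equivalent to an element of $\dot{B}^\infty_\gamma$. Here I would invoke Proposition \ref{Prop: the C* algebra of gamma-sequences is commutative}: both products $a_N^{\textsc{can}}a_N'^{\textsc{can}}$ and $a_N'^{\textsc{can}}a_N^{\textsc{can}}$ equal the same ``spread-out'' main term ---the totally weighted symmetrized, $\gamma_N$-invariant tensor on the right-hand side of \eqref{Eq: equivalence class of product of gamma sequences}, which is symmetric in all of its factors--- plus an overlap remainder of norm $O(1/N)$; the main term therefore cancels in the commutator, so that $[a_N^{\textsc{can}},a_N'^{\textsc{can}}]$ equals the difference $R_N-R_N'$ of two such remainders. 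Inspecting the proof of Proposition \ref{Prop: the C* algebra of gamma-sequences is commutative} one sees that, for $N$ large, $NR_N=\sum_\alpha\overline{\gamma}^{L_\alpha}_N d_\alpha$ for finitely many fixed $d_\alpha\in B^{L_\alpha}$ with bounded $L_\alpha$ (the overlapping blocks fuse into tensors of bounded degree sitting at a fixed position, on which $\overline{\gamma}_N$ then acts), so $[NR_N]_N,[NR_N']_N\in[\dot{B}]^\infty_\gamma$ and hence $\{[a_N]_N,[a_N']_N\}_\gamma=i\big([NR_N]_N-[NR_N']_N\big)\in[\dot{B}]^\infty_\gamma$. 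Bilinearity then follows from bilinearity of the commutator and $\mathbb{C}$-linearity of the canonical-representative map. I expect this step to be the main obstacle, in particular the bookkeeping, for a general (non-single) canonical representative, of which overlap configurations of the blocks $a_{(k_i)},a_{(k_j')}$ survive the commutator and the check that the $1/N^{\ell-1}$ prefactors and the index sums combine to yield exactly $O(1/N)$; this parallels the combinatorics of the proof of Proposition \ref{Prop: the C* algebra of gamma-sequences is commutative} and of Remark \ref{Rmk: estimate of commutator with the remainder; Cstar algebra of symmetric sequences; character space for algebra of gamma sequences}-\ref{Item: estimate of commutator with the remainder}.

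Skew-symmetry and reality are then formal: $[x,y]=-[y,x]$ gives $\{a,a'\}_\gamma=-\{a',a\}_\gamma$, and $(iN[x,y])^{*}=iN[x^{*},y^{*}]$ together with $(a_N^{\textsc{can}})^{*}=((a^{*})^{\textsc{can}})_N$ and the fact that the involution descends to $[B]_\sim$ give $\{a,a'\}_\gamma^{*}=\{a^{*},a'^{*}\}_\gamma$. For the derivation property, let $a,a',a''\in[\dot{B}]^\infty_\gamma$ with canonical representatives $a_N^{\textsc{can}},a_N'^{\textsc{can}},a_N''^{\textsc{can}}$. By Proposition \ref{Prop: the C* algebra of gamma-sequences is commutative}, $(a'a'')_N^{\textsc{can}}=a_N'^{\textsc{can}}a_N''^{\textsc{can}}+\widetilde{R}_N$ with $\|\widetilde{R}_N\|_N=O(1/N)$ and $\widetilde{R}_N$ a remainder of the type controlled by Remark \ref{Rmk: estimate of commutator with the remainder; Cstar algebra of symmetric sequences; character space for algebra of gamma sequences}-\ref{Item: estimate of commutator with the remainder}; since $a_N^{\textsc{can}}$ is itself, up to an $O(1/N)$ term, a combination of products of $\gamma$-sequences (again by Proposition \ref{Prop: the C* algebra of gamma-sequences is commutative}), that remark gives $\|[a_N^{\textsc{can}},\widetilde{R}_N]\|_N=O(1/N^2)$, so the contribution of $\widetilde{R}_N$ vanishes after multiplication by $N$ and passage to the $\sim$-class. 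Applying the algebraic Leibniz rule $[x,yz]=[x,y]z+y[x,z]$ to $[a_N^{\textsc{can}},a_N'^{\textsc{can}}a_N''^{\textsc{can}}]$, then using multiplicativity of the quotient map $\prod_{N}B^N\to[B]_\sim$ together with $[a_N'^{\textsc{can}}]_N=[a_N']_N$ and $[a_N''^{\textsc{can}}]_N=[a_N'']_N$, yields $\{a,a'a''\}_\gamma=\{a,a'\}_\gamma a''+a'\{a,a''\}_\gamma$.

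Finally, for the Jacobi identity I would set $b:=\{a',a''\}_\gamma$; by the crucial step above $iN[a_N'^{\textsc{can}},a_N''^{\textsc{can}}]$ is of the ``finite combination of $\gamma$-sequences plus $O(1/N)$-remainder'' form, so its canonical representative $b_N^{\textsc{can}}$ differs from it by an $O(1/N)$ remainder of the type in Remark \ref{Rmk: estimate of commutator with the remainder; Cstar algebra of symmetric sequences; character space for algebra of gamma sequences}-\ref{Item: estimate of commutator with the remainder} (see also Remark \ref{Rmk: the canonical representative is only asymptotically equivalent; it is enough to consider tildeB irreducible gamma-sequences}-\ref{Item: the canonical representative is only asymptotically equivalent}). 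As in the derivation step its commutator with $a_N^{\textsc{can}}$, multiplied by $N$, is $O(1/N)$, whence $\{a,\{a',a''\}_\gamma\}_\gamma=[(iN)^2[a_N^{\textsc{can}},[a_N'^{\textsc{can}},a_N''^{\textsc{can}}]]]_N$; symmetrically $\{\{a,a'\}_\gamma,a''\}_\gamma=[(iN)^2[[a_N^{\textsc{can}},a_N'^{\textsc{can}}],a_N''^{\textsc{can}}]]_N$ and $\{a',\{a,a''\}_\gamma\}_\gamma=[(iN)^2[a_N'^{\textsc{can}},[a_N^{\textsc{can}},a_N''^{\textsc{can}}]]]_N$. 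The operator Jacobi identity $[x,[y,z]]=[[x,y],z]+[y,[x,z]]$ in $B^N$, applied termwise and multiplied by $(iN)^2$, then gives \eqref{Eq: Jacobi identity of Poisson structure}, completing the verification that $\{\;,\;\}_\gamma$ is a Poisson structure on $[B]^\infty_\gamma$.
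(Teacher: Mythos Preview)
Your plan is correct and tracks the paper's proof closely: both reduce the Leibniz and Jacobi identities to the corresponding commutator identities in $B^N$, with the discrepancy between canonical representatives and actual products absorbed via the $O(1/N^2)$ estimate of Remark~\ref{Rmk: estimate of commutator with the remainder; Cstar algebra of symmetric sequences; character space for algebra of gamma sequences}-\ref{Item: estimate of commutator with the remainder}, and both handle skew-symmetry and reality formally.

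The one place where you take a genuinely different route is the closure step (showing $\{[a_N]_N,[a_N']_N\}_\gamma\in[\dot{B}]^\infty_\gamma$). The paper first treats two \emph{single} $\gamma$-sequences and computes explicitly that, for $N$ large,
\[
iN[\overline{\gamma}^M_Na_M,\overline{\gamma}^{M'}_Na_{M'}]
=\overline{\gamma}^{M+2M'}_Na_{M+2M'}
\]
for a concrete $a_{M+2M'}\in B^{M+2M'}$; it then reduces the general case to this via the Leibniz rule for the commutator together with the remainder estimate. You instead argue that the commutator of two canonical representatives is the difference $R_N-R_N'$ of overlap remainders and that $NR_N$ is a finite sum $\sum_\alpha\overline{\gamma}^{L_\alpha}_Nd_\alpha$ with $N$-independent $d_\alpha$. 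This is correct for single $\gamma$-sequences (and indeed recovers the paper's explicit formula once you write out the overlapping blocks), but for canonical representatives of \emph{products} the structure of $R_N$ is built inductively in the proof of Proposition~\ref{Prop: the C* algebra of gamma-sequences is commutative}, and checking that $NR_N$ is again a finite combination of $\gamma$-sequences requires essentially the same bookkeeping as the paper's explicit computation plus Leibniz reduction. So your route is valid but not a shortcut: the paper's choice to isolate the generator case first makes the closure statement concrete and avoids having to unwind the inductive remainder structure.
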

\begin{proof}
	Notice that $\{\;,\;\}_\gamma$ fulfils condition \eqref{Eq: skew-symmetry of Poisson structure} because so does the pointwise commmutator $i[\;,\;]$.
	
	The non-trivial part of the proof is to prove that $\{\;,\;\}_\gamma$ is well-defined, namely that $\{[a_N]_N,[a_N']_N\}_\gamma$ is a well-defined element of $[\dot{B}]^\infty_\gamma$.
	Moreover, we also have to prove conditions \eqref{Eq: derivation property of Poisson structure}-\eqref{Eq: Jacobi identity of Poisson structure}: The latter do not follow from the properties of the commutator because Equation \eqref{Eq: Poisson structure on equivalence classes of gamma-sequences} uses the canonical representative and in general
	$[a_N^{\textsc{can}},a^{\textsc{can}}_N]\neq [a_N^{\textsc{can}},a^{\textsc{can}}_N]^{\textsc{can}}$.
	For these reasons we proceed in several steps:
	
	\begin{description}
		\item[$\boxed{\gamma}$]
		As a first step, we prove that $\{[a_N]_N,[a_N']_N\}_\gamma\in[\dot{B}]^\infty_\gamma$ for the case of two equivalence classes of $\gamma$-sequences.
		Since the commutator is linear, on account of Remark \ref{Rmk: the canonical representative is only asymptotically equivalent; it is enough to consider tildeB irreducible gamma-sequences}-\ref{Item: it is enough to consider tildeB irreducible gamma-sequences} we may reduce to the case $[a_N]_N=[\overline{\gamma}_N^Ma_M]_N$, $[a_N']_N=[\overline{\gamma}_N^{M'}a_{M'}]_N$ for $a_M\in B^M_{\textsc{irr}}$ and $a_{M'}\in B^{M'}_{\textsc{irr}}$.
		
		In this latter case we find, for large enough $N$, say $N\geq 2(M+M')$,
		\begin{multline*}
			iN[\overline{\gamma}^M_Na_M,\overline{\gamma}^{M'}_Na_{M'}]
			=iN\overline{\gamma}_N\bigg(
			\bigg[I^{N-M}\otimes a_M,\overline{\gamma}_N(I^{N-M'}\otimes a_{M'})
			\bigg]
			\bigg)
			\\=i\overline{\gamma}_N\bigg(
			\sum_{\substack{j\in\{0,\ldots,M-1\}\\\cup\{N-M'-1,\ldots,N-1\}}}\bigg[I^{N-M}\otimes a_M,\gamma_N^j(I^{N-M'}\otimes a_{M'})
			\bigg]
			\bigg)
			\\=i\overline{\gamma}_N\bigg(I^{N-M-2M'}\otimes
			\sum_{j=0}^{M+M'}\bigg[I^{M'}\otimes a_M\otimes I^{M'},\gamma_{M+2M'}^j(I^{M'+M}\otimes a_{M'})
			\bigg]
			\bigg)
			\\=:\overline{\gamma}_N^{M+2M'}a_{M+2M'}\,,
		\end{multline*}
		which implies
		\begin{align*}
			\{[\overline{\gamma}^M_Na_M]_N,[\overline{\gamma}^{M'}_Na_{M'}]_N\}_\gamma
			=[iN[\overline{\gamma}^M_Na_M,\overline{\gamma}^{M'}_Na_{M'}]]_N
			=[\overline{\gamma}^{M+2M'}_Na_{M+2M'}]_N
			\in[\dot{B}]^\infty_\gamma\,.
		\end{align*}
		This proves that the Poisson bracket between $[\overline{\gamma}^M_Na_M]_N$ and $[\overline{\gamma}^{M'}_Na_{M'}]_N$ is an element of $[\dot{B}]^\infty_\gamma$.
		
		\item[$\boxed{[\dot{B}]^\infty_\gamma}$]
		We now consider the general case of $[a_N]_N,[a_N']_N\in[\dot{B}]^\infty_\gamma$.
		Using again linearity and Remark \ref{Rmk: the canonical representative is only asymptotically equivalent; it is enough to consider tildeB irreducible gamma-sequences}-\ref{Item: it is enough to consider tildeB irreducible gamma-sequences} we may restrict to case 
		\begin{align*}
			[a_N]_N=
			[\overline{\gamma}_N^{M_1}a_{M_1}\cdots
			\overline{\gamma}_N^{M_\ell}a_{M_\ell}]_N\,,
			\qquad
			[a_N']_N=
			[\overline{\gamma}_N^{M_1'}a_{M_1'}\cdots
			\overline{\gamma}_N^{M_{\ell'}'}a_{M_{\ell'}'}]_N\,,
		\end{align*}
		where $a_M,a_M'\in B^{M}_{\textsc{irr}}$ for all $M$'s and $\ell,\ell'\in\mathbb{N}$ are arbitrary but fixed.
		
		We observe that Equation \eqref{Eq: product of gamma sequences - estimate of remainder} ---\textit{cf.} Proposition \ref{Prop: the C* algebra of gamma-sequences is commutative}--- leads to
		\begin{align*}
			(a_N^{\textsc{can}})_N
			=(\overline{\gamma}_N^{M_1}a_{M_1}\cdots
			\overline{\gamma}_N^{M_\ell}a_{M_\ell})_N
			+R_N\,,
			\qquad
			(a_N^{\prime\textsc{can}})_N
			=(\overline{\gamma}_N^{M_1'}a_{M_1'}\cdots
			\overline{\gamma}_N^{M_{\ell'}'}a_{M_{\ell'}'})_N
			+R_N'\,,
		\end{align*}
		where $\|R_N\|_N=O(1/N)=\|R_N'\|_N$.
		This implies
		\begin{align*}
			[a_N^{\textsc{can}},a^{\prime\textsc{can}}_N]
			&=[\overline{\gamma}_N^{M_1}a_{M_1}\cdots
			\overline{\gamma}_N^{M_\ell}a_{M_\ell},
			\overline{\gamma}_N^{M_1'}a_{M_1'}\cdots
			\overline{\gamma}_N^{M_{\ell'}'}a_{M_{\ell'}'}
			]
			\\&+[R_N,\overline{\gamma}_N^{M_1'}a_{M_1'}\cdots
			\overline{\gamma}_N^{M_{\ell'}'}a_{M_{\ell'}'}]
			+[\overline{\gamma}_N^{M_1}a_{M_1}\cdots
			\overline{\gamma}_N^{M_\ell}a_{M_\ell},R_N']
			\\&+[R_N,R_N']\,.
		\end{align*}
		At this stage we observe that
		\begin{align*}
			\big[iN[\overline{\gamma}_N^{M_1}a_{M_1}\cdots
			\overline{\gamma}_N^{M_\ell}a_{M_\ell},
			\overline{\gamma}_N^{M_1'}a_{M_1'}\cdots
			\overline{\gamma}_N^{M_{\ell'}'}a_{M_{\ell'}'}
			]\big]_N\in [\dot{B}]^\infty_\gamma\,.
		\end{align*}
		Indeed, this is due to the identity
		\begin{align*}
			[a_N,a_N'a_N'']
			=[a_N,a_N']a_N''
			+a_N'[a_N,a_N'']\,,
		\end{align*}
		together with the fact that the result holds true for $\ell=\ell'=1$.
		Moreover, we have $\|N[R_N,R_N']\|_N=O(1/N)$ so that it remains to discuss the term
		\begin{align*}
			\big\|N[R_N,\overline{\gamma}_N^{M_1'}a_{M_1'}\cdots
			\overline{\gamma}_N^{M_{\ell'}'}a_{M_{\ell'}'}]\big\|_N
			=O(1/N)\,,
		\end{align*}
		where the latter estimate is due to Remark \ref{Rmk: estimate of commutator with the remainder; Cstar algebra of symmetric sequences; character space for algebra of gamma sequences}-\ref{Item: estimate of commutator with the remainder}.
		Overall we have shown that
		\begin{align*}
			iN[a_N^{\textsc{can}},a^{\prime\textsc{can}}_N]
			=iN[\overline{\gamma}_N^{M_1}a_{M_1}\cdots
			\overline{\gamma}_N^{M_\ell}a_{M_\ell},
			\overline{\gamma}_N^{M_1'}a_{M_1'}\cdots
			\overline{\gamma}_N^{M_{\ell'}'}a_{M_{\ell'}'}]
			+R_N''\,,
		\end{align*}
		where $\|R_N''\|_N=O(1/N)$ and by direct inspection fulfils Equation \eqref{Eq: estimate on commutator with the remainder}.
		This implies in particular that
		\begin{align*}
			\big[iN[a_N^{\textsc{can}},a^{\prime\textsc{can}}_N]\big]_N
			=\big[iN[\overline{\gamma}_N^{M_1}a_{M_1}\cdots
			\overline{\gamma}_N^{M_\ell}a_{M_\ell},
			\overline{\gamma}_N^{M_1'}a_{M_1'}\cdots
			\overline{\gamma}_N^{M_{\ell'}'}a_{M_{\ell'}'}
			]\big]_N\in[\dot{B}]^\infty_\gamma\,.
		\end{align*}
		so that $\{\;,\;\}_\gamma$ is well-defined.
		
		\item[$\boxed{\eqref{Eq: derivation property of Poisson structure}-\eqref{Eq: Jacobi identity of Poisson structure}}$]
		By proceeding in a completely analogous way one also proves conditions \eqref{Eq: derivation property of Poisson structure}-\eqref{Eq: Jacobi identity of Poisson structure}.
		Indeed, considering without loss of generality
		\begin{multline*}
			[a_N]_N=
			[\overline{\gamma}_N^{M_1}a_{M_1}\cdots
			\overline{\gamma}_N^{M_\ell}a_{M_\ell}]_N\,,
			\qquad
			[a_N']_N=
			[\overline{\gamma}_N^{M_1'}a_{M_1'}\cdots
			\overline{\gamma}_N^{M_{\ell'}'}a_{M_{\ell'}'}]_N\,,
			\\
			[a_N'']_N=
			[\overline{\gamma}_N^{M_1''}a_{M_1''}\cdots
			\overline{\gamma}_N^{M_{\ell''}''}a_{M_{\ell''}''}]_N\,,
		\end{multline*}
		we find
		\begin{multline*}
			(iN)^2[a_N^{\textsc{can}},[a_N^{\prime\textsc{can}},a_N^{\prime\prime\textsc{can}}]]
			\\=iN[a_N^{\textsc{can}},
			iN[\overline{\gamma}_N^{M_1'}a_{M_1'}\cdots
			\overline{\gamma}_N^{M_{\ell'}'}a_{M_{\ell'}'},
			\overline{\gamma}_N^{M_1''}a_{M_1''}\cdots
			\overline{\gamma}_N^{M_{\ell''}''}a_{M_{\ell''}''}
			]]
			+iN[a_N^{\textsc{can}},R_N]
			\\=iN[\overline{\gamma}_N^{M_1}a_{M_1}\cdots
			\overline{\gamma}_N^{M_\ell}a_{M_\ell},
			iN[\overline{\gamma}_N^{M_1'}a_{M_1'}\cdots
			\overline{\gamma}_N^{M_{\ell'}'}a_{M_{\ell'}'},
			\overline{\gamma}_N^{M_1''}a_{M_1''}\cdots
			\overline{\gamma}_N^{M_{\ell''}''}a_{M_{\ell''}''}
			]]
			\\+iN[R_N',iN[\overline{\gamma}_N^{M_1'}a_{M_1'}\cdots
			\overline{\gamma}_N^{M_{\ell'}'}a_{M_{\ell'}'},
			\overline{\gamma}_N^{M_1''}a_{M_1''}\cdots
			\overline{\gamma}_N^{M_{\ell''}''}a_{M_{\ell''}''}
			]]
			+iN[a_N^{\textsc{can}},R_N]\,.
		\end{multline*}
		The first contribution fulfils \eqref{Eq: Jacobi identity of Poisson structure}.
		With an argument similar in spirit to Remark \ref{Rmk: estimate of commutator with the remainder; Cstar algebra of symmetric sequences; character space for algebra of gamma sequences}-\ref{Item: estimate of commutator with the remainder}, the second contribution can be estimated by
		\begin{align*}
			\|[R_N',iN[\overline{\gamma}_N^{M_1'}a_{M_1'}\cdots
			\overline{\gamma}_N^{M_{\ell'}'}a_{M_{\ell'}'},
			\overline{\gamma}_N^{M_1''}a_{M_1''}\cdots
			\overline{\gamma}_N^{M_{\ell''}''}a_{M_{\ell''}''}
			]]\|_N=O(1/N^2)\,.
		\end{align*}
		Finally $\|N[a_N^{\textsc{can}},R_N]\|_N=O(1/N)$ because of Remark \ref{Rmk: estimate of commutator with the remainder; Cstar algebra of symmetric sequences; character space for algebra of gamma sequences}-\ref{Item: estimate of commutator with the remainder} so that
		\begin{multline*}
			(iN)^2[a_N^{\textsc{can}},[a_N^{\prime\textsc{can}},a_N^{\prime\prime\textsc{can}}]]
			\\=iN[\overline{\gamma}_N^{M_1}a_{M_1}\cdots
			\overline{\gamma}_N^{M_\ell}a_{M_\ell},
			iN[\overline{\gamma}_N^{M_1'}a_{M_1'}\cdots
			\overline{\gamma}_N^{M_{\ell'}'}a_{M_{\ell'}'},
			\overline{\gamma}_N^{M_1''}a_{M_1''}\cdots
			\overline{\gamma}_N^{M_{\ell''}''}a_{M_{\ell''}''}
			]]+R_N'''\,,
		\end{multline*}
		with $\|R_N'''\|_N=O(1/N)$.
		This proves condition \ref{Eq: Jacobi identity of Poisson structure} for $\{\;,\;\}_\gamma$.
		
		By proceeding in a similar fashion we also have
		\begin{multline*}
		    iN[a_N^{\textsc{can}},a^{\prime\textsc{can}}_Na^{\prime\prime\textsc{can}}_N]
		    \\=iN[\overline{\gamma}_N^{M_1}a_{M_1}\cdots
			\overline{\gamma}_N^{M_\ell}a_{M_\ell},
			\overline{\gamma}_N^{M_1'}a_{M_1'}\cdots
			\overline{\gamma}_N^{M_{\ell'}'}a_{M_{\ell'}'}
			\overline{\gamma}_N^{M_1''}a_{M_1''}\cdots
			\overline{\gamma}_N^{M_{\ell''}''}a_{M_{\ell''}''}
			]
			+R_N'''\,,
		\end{multline*}
		 where $\|R_N'''\|_N=O(1/N)$ while the first contribution fulfils \eqref{Eq: derivation property of Poisson structure}.
		 This proves condition \eqref{Eq: derivation property of Poisson structure} for $\{\;,\;\}_\gamma$.
	\end{description}	
\end{proof}

\begin{remark}\label{Rmk: canonical representative for Poisson bracket}
	The proof of Proposition \ref{Prop: Poisson structure on algebra of gamma-sequences} shows that, if $a_M\in B^M_{\textsc{irr}}$ and $a_{M'}\in B^{M'}_{\textsc{irr}}$ then
	\begin{align*}
		\{[\overline{\gamma}^M_Na_M]_N,[\overline{\gamma}^{M'}_Na_{M'}]_N\}_\gamma
		=[\overline{\gamma}^{M+2M'}_Na_{M+2M'}]_N\,,
	\end{align*}
	where $a_{M+2M'}\in B^{M+2M'}$ is not $\widetilde{B}$-irreducible in general.
\end{remark}

At last, we can finally state and prove the main theorem of this paper.

\begin{theorem}\label{Thm: deformation quantization of the algebra of gamma sequences}
	Let $[B]_\gamma\subset\prod_{N\in\overline{\mathbb{N}}}[B]^N_\gamma$ be the continuous bundle of $C^*$-algebras defined as per Proposition \ref{Prop: continuous bundle of Cstar algebra for gamma-sequences}.
	For $K\in\overline{\mathbb{N}}$ let $Q_K\colon [\dot{B}]^\infty_\gamma\to [B]^K_\gamma$ be the linear map defined by
	\begin{align}\label{Eq: quantization maps}
		Q_K([a_N]_N):=
		\begin{cases}
			a_K^{\textsc{can}}
			& K\in\mathbb{N}
			\\
			[a_N]_N
			& K=\infty
		\end{cases}
	\end{align}
	where $(a_N^{\textsc{can}})_N$ is the canonical representative of $[a_N]_N$ as per Definitions \ref{Def: canonical representative of gamma-sequence}-\ref{Def: canonical representative of product of gamma-sequences}.
	Then the family of maps $\{Q_N\}_{N\in\overline{\mathbb{N}}}$ defines a strict deformation quantization of $[B]^\infty_\gamma$.
\end{theorem}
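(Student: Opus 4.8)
The plan is to verify, one after another, the three groups of requirements entering the definition of a strict deformation quantization recalled in Section \ref{Sec: Introduction}. The first two data are already at our disposal: Propositions \ref{Prop: the C* algebra of gamma-sequences is commutative} and \ref{Prop: Poisson structure on algebra of gamma-sequences} show that $([B]^\infty_\gamma,\{\;,\;\}_\gamma)$ is a commutative Poisson $C^*$-algebra with dense Poisson $*$-subalgebra $[\dot{B}]^\infty_\gamma$, while Proposition \ref{Prop: continuous bundle of Cstar algebra for gamma-sequences} provides the continuous bundle $[B]_\gamma$ with fibers $[B]^N_\gamma$. Hence the theorem reduces to checking that: (0) $Q_N$ is a well-defined linear map $[\dot{B}]^\infty_\gamma\to[B]^N_\gamma$; (a) condition \ref{Item: quantization maps are Hermitian and define a continuous section} holds; (b) the Dirac--Groenewold--Rieffel condition \eqref{Eq: Dirac-Groenewold-Rieffel condition} holds; (c) the strictness condition \ref{Item: quantization maps are strict} holds.

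For (0), well-definedness is exactly the uniqueness of the canonical representative, established in Propositions \ref{Prop: the canonical representative of a gamma-sequence is unique} and \ref{Prop: the canonical representative of a product of gamma-sequences is unique}; linearity follows since the decomposition into $\widetilde{B}$-irreducible components \eqref{Eq: decomposition of BM into part with a different number of identities}--\eqref{Eq: all to the left-I representative} and the total weighted symmetrization \eqref{Eq: total weighted symmetrization} are linear in their argument, and $Q_N([a_N]_N)=a_N^{\textsc{can}}$ lands in $B^N_\gamma=[B]^N_\gamma$ because the canonical representative is built from terms lying in the range of $\overline{\gamma}_N$. For (a): $Q_\infty=\operatorname{Id}_{[\dot{B}]^\infty_\gamma}$ by \eqref{Eq: quantization maps}, and $Q_N([a_N^*]_N)=Q_N([a_N]_N)^*$ because the basis $\{b_k\}$ of $\widetilde{B}$ is self-adjoint (so the $*$-operation merely conjugates the scalar coefficients in \eqref{Eq: decomposition of BM into part with a different number of identities}) and both $\overline{\gamma}_N$ and the total weighted symmetrization commute with taking adjoints. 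Continuity of the section $\overline{\mathbb{N}}\ni N\mapsto Q_N([a_N]_N)$ is then immediate from Proposition \ref{Prop: the C* algebra of gamma-sequences is commutative}: writing $[a_N]_N$ with a representative $(a_N)_N\in\dot{B}^\infty_\gamma$ which is a finite sum of products of $\gamma$-sequences, Equation \eqref{Eq: product of gamma sequences - estimate of remainder} together with Definition \ref{Def: canonical representative of product of gamma-sequences} gives $\|a_N-a_N^{\textsc{can}}\|_N=O(1/N)$, so $[a_N^{\textsc{can}}]_N=[a_N]_N$ and the tuple $\big(a_1^{\textsc{can}},a_2^{\textsc{can}},\dots\,;[a_N]_N\big)=\big(Q_N([a_N]_N)\big)_{N\in\overline{\mathbb{N}}}$ is exactly the member of $[\dot{B}]_\gamma$ attached to $(a_N^{\textsc{can}})_N$ via \eqref{Eq: a posteriori continuous sections of the gamma-bundle}, hence a continuous section of $[B]_\gamma$.

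Condition (c) is easy, and in fact can be sharpened to an equality. Given $a_N\in B^N_\gamma=[B]^N_\gamma$, write it as $\sum_{j=0}^N I^{N-j}\otimes a'_j$ with $a'_j\in B^j_{\textsc{irr}}$ as in \eqref{Eq: all to the left-I representative}, and consider $[\overline{\gamma}^N_{N'}a_N]_{N'}\in[\dot{B}]^\infty_\gamma$; its canonical representative evaluated at $N'=N$ equals $\sum_{j=0}^N\overline{\gamma}_N(I^{N-j}\otimes a'_j)=\overline{\gamma}_N\big(\sum_{j=0}^N I^{N-j}\otimes a'_j\big)=\overline{\gamma}_N(a_N)=a_N$, using the $\gamma_N$-invariance of $a_N$ in the last step. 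Therefore $Q_N([\dot{B}]^\infty_\gamma)=B^N_\gamma=[B]^N_\gamma$, which is a fortiori dense.

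The main work, and the expected main obstacle, is condition (b). Unwinding the definitions, $Q_N(\{[a_N]_N,[a'_N]_N\}_\gamma)$ is, by \eqref{Eq: Poisson structure on equivalence classes of gamma-sequences}, the value at $N$ of the canonical representative of the class $[\,iN[a_N^{\textsc{can}},a_N^{\prime\,\textsc{can}}]\,]_N\in[\dot{B}]^\infty_\gamma$, whereas $iN[Q_N([a_N]_N),Q_N([a'_N]_N)]=iN[a_N^{\textsc{can}},a_N^{\prime\,\textsc{can}}]$. I would then chain the two quantitative facts already available: first, the argument in the proof of Proposition \ref{Prop: Poisson structure on algebra of gamma-sequences} shows that $iN[a_N^{\textsc{can}},a_N^{\prime\,\textsc{can}}]$ agrees, up to a remainder of norm $O(1/N)$, with a genuine element of $\dot{B}^\infty_\gamma$ obtained by expanding the commutator through the Leibniz rule and reducing to the single-$\gamma$-sequence case (where $iN[\overline{\gamma}^M_N a_M,\overline{\gamma}^{M'}_N a_{M'}]=\overline{\gamma}^{M+2M'}_N a_{M+2M'}$ for $N$ large); second, Equation \eqref{Eq: product of gamma sequences - estimate of remainder} guarantees that such an element differs from its own canonical representative by $O(1/N)$ in norm. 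Combining these, $\big\|Q_N(\{[a_N]_N,[a'_N]_N\}_\gamma)-iN[Q_N([a_N]_N),Q_N([a'_N]_N)]\big\|_N=O(1/N)\to 0$ as $N\to\infty$, which is precisely \eqref{Eq: Dirac-Groenewold-Rieffel condition}. The delicate point is just the bookkeeping that every passage to a canonical representative, and every Leibniz expansion of a commutator, costs only $O(1/N)$; this is exactly what Proposition \ref{Prop: the C* algebra of gamma-sequences is commutative}, Remark \ref{Rmk: estimate of commutator with the remainder; Cstar algebra of symmetric sequences; character space for algebra of gamma sequences} and the proof of Proposition \ref{Prop: Poisson structure on algebra of gamma-sequences} are set up to handle, so assembling those estimates yields (b) and completes the proof.
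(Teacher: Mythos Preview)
Your proposal is correct and follows the same overall route as the paper: well-definedness via Propositions \ref{Prop: the canonical representative of a gamma-sequence is unique}--\ref{Prop: the canonical representative of a product of gamma-sequences is unique}, condition \ref{Item: quantization maps are Hermitian and define a continuous section} via the structure of $[\dot{B}]_\gamma$, and condition \ref{Item: quantization maps are strict} by decomposing an arbitrary $a_M\in B^M_\gamma$ into its $\widetilde{B}$-irreducible components. One minor imprecision: in your argument for \ref{Item: quantization maps are strict} you cannot literally write $a_N=\sum_j I^{N-j}\otimes a'_j$ --- what Equation \eqref{Eq: all to the left-I representative} gives is an $a_N'$ with $\overline{\gamma}_N(a_N')=\overline{\gamma}_N(a_N)=a_N$, which is what you actually use.

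For condition \ref{Item: quantization maps fulfils the DGR condition} you work harder than needed. Your chain of $O(1/N)$ estimates is correct, but the paper's observation is that the DGR condition is almost tautological here: by \eqref{Eq: Poisson structure on equivalence classes of gamma-sequences} the class $\{[a_N]_N,[a'_N]_N\}_\gamma$ \emph{is} $[\,iK[a_K^{\textsc{can}},a_K^{\prime\,\textsc{can}}]\,]_K$, and $Q_K(\{[a_N]_N,[a'_N]_N\}_\gamma)$ is, by construction, another representative of that same class. Hence the two sequences are $\sim$-equivalent, which is exactly \eqref{Eq: Dirac-Groenewold-Rieffel condition}. Your explicit bookkeeping recovers the stronger statement that the difference is $O(1/N)$, but for the theorem as stated the mere $\sim$-equivalence suffices.
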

\begin{proof}
	Notice that $Q_N$ is well-defined for all $N\in\overline{\mathbb{N}}$ on account of the uniqueness of the canonical representative ---\textit{cf.} Propositions \ref{Prop: the canonical representative of a gamma-sequence is unique}-\ref{Prop: the canonical representative of a product of gamma-sequences is unique}.
	
	With reference to Section \ref{Sec: Introduction} we have
	\begin{align*}
		[B]_\gamma\leftrightarrow\prod_{N\in\overline{\mathbb{N}}}\mathcal{A}_N\,,
		\qquad
		[B]^N_\gamma\leftrightarrow\mathcal{A}_N\,,
		\qquad
		[\dot{B}]^\infty_\gamma\leftrightarrow\widetilde{\mathcal{A}}_\infty\,,
		\qquad
		[B]^\infty_\gamma\leftrightarrow\mathcal{A}_\infty\,.
	\end{align*}
	We will now prove conditions \ref{Item: quantization maps are Hermitian and define a continuous section}-\ref{Item: quantization maps fulfils the DGR condition}-\ref{Item: quantization maps are strict}.
	
	\begin{description}
		\item[$\boxed{\ref{Item: quantization maps are Hermitian and define a continuous section}}$]
		Per definition we have $Q_\infty:=\operatorname{Id}_{[\dot{B}]^\infty_\gamma}$ as well as $Q_N([a_N]_N)^*=Q_N([a_N^*]_N)$ for all $[a_N]_N\in[\dot{B}]^\infty_\gamma$.
		Moreover, Equation \eqref{Eq: quantization maps} defines an element in the space $[\dot{B}]_\gamma$ ---\textit{cf.} Equation \eqref{Eq: a posteriori continuous sections of the gamma-bundle}--- and thus a continuous section of $[B]_\gamma$ as per Proposition \ref{Prop: continuous bundle of Cstar algebra for gamma-sequences}.
		
		\item[$\boxed{\ref{Item: quantization maps fulfils the DGR condition}}$]
		By direct inspection one has
		\begin{align*}
			[iK[Q_K([a_N]_N),Q_K([a_N']_N)]]_K
			=[iK[a_K^{\textsc{can}},a_K^{\prime\textsc{can}}]]_K
			=\{[a_N]_N,[a_N']_N\}_\gamma\,,
		\end{align*}
		which implies Equation \eqref{Eq: Dirac-Groenewold-Rieffel condition}.
		Notice that, on account of Remark \ref{Rmk: canonical representative for Poisson bracket}, in general
		\begin{align*}
			Q_K(\{[a_N]_N,[a_N']_N\}_\gamma)
			\neq iK[a_K^{\textsc{can}},a_K^{\prime\textsc{can}}]\,,
		\end{align*}
		despite the fact that the equivalence classes of the associated sequences are equal.
		
		\item[$\boxed{\ref{Item: quantization maps are strict}}$]
		By direct inspection one finds that $Q_M([\dot{B}]^\infty_\gamma)=B^M_\gamma$ for all $M\in\mathbb{N}$.
		Indeed, by proceeding as in the proof of Proposition \ref{Prop: continuous bundle of Cstar algebra for gamma-sequences}, let $a_M\in B^M_\gamma$, for $M\in\mathbb{N}$.
		Then we have $a_M=\overline{\gamma}_M(a_M)=\sum_{j=0}^M\overline{\gamma}_M^ja_j$ for $a_j\in B^j_{\textsc{irr}}$ ---\textit{cf.} Equation \eqref{Eq: all to the left-I representative}.
		This implies that
		\begin{align*}
		    a_M
		    =\overline{\gamma}_M(a_M)
		    =\sum_{j=0}^M\overline{\gamma}_M^ja_j
		    =Q_M\bigg(\bigg[
		    \sum_{j=0}^M\overline{\gamma}_N^ja_j
		    \bigg]_N
		    \bigg)\,,
		\end{align*}
		 thus proving that $Q_M([\dot{B}]^\infty_\gamma)=B^M_\gamma$.
	\end{description}
\end{proof}

\appendix
\section{Characterization of $\widetilde{B}$-irreducible elements}
\label{App: characterization of tildeB irredubible elements}

This section is devoted to characterize the set $B^M_{\textsc{irr}}$ of $\widetilde{B}$-irreducible elements in $B^M$ ---\textit{cf.} Definition \ref{Def: Btilde-irreducible element}.
To this avail, we introduce the following convenient family of linear maps.
\begin{definition}\label{Def: Btildeell to BM embedding map}
	Let $M,\ell\in\mathbb{N}$, $\ell\geq 2$, and $j_1,\ldots,j_{\ell-1}\in\mathbb{N}$ such that $j_1+\ldots+j_{\ell-1}=M-\ell$.
	We denote by $\iota^{j_1\ldots j_{\ell-1}}_M\colon \widetilde{B}^\ell\to B^M$ the linear map defined by
	\begin{align}\label{Eq: Btildeell to BM embedding map}
		\iota^{j_1\ldots j_{\ell-1}}_M(\widetilde{a}_\ell)
		:=\sum_{k_1,\ldots, k_\ell}
		c^{k_1\ldots k_\ell}
		b_{k_1}\otimes I^{j_1}\otimes\ldots\otimes I^{j_{\ell-1}}\otimes b_{k_\ell}\,,
	\end{align}
	where $I,b_1,\ldots,b_{\kappa^2-1}$ is a basis of $B$ fulfilling \eqref{Eq: Mk-basis properties} and $\widetilde{a}_\ell=\sum\limits_{k_1,\ldots,k_\ell}c^{k_1\ldots k_\ell}b_{k_1}\otimes\ldots\otimes b_{k_\ell}$, the sum over $k_1,\ldots, k_\ell$ being finite.
\end{definition}

\begin{remark}
	\noindent
	\begin{enumerate}[(i)]
		\item
			If $\ell=M$ one has $\iota_M^{j_1\ldots j_M}(\widetilde{a}_M)=\widetilde{a}_M$.
			Moreover, by direct inspection $\iota_M^{j_1\ldots j_{\ell-1}}$ does not depend on the chosen basis $I,b_1,\ldots,b_{\kappa^2-1}$.
			\item
			On account of Definition \ref{Def: Btilde-irreducible element} we have $\iota_M^{j_1\ldots j_{\ell-1}}(\widetilde{B}^\ell)\subseteq B^M_{\textsc{irr}}$.
			Moreover, $\iota_M^{j_1\ldots j_{\ell-1}}$ is injective.
			Indeed, if $\iota_M^{j_1\ldots j_{\ell-1}}(\widetilde{a}_\ell)=0$ then
			for all $\eta_1,\ldots,\eta_\ell\in S(B)$ we have
			\begin{align*}
				0=[\eta_1\otimes \tau^{j_1}\otimes\ldots\otimes \tau^{j_{\ell-1}}\otimes\eta_\ell](\iota_M^{j_1\ldots j_{\ell-1}}(\widetilde{a}_\ell))
				=(\eta_1\otimes\ldots\otimes\eta_\ell)(\widetilde{a}_\ell)\,.
			\end{align*}
			The arbitrariness of $\eta_1,\ldots,\eta_\ell$ entails $\omega_\ell(\widetilde{a}_\ell)=0$ for all $\omega_\ell\in S(B^\ell)$, therefore, $\widetilde{a}_\ell=0$.
	\end{enumerate}
\end{remark}

Let $I,b_1,\ldots,b_{\kappa^2-1}$ be a basis of $B$ fulfilling \eqref{Eq: Mk-basis properties} and let $a_M\in B^M_{\textsc{irr}}$, $M\geq 2$.
By considering Equation \eqref{Eq: decomposition of BM into part with a different number of identities} for $a_M\in B^M_{\textsc{irr}}$ we find
\begin{multline}\label{Eq: Btilde irreducible element decomposition}
	a_M=\iota_M^{M-2}(\widetilde{a}_2)
	+\sum_{j_1+j_2=M-3}\iota_M^{j_1j_2}(\widetilde{a}_{3|j_1j_2})
	\\+\ldots
	+\sum_{j_1+\ldots+j_{\ell-1}=M-\ell}
	\iota_M^{j_1\ldots j_{\ell-1}}(\widetilde{a}_{\ell|j_1\ldots j_{\ell-1}})
	+\ldots
	+\widetilde{a}_M\,,
\end{multline}
where $\widetilde{a}_{\ell|j_1\ldots j_{\ell-1}}\in \widetilde{B}^\ell$ for all $\ell,j_1,\ldots,j_{\ell-1}$.

Equation \eqref{Eq: Btilde irreducible element decomposition} provides a description of $B^M_{\textsc{irr}}$ in terms of "$\widetilde{B}$-components".
To this avail we consider the vector space
\begin{align}
	\boldsymbol{\widetilde{B}}^M&:=\widetilde{B}^2
	\oplus\bigotimes_{j_1+j_2=M-3}\widetilde{B}^3\oplus\ldots
	\oplus\bigotimes_{j_1+\ldots+j_{\ell-1}=M-\ell}\widetilde{B}^\ell\oplus\ldots
	\oplus\widetilde{B}^M\,,
\end{align}
where $\boldsymbol{\widetilde{B}}^0=\mathbb{C}$ and $\boldsymbol{\widetilde{B}}^1=\widetilde{B}$.
We then define the linear map
\begin{multline}\label{Eq: Btilde irreducible element decomposition map}
	\Phi_M\colon\boldsymbol{\widetilde{B}}^M\to B^M\;
	\Phi_M(\boldsymbol{\widetilde{a}}_M):=
	\begin{dcases}
		a_0\quad M=0\\
		\widetilde{a}_1\quad M=1\\
		\iota_M^{M-2}(\widetilde{a}_2)+\ldots
		\\+\sum_{j_1+\ldots+j_{\ell-1}=M-\ell}
		\iota_M^{j_1\ldots j_{\ell-1}}(\widetilde{a}_{\ell|j_1\ldots j_{\ell-1}})
		+\ldots
		+\widetilde{a}_M\quad M\geq 2
	\end{dcases}\,,
\end{multline}
where $\boldsymbol{\widetilde{a}}_M\in\boldsymbol{\widetilde{B}}^M$ is given by
\begin{align}\label{Eq: BtildeM generic element}
	\boldsymbol{\widetilde{a}}_M
	=\begin{dcases*}
		a_0\quad M=0\\
		\widetilde{a}_1\quad M=1\\
		\widetilde{a}_2\oplus
		\bigotimes_{j_1+j_2=M-3}\widetilde{a}_{3|j_1j_2}\oplus\ldots
		\oplus\bigotimes_{j_1+\ldots +j_{\ell-1}=M-\ell}\widetilde{a}_{\ell|j_1+\ldots+j_{\ell-1}}\oplus\ldots
		\oplus\widetilde{a}_M
		\quad M\geq 2 
	\end{dcases*}\,.
\end{align}
Equation \eqref{Eq: Btilde irreducible element decomposition} can be rephrased by saying that for all $a_M\in B^M_{\textsc{irr}}$ there exists $\boldsymbol{\widetilde{a}}_M\in\boldsymbol{\widetilde{B}}^M$ such that $\Phi_M(\boldsymbol{\widetilde{a}}_M)=a_M$.
The following lemma shows that $\Phi_M$ is in fact an isomorphism, proving that $B^M_{\textsc{irr}}\simeq\boldsymbol{\widetilde{B}}^M$.

\begin{lemma}\label{Lem: PhiM is a bijection}
	For all $M\in\mathbb{N}$, the map $\Phi_M\colon\boldsymbol{\widetilde{B}}^M\to B^M_{\textsc{irr}}$ is an isomorphism.
\end{lemma}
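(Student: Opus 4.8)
The plan is to prove surjectivity and injectivity of $\Phi_M$ separately; surjectivity is essentially already recorded, so the content lies entirely in injectivity. First I would observe that each $\iota_M^{j_1\ldots j_{\ell-1}}$ takes values in $B^M_{\textsc{irr}}$ (this is immediate from Definition~\ref{Def: Btilde-irreducible element}, since $\tau$ annihilates $\widetilde{B}$), so $\Phi_M$ maps $\boldsymbol{\widetilde{B}}^M$ into $B^M_{\textsc{irr}}$, while Equation~\eqref{Eq: Btilde irreducible element decomposition} asserts precisely that every $a_M\in B^M_{\textsc{irr}}$ lies in the image. The cases $M=0$ and $M=1$ are trivial, since $\Phi_0=\operatorname{Id}_{\mathbb{C}}$ and $\Phi_1=\operatorname{Id}_{\widetilde{B}}$ with $\mathbb{C}=B^0_{\textsc{irr}}$ and $\widetilde{B}=B^1_{\textsc{irr}}$, so from now on one assumes $M\geq 2$.

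For injectivity, the organising observation is that the image of $\iota_M^{j_1\ldots j_{\ell-1}}$ consists of elements whose non-identity tensor factors sit exactly at the positions $P(j_1,\ldots,j_{\ell-1}):=\{1,\,2+j_1,\,3+j_1+j_2,\,\ldots,\,M\}\subseteq\{1,\ldots,M\}$, a set of cardinality $\ell$ which always contains $1$ and $M$, with distinct compositions of $M-\ell$ producing distinct such sets. The key computation then uses the trace state as a filter: given a subset $P\subseteq\{1,\ldots,M\}$ and states $\eta_p\in S(B)$ for $p\in P$, form the product functional $\rho_P$ on $B^{\otimes M}$ whose $p$-th factor is $\eta_p$ for $p\in P$ and $\tau$ for $p\notin P$; using $\tau|_{\widetilde{B}}=0$ and $\tau(I)=1$ one checks that $\rho_P$ annihilates $\iota_M^{j_1'\ldots j_{\ell'-1}'}(\widetilde{a}_{\ell'})$ whenever $P':=P(j_1',\ldots,j_{\ell'-1}')\not\subseteq P$, and that when $P'\subseteq P$ it returns the value of $\widetilde{a}_{\ell'}$ evaluated against the $\ell'$-tuple $(\eta_p)_{p\in P'}$.

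Granting this, assume $\Phi_M(\boldsymbol{\widetilde{a}}_M)=0$ and prove by induction on $s$, from $2$ up to $M$, that all components $\widetilde{a}_{\ell|j_1\ldots j_{\ell-1}}$ with $\ell\leq s$ vanish. At step $s$, fix any composition $(j_1,\ldots,j_{s-1})$ of $M-s$, with position set $P$ of size $s$, and apply $\rho_P$ to the identity $\Phi_M(\boldsymbol{\widetilde{a}}_M)=0$: every surviving summand has position set $P'\subseteq P$, hence $\ell'=|P'|\leq s$; by the inductive hypothesis the summands with $\ell'<s$ contribute nothing, and among those with $\ell'=s$ only $P'=P$ is a subset of $P$, so one is left with $(\eta_1\otimes\cdots\otimes\eta_s)(\widetilde{a}_{s|j_1\ldots j_{s-1}})=0$ for all $\eta_1,\ldots,\eta_s\in S(B)$. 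Since such product functionals span $(B^{\otimes s})^{*}$, this forces $\widetilde{a}_{s|j_1\ldots j_{s-1}}=0$, and ranging over all compositions of $M-s$ completes the step; at $s=M$ every component has been killed, so $\boldsymbol{\widetilde{a}}_M=0$, $\Phi_M$ is injective, and therefore an isomorphism onto $B^M_{\textsc{irr}}$.

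I expect the only genuine work to be the verification of the filtering computation — confirming that placing $\tau$ at every position outside $P$ annihilates precisely the $\iota$-terms whose non-identity support escapes $P$, while keeping straight the correspondence between compositions of $M-\ell$ and position sets — together with checking that the induction on position-set size is arranged so that exactly one term survives at each step. Beyond this bookkeeping there is no conceptual obstacle.
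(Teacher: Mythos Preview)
Your proposal is correct and follows essentially the same approach as the paper: both arguments dispose of $M\in\{0,1\}$ trivially, invoke Equation~\eqref{Eq: Btilde irreducible element decomposition} for surjectivity, and prove injectivity by applying product functionals of the form $\eta_1\otimes\tau^{j_1}\otimes\eta_2\otimes\cdots\otimes\tau^{j_{\ell-1}}\otimes\eta_\ell$ to $\Phi_M(\boldsymbol{\widetilde{a}}_M)=0$, using $\tau|_{\widetilde{B}}=0$ to isolate a single component and inducting on~$\ell$. Your formulation in terms of position sets $P$ is slightly more explicit bookkeeping, but the underlying computation is identical to the paper's.
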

\begin{proof}
	There is nothing to prove for $M\in\{0,1\}$, therefore, we assume $M\geq 2$.
	
	From equations \eqref{Eq: Btilde irreducible element decomposition}, \eqref{Eq: Btilde irreducible element decomposition map}, we have that $\Phi_M$ is linear and surjective: Thus, it remains to prove that $\Phi(\boldsymbol{\widetilde{a}}_M)=0$ implies $\boldsymbol{\widetilde{a}}_M=0$.
	We now prove that, if $\Phi(\boldsymbol{\widetilde{a}}_M)=0$, then all components of $\boldsymbol{\widetilde{a}}_M$ appearing in Equation \eqref{Eq: BtildeM generic element} vanish.
	
	To this avail let $\eta_1,\eta_2\in S(B)$.
	By direct inspection we have
	\begin{align*}
		0=(\eta_1\otimes\tau^{M-2}\otimes\eta_2)[\Phi(\boldsymbol{\widetilde{a}}_M)]
		=(\eta_1\otimes\eta_2)(\widetilde{a}_2)\,.
	\end{align*}
	Notice that no other term from $\Phi(\boldsymbol{\widetilde{a}}_M)$ provides a non-vanishing contribution because $\tau(\widetilde{B})=\{0\}$.
	The arbitrariness of $\eta_1,\eta_2\in S(B)$ leads to $\omega_2(\widetilde{a}_2)=0$ for all $\omega_2\in S(B^2)$ and thus $\widetilde{a}_2=0$.
	
	We now proceed by proving that $\widetilde{a}_{3|j_1 j_2}=0$ for all $j_1+j_2=M-3$.
	To this avail let $j_1,j_2$ be such that $j_1+j_2=M-3$ and let $\eta_1,\eta_2,\eta_3\in S(B)$.
	Since we already proved that $\widetilde{a}_2=0$ it follows that
	\begin{align*}
		0=(\eta_1\otimes\tau^{j_1}\otimes\eta_2\otimes \tau^{j_2}\otimes\eta_3)[\Phi(\boldsymbol{\widetilde{a}}_M)]
		=(\eta_1\otimes\eta_2\otimes\eta_3)(\widetilde{a}_{3|j_1j_2})\,.
	\end{align*}
	Once again, the arbitrariness of $\eta_1,\eta_2,\eta_3\in S(B)$ (as well as the one of $j_1,j_2$) leads to $\widetilde{a}_{3|j_1j_2}=0$ for all $j_1+j_2=M-3$.
	Proceeding by induction we find $\widetilde{a}_{\ell|j_1\ldots j_{\ell-1}}=0$ for all $j_1+\ldots+j_{\ell-1}=M-\ell$.
	Thus $\boldsymbol{\widetilde{a}}_M=0$.
\end{proof}

\end{document}